\def\epsdeltastuff{0}
\newcommand{\llnorm}[1]{\left\lVert#1\right\rVert_2}
\newcommand{\abs}[1]{\left| #1 \right|}
\newcommand{\chisq}[1]{\chi^2\left( #1 \right)}
\newcommand{\id}{\mathbb{I}}
\newcommand{\PPDE}{\ensuremath{\textsc{PPDE}}}
\newcommand{\PGCE}{\ensuremath{\textsc{PGCENoBound}}}
\newcommand{\PGCEKappa}{\ensuremath{\textsc{PGCE}}}
\newcommand{\NaivePCE}{\ensuremath{\textsc{NaivePCE}}}
\newcommand{\PPreCond}{\ensuremath{\textsc{WeakPPC}}}
\newcommand{\PPC}{\ensuremath{\textsc{PPC}}}
\newcommand{\PPCRange}{\ensuremath{\textsc{PPCRange}}}
\newcommand{\PEstimateTrace}{\ensuremath{\textsc{PEstimateTrace}}}
\newcommand{\PPreCondRange}{\ensuremath{\textsc{WeakPPCNoBound}}}
\title{Privately Learning High-Dimensional Distributions}
\date{}
\author{
Gautam Kamath\thanks{Cheriton School of Computer Science, University of Waterloo. Work done when a graduate student at MIT, supported by NSF Award CCF-1617730, CCF-1650733, CCF-1741137, and ONR N00014-12-1-0999, and when a Microsoft Research Fellow, as part of the Simons-Berkeley Research Fellowship program. \href{mailto:g@csail.mit.edu}{\texttt{g@csail.mit.edu}}} \and 
Jerry Li\thanks{Microsoft Research AI. Work done when a graduate student at MIT, supported by NSF Award CCF-1453261 (CAREER), CCF-1565235, a Google Faculty Research Award, and an NSF Graduate Research Fellowship, and when a VMware Research Fellow, as part of the Simons-Berkeley Research Fellowship program. Part of this work was also performed while the author was an intern at Google. \href{mailto:jerrl@microsoft.com}{\texttt{jerrl@microsoft.com}}} \and 
Vikrant Singhal\thanks{Khoury College of Computer Sciences, Northeastern University. Supported by NSF award CCF-1750640.  \href{mailto:singhal.vi@husky.neu.edu}{\texttt{singhal.vi@husky.neu.edu}}} \and 
Jonathan Ullman\thanks{Khoury College of Computer Sciences, Northeastern University.  Supported by NSF awards CCF-1750640 and NSF awards CCF-1718088 and CNS-1816028, and a Google Faculty Research Award. \href{mailto:jullman@ccs.neu.edu}{\texttt{jullman@ccs.neu.edu}}}
}
\begin{document}

\maketitle

\pagenumbering{gobble}

\begin{abstract}
We present novel, computationally efficient, and differentially private algorithms for two fundamental high-dimensional learning problems: learning a multivariate Gaussian and learning a product distribution over the Boolean hypercube in total variation distance.  The sample complexity of our algorithms nearly matches the sample complexity of the optimal non-private learners for these tasks in a wide range of parameters, showing that privacy comes essentially \emph{for free} for these problems.  In particular, in contrast to previous approaches, our algorithm for learning Gaussians does not require strong \emph{a priori} bounds on the range of the parameters.  Our algorithms introduce a novel technical approach to reducing the sensitivity of the estimation procedure that we call \emph{recursive private preconditioning}.
\end{abstract}

\vfill
\newpage

\setcounter{tocdepth}{3}
\tableofcontents

\vfill
\newpage

\pagenumbering{arabic}
\section{Introduction}
\vspace{-6pt}
A central problem in machine learning and statistics is to learn (estimate) the parameters of an unknown distribution using samples.  However, in many applications, these samples consist of highly sensitive information belonging to individuals, and the output of the learning algorithm may inadvertently reveal this information.  While releasing only the estimated parameters of a distribution may seem harmless, when there are enough parameters---that is, when the data is \emph{high-dimensional}---these statistics can reveal a lot of individual-specific information (see e.g.~\cite{DinurN03,HomerSRDTMPSNC08,BunUV14,DworkSSUV15, ShokriSSS17}, and the survey~\cite{DworkSSU17}).  For example, the influential attack of Homer \etal~\cite{HomerSRDTMPSNC08} showed how to use very simple statistical information released in the course of genome wide association studies to detect the presence of individuals in those studies, which implies these individuals have a particular medical condition.  Thus it is crucial to design learning algorithms that ensure the privacy of the individuals in the dataset.

The most widely accepted solution to this problem is \emph{differential privacy}~\cite{DworkMNS06}, which provides a strong individual privacy guarantee by ensuring that no individual sample has a significant influence on the learned parameters.  A large body of literature now shows how to implement nearly every statistical algorithm privately, and differential privacy is now being deployed by Apple~\cite{AppleDP17}, Google~\cite{ErlingssonPK14}, and the US Census Bureau~\cite{DajaniLSKRMGDGKKLSSVA17}.

Differential privacy is typically achieved by adding noise to some non-private estimator, where the magnitude of the noise is calibrated to mask the effect of the single sample.  However, straightforward methods for adding this noise require strong \emph{a priori} bounds on the distribution to provide meaningful accuracy guarantees.  

\begin{quotation}
\noindent{\em Example:} Suppose we want to estimate the mean $\mu \in (-R,R)$ of a Gaussian random variable with known variance $1$, using the empirical mean of the data.  The empirical mean itself has variance $1/n$.  However, the na\"ive strategy for adding noise to the empirical mean would increase the variance by $O(R^2/n^2)$.  Thus, the variance of the na\"ive private algorithm dominates the variance in the data unless we have $\Omega(R^2)$ samples, forcing the user to have strong \emph{a priori} knowledge of the mean.
\end{quotation}

This problem is pervasive in applications of differential privacy, and considerable effort has been made to cope with this need for the parameters to lie in a small \emph{range}, including multiple systems that have been built to help elicit this information from the user~\cite{MohanTSSC12,GaboardiHKMNUV16}.  

When the distribution is \emph{low-dimensional}, there are many more effective algorithms that avoid the polynomial dependence on the size of the range.  For example, in the setting above of estimating a single univariate Gaussian, Karwa and Vadhan~\cite{KarwaV18} showed how to estimate the mean with \emph{essentially no} dependence on $R$.  More generally, there are also a plethora of general algorithmic techniques that can be used to address this general problem, such as \emph{smooth sensitivity}~\cite{NissimRS07}, \emph{subsample-and-aggregate}~\cite{NissimRS07,Smith11}, \emph{propose-test-release}~\cite{DworkL09}.  Unfortunately, none of these methods extend well to \emph{high-dimensional} problems.  When they exist, natural extensions either incur a costly dependence on the dimension in the sample complexity, or have running time exponential in the dimension.  

In this work we show how to privately learn two fundamental families of high-dimensional distributions with comparable costs to the corresponding optimal non-private learning algorithms:
\begin{itemize}
\item We give a computationally efficient algorithm for learning a multivariate Gaussian with unknown mean and covariance in total variation distance.  This algorithm requires only weak \emph{a priori} bounds on the mean and covariance, and in a wide range of parameters its sample complexity matches the optimal non-private algorithm up to lower-order terms.
\item We give a computationally efficient algorithm for learning a product distribution over the Boolean hypercube in total variation distance, which requires adding noise to each coordinate proportional to its variance, despite not knowing the variance \emph{a priori}.  Again, for many parameter regimes, the sample complexity of this algorithm is similar to that of the optimal non-private algorithm.
\end{itemize} 
Our results show that it is possible to obtain privacy nearly \emph{for free} when learning these important classes of high-dimensional distribution.  We obtain these results using a novel approach to reduce the sensitivity of the estimation procedure, which we call \emph{private recursive preconditioning}.

\subsection{Our Results}
\subsubsection{Privately Learning Gaussians} The most fundamental class of high-dimensional distributions is the \emph{multivariate Gaussian} in $\R^d$.  Our first result is an algorithm that takes samples from a distribution $\cN( \mu, \Sigma )$ with unknown mean $\mu \in \R^d$ and covariance $\Sigma \in \R^{d\times d}$ and estimates parameters $\wh\mu, \wh\Sigma$ such that $\cN(\wh{\mu}, \wh{\Sigma})$ is close to the true distribution in total variation distance (TV distance).  Without privacy, $n = \Theta(\frac{d^2}{\alpha^2})$ samples suffice to guarantee total variation distance at most $\alpha$ (this is folklore, but see e.g.~\cite{DiakonikolasKKLMS16}).

Despite the simplicity of this problem, it was only recently that Karwa and Vadhan~\cite{KarwaV18} gave an optimal algorithm for learning a univariate Gaussian.  Specifically, they showed that just $n = \tilde{O}(\frac{1}{\alpha^2} + \frac{1}{\alpha \eps} + \frac{\log(R\log \kappa)}{\eps})$ samples are sufficient to learn a univariate Gaussian $\normal(\mu, \sigma^2)$ with $|\mu| \leq R$ and $1\leq \sigma^2 \leq \kappa$, up to $\alpha$ in total variation distance subject to $\eps$-differential privacy.   In contrast to na\"ive approaches, their result has two important features: (1) The sample complexity has only mild dependence on the range parameters $R$ and $\kappa$, and (2) the sample complexity is only larger than that of the non-private estimator by a small multiplicative factor and an additive factor that is a lower order term for a wide range of parameters.  When the covariance is unknown, a na\"ive application of their algorithm would preserve neither of these features.

We show that it is possible to privately estimate a multivariate normal while preserving both of these features.  Our algorithms satisfy the strong notion of \emph{concentrated differential privacy (zCDP)}~\cite{DworkR16,BunS16}, which is formally defined in Section~\ref{sec:dp}.  To avoid confusion we remark that these definitions are on different scales so that $\frac{\eps^2}{2}$-zCDP is comparable to $\eps$-DP and $(\eps,\delta)$-DP.

\begin{thm}[Gaussian Estimation] \label{thm:maingaussian}
There is a polynomial time $\frac{\eps^2}{2}$-zCDP algorithm that takes
$$
n = \tilde{O}\left( \frac{d^2}{\alpha^2} + \frac{d^2}{\alpha \eps} + \frac{d^{3/2} \log^{1/2} \kappa + d^{1/2} \log^{1/2} R}{\eps} \right)
$$
samples from a Gaussian $\cN(\mu,\Sigma)$ with unknown mean $\mu \in \R^d$ such that $\|\mu\|_2 \leq R$ and unknown covariance $\Sigma \in \R^{d \times d}$ such that $\id \preceq \Sigma \preceq \kappa \id$, and outputs estimates $\wh\mu, \wh\Sigma$ such that, with high probability $\SD(\normal(\mu, \Sigma), \normal(\wh\mu,\wh\Sigma)) \leq \alpha$.  Here, $\tilde{O}(\cdot)$ hides polylogarithmic factors of $d, \frac{1}{\alpha}, \frac{1}{\eps}, \log \kappa,$ and $\log R$.  The same algorithm satisfies $(\eps \sqrt{\log(1/\delta)},\delta)$-differential privacy for every $\delta > 0$.
\end{thm}

Theorem~\ref{thm:maingaussian} will follow by combining Theorem~\ref{thm:pce-kappa} for covariance estimation with Theorem~\ref{thm:meanfinal} for mean estimation.  Observe that, since the sample complexity without privacy is $\Theta(\frac{d^2}{\alpha^2})$, Theorem~\ref{thm:maingaussian} shows that privacy comes almost for free unless $\frac{1}{\eps}, \kappa$, or $R$ are quite large.

The main difficulty that arises when trying to extend the results of~\cite{KarwaV18} to the multivariate case is that the covariance matrix of the Gaussian might be almost completely \emph{unknown}.  The main technically novel part of our algorithm is a method for learning a matrix $A$ approximating the inverse of the covariance matrix so that $\id \preceq A \Sigma A \preceq 1000 \id$.  This matrix can be used to transform the Gaussian to be nearly spherical, making it possible to apply the methods of~\cite{KarwaV18}.

\begin{thm}[Private Preconditioning] \label{thm:maincovariance}
There is an $\frac{\eps^2}{2}$-zCDP algorithm that takes
$$
n = \tilde{O}\left(\frac{d^{3/2} \log^{1/2} \kappa}{\eps}\right) 
$$
samples from an unknown Gaussian $\cN(0,\Sigma)$ over $\R^d$ with $\Sigma \in \R^{d \times d}$ such that $\id \preceq \Sigma \preceq \kappa \id$, and outputs a symmetric matrix $A$ such that $\id \preceq A \Sigma A \preceq 1000 \id$.  Here, $\tilde{O}(\cdot)$ hides polylogarithmic factors of $d, \frac{1}{\eps}$, and $\log \kappa$.
\end{thm}

We describe and analyze our algorithm for private covariance estimation in Section~\ref{sec:gaussian-cov}, and in Section~\ref{sec:gaussian-mean} we combine it with the algorithms of~\cite{KarwaV18} to obtain Theorem~\ref{thm:maingaussian}.

\ifnum\epsdeltastuff=1
\medskip
\emph{Learning Unbounded Gaussians.}
Theorem~\ref{thm:maingaussian} and Theorem~\ref{thm:maincovariance} have only a polylogarithmic dependence on $R,\kappa$, which is necessary for any algorithm satisfying zCDP.  For univariate Gaussians,~\cite{KarwaV18} also showed how to entirely remove the dependence on these parameters subject to $(\eps,\delta)$-DP.  Specifically, they show that for any Gaussian $\normal(\mu, \sigma^2)$, it is possible to learn up to distance $\alpha$ with $(\eps,\delta)$-DP using just $n = \tilde{O}(\frac{1}{\alpha^2} + \frac{1}{\alpha \eps} + \frac{\log(1/\delta)}{\eps})$ samples.  As before, when the covariance is known, we can relatively easily extend this result to the multivariate case.  Our next result shows how to learn unbounded multivariate Gaussians with unknown covariance.

\begin{thm}\label{thm:maingaussianunbounded}
There is a polynomial time $(\eps,\delta)$-DP algorithm that takes
$$
n = \tilde{O}\left( \frac{d^2}{\alpha^2} + \frac{d^2 \sqrt{\log(1/\delta)}}{\alpha \eps}\right)
$$
samples from a Gaussian $\cN(\mu,\Sigma)$ with unknown mean $\mu \in \R^d$ and covariance $\Sigma \in \R^{d \times d}$ with $\Sigma \succeq \id$, and outputs $\wh\mu, \wh\Sigma$ such that, with high probability $\SD(\normal(\mu, \Sigma), \normal(\wh\mu,\wh\Sigma)) \leq \alpha$.  Here, $\tilde{O}(\cdot)$ hides polylog factors of $d, \frac{1}{\alpha}, \frac{1}{\eps}$, and $\log(\frac{1}{\delta})$.
\end{thm}

Theorem~\ref{thm:maingaussianunbounded} will follow by combining Theorem~\ref{thm:pce-nokappa} for covariance estimation with Theorem~\ref{thm:meanfinal} for mean estimation.  The main technical ingredient is an analogue of the recursive private preconditioner of Theorem~\ref{thm:maincovariance} that allows us to take a Gaussian $\normal(0,\Sigma)$ with unbounded condition number and find a preconditioning matrix $A$ such that $\id \preceq A \Sigma A \preceq O(d^4) \id$.  Once we have such a preconditioner, we can apply Theorem~\ref{thm:maingaussian}.  We remark that the assumption $\Sigma \succeq \id$ in Theorem~\ref{thm:maingaussianunbounded} is arbitrary, and can be replaced with the assumption $\Sigma \succeq \gamma \id$ for any $\gamma > 0$.  We do however require some lower bound on the smallest eigenvalue of $\Sigma$.
\fi

\subsubsection{Privately Learning Product Distributions}  The simplest family of high-dimensional discrete distributions are product distributions over $\zo^{d}$.  Without privacy, $\Theta(\frac{d}{\alpha^2})$ are necessary and sufficient to learn up to $\alpha$ in total variation distance.  The standard approach to achieving DP by perturbing each coordinate independently requires $\tilde{\Theta}(\frac{d}{\alpha^2} + \frac{d^{3/2}}{\alpha \eps})$ samples.  We give an improved algorithm for this problem that avoids this blowup in sample complexity.  While our algorithm for learning product distributions is quite different to our algorithm for estimating Gaussian covariance, it uses a similar recursive preconditioning technique, highlighting the versatility of this approach.

\begin{thm} \label{thm:mainproduct}
There is a polynomial time $\frac{\eps^2}{2}$-zCDP algorithm that takes
$$
n = \tilde{O}\left(\frac{d}{\alpha^2} + \frac{d}{\alpha \eps}\right)
$$
samples from an unknown product distribution $\cP$ over $\zo^{d}$ and outputs a product distribution $\cQ$ such that, with high probability, $\SD(\cP,\cQ) \leq \alpha$.  Here, $\tilde{O}(\cdot)$ hides polylogarithmic factors of $d, \frac{1}{\alpha},$ and $\frac{1}{\eps}$.
The same algorithm satisfies $(\eps \sqrt{\log(1/\delta)},\delta)$-differential privacy for every $\delta > 0$.
\end{thm}

We describe and analyze our algorithm in Section~\ref{sec:product}.  

\subsubsection{Lower Bounds}
We prove lower bounds for the problems we consider in this paper, demonstrating that for many problems, our sample complexity is optimal up to polylogarithmic factors.
One example statement is the following lower bound for private mean estimation of a product distribution, for the more permissive notion of $(\eps,\delta)$-differential privacy (compared to our upper bounds, which are in terms of zCDP):

\begin{thm} \label{thm:mainproductlowerbound}
Any $(\eps, \frac{1}{64n})$-differentially private algorithm that takes samples from an arbitrary unknown product distribution $\cP$ over $\zo^{d}$ and outputs a product distribution $\cQ$ such that $\SD(\cP,\cQ) \leq \alpha$ with probability $\geq 9/10$ requires $n = \Omega(\frac{d}{\alpha^2} + \frac{d}{\alpha \eps \log d})$ samples.
\end{thm}
We also prove a qualitatively similar lower bound for privately estimating the mean of a Gaussian distribution.

In addition, we prove lower bounds for privately estimating a Gaussian with unknown covariance.
These are qualitatively weaker, as they are only for $\eps$-differential privacy, and we consider it an interesting open question to prove lower bounds for covariance estimation under concentrated or approximate differential privacy.
\begin{thm}
Any $\eps$-differentially private algorithm that takes samples from an arbitrary unknown Gaussian distribution $\cP$ and outputs a Gaussian distribution $\cQ$ such that $\SD(\cP, \cQ) \leq \alpha$ with probability $\geq 9/10$ requires $n = \Omega(\frac{d^2}{\alpha^2} + \frac{d^2}{\alpha \eps})$ samples.
\end{thm}

The last question to address is the dependence on the parameters $R$ and $\kappa$.
It is well-known that, under zCDP, the existence of a $0.1$-packing (in total variation distance) $\mathcal{P}$ results in a lower bound of $n = \Omega(\frac{1}{\eps} \log^{1/2}|\mathcal{P}|)$~\cite{HardtT10, BeimelBKN14,BunS16}.
Since, for a set of identity covariance Gaussians, there exists such a packing of size $R^{\Omega(d)}$, this implies that the dependence of Theorem~\ref{thm:maingaussian} on $R$ is optimal up to logarithmic factors.
As for the dependence on $\kappa$, it can be shown that for $d = 2$, there exists a packing of zero-mean Gaussians of size $\poly(\kappa)$~\cite{Kamath19}, in contrast to $\poly \log \kappa$ in one dimension.
That is, though our algorithm's dependence on $\kappa$ is exponentially greater than that of Karwa and Vadhan~\cite{KarwaV18}, this is necessary for any $d \geq 2$.

All our lower bounds are presented in Section~\ref{sec:lb}.

\subsubsection{Comparison to Lower Bounds for High-Dimensional DP}
Readers familiar with differential privacy may wonder why our results do not contradict known lower bounds for high-dimensional estimation in differential privacy~\cite{BunUV14,SteinkeU15,DworkSSUV15}.  The two key differences are (1) lower bounds showing that privacy is costly are for the relatively weak $\ell_\infty$ estimation guarantee whereas we want a rather stringent estimation guarantee, and (2) we exploit the structure of Gaussians and product distributions to obtain guarantees that are not possible for arbitrary distributions.

To understand the first issue, most lower bounds in differential privacy apply to estimating the mean of the distribution up to $\alpha$ in $\ell_\infty$ distance.  This guarantee can be achieved with $\Theta(\frac{\log d}{\alpha^2})$ samples non-privately but requires $\Theta(\frac{\log d}{\alpha^2} + \frac{\sqrt{d}}{\alpha})$ samples with differential privacy.  Thus, for $\ell_\infty$ estimation, privacy is costly in high dimensions.  However, if we consider the more stringent $\ell_2$ metric, then the cost of privacy goes away, and $\Theta(\frac{d}{\alpha^2})$ samples are sufficient with or without privacy.\footnote{One way to see this is that estimation up to $\alpha/\sqrt{d}$ in $\ell_\infty$ implies estimation up to $\alpha$ in $\ell_2$, so the non-private term and the private term in the $\ell_\infty$ bounds have roughly the same dependence on the dimension in this case.}  Thus, for this stronger guarantee, privacy is not costly in high-dimensions.  This phenomenon is fairly general, and is not specific to Gaussians or product distributions.

To understand the second issue, in order to learn Gaussians or product distributions in total variation distance, we need to learn in metrics that are related to $\ell_2$, but take into account the variance of the distribution.  In the case of Gaussians we learn in the Mahalanobis distance $\| \cdot \|_{\Sigma}$ and for product distributions our guarantees are closely related to $\chi^2$-divergence.  For example, Theorem~\ref{thm:maingaussian} can actually be rephrased as saying that, for our algorithm,
$$
n = \tilde{O}\left(\frac{d^{3/2} \log^{1/2} \kappa}{\eps}\right) \Longrightarrow \| \Sigma - \wh \Sigma \|_{\Sigma} = O\left( \sqrt{\frac{d^2}{n}} + \frac{d^2}{\eps n} \right).
$$
This sort of guarantee where the error in the $\Sigma$-norm does not depend on the range parameter $\kappa$ cannot be achieved for arbitrary distributions, and thus for this part of the guarantee we crucially use the fact that the data is i.i.d.\ from a Gaussian.  A similar phenomenon arises for product distributions, where, as we show, learning the mean of a product distribution in the right metric can be done with $\tilde{O}(d)$ samples for product distributions but would require $\Omega(d^{3/2})$ samples for arbitrary distributions.

\subsection{Techniques}

\subsubsection{Privately Learning Gaussians}  

We now give an overview of the main ideas that go into estimating multivariate Gaussians (Theorem~\ref{thm:maingaussian}).  We make two simplifications to ease the presentation.  First, we assume the distribution has mean zero, and focus only on covariance estimation.  Second, we elide the accuracy and privacy parameters, $\alpha$ and $\eps$, since they do not play a central role in the discussion.

Suppose we are given samples $X_1,\dots,X_n \sim \normal(0,\Sigma)$ and want to output $\wh\Sigma$ such that $\normal(0,\wh\Sigma)$ is close to the true distribution in TV distance.  More precisely, we want to guarantee
\begin{equation} \label{eq:sigmaconvergence}
\| \wh\Sigma - \Sigma \|_{\Sigma} = \| \Sigma^{-1/2} \wh\Sigma \Sigma^{-1/2} - \id \|_{F} \lesssim \frac{1}{100},
\end{equation}
which implies closeness in TV distance.  Then the standard solution is to use the empirical covariance $\wt\Sigma = \frac{1}{n} \sum_i X_i X_i^T$, guaranteeing $\| \wt\Sigma  - \Sigma \|_{\Sigma} \lesssim \sqrt{d^2/n}$, satisfying \eqref{eq:sigmaconvergence} when $n \gtrsim d^2$.

\medskip\textbf{First Attempt: The Gaussian Mechanism.}
The standard way to privately estimate a function $f$, is to use the \emph{Gaussian mechanism}.  In this case we are interested in the matrix-valued empirical covariance $\wt\Sigma(X_1,\dots,X_n) = \frac{1}{n} \sum_{i} X_i X_i^T$, so the Gaussian mechanism would output
$$
\wh{\Sigma} = \wt\Sigma(X_1,\dots,X_n) + Z
~~\textrm{where $Z \sim \normal(0, O(\Delta_{\wt\Sigma}^2))^{d \times d}$}$$
is a random Gaussian \emph{noise matrix} and $\Delta_{\wt\Sigma} = \max_{X \sim X'} \| \wt\Sigma(X) - \wt\Sigma(X') \|_{F}$ where $X \sim X'$ denotes that $X,X'$ differ on at most one sample and $\Delta_{\wt\Sigma}$ is called the \emph{global sensitivity}.  Note that it is only a coincidence that the true distribution and the noise distribution are both Gaussian.

Unfortunately, the empirical covariance has \emph{infinite sensitivity}, since $X_i$ is an arbitrary vector, and thus changing a single sample can change the empirical covariance arbitrarily.  The simplest way to address this problem is to assume that we have some \emph{prior information} about $\Sigma$, namely that $\id \preceq \Sigma \preceq \kappa \id$.  In this case we can \emph{clamp} every sample so that $\| X_i \|_2^2 \leq  \tilde{O}(\kappa d)$.  Once we do this, the sensitivity of the clamped empirical covariance is $\tilde{O}(\kappa d / n)$.  However, if the data really came from a Gaussian, then the clamping will not have any effect, as long as there are no significant outliers.  Thus, when we apply the Gaussian mechanism we obtain the guarantee
\begin{equation} \label{eq:analyzegauss}
\left\| \wh\Sigma - \Sigma \right\|_{\Sigma} \leq \left\| \wh{\Sigma} - \Sigma \right\|_{F} = \tilde{O}\left(\sqrt{\frac{d^2}{n}} +  \frac{\kappa d^2}{n}\right),
\end{equation}
where the first inequality follows from the assumption that $\Sigma \succeq \id$ and the second uses a standard bound on the Frobenius norm of a random Gaussian matrix.  Unfortunately,~\eqref{eq:analyzegauss} has a linear dependence on $\kappa$, meaning the number of samples has to be at least $\Omega(\kappa d^2)$ to achieve~\eqref{eq:sigmaconvergence}, and this analysis is tight when, say, $\Sigma = \id$.

Before moving on, we highlight the fact that, due to truncation, this algorithm ensures privacy \emph{for any dataset}, even one not drawn from a Gaussian.  This is a critical feature in private estimation that our algorithms will preserve.

\medskip\textbf{Recursive Private Preconditioning.}
Looking at~\eqref{eq:analyzegauss}, we can see that the Gaussian mechanism actually has excellent accuracy when $\kappa = O(1)$, specifically the term corresponding to privacy vanishes faster than the term corresponding to sampling error.  Thus, our approach to improve over the Gaussian mechanism is to private find a symmetric \emph{preconditioner} $A$ so that $\id \preceq A \Sigma A \preceq O(1) \id$.  Given such a matrix, we can apply the Gaussian mechanism to the data $AX_1,\dots,AX_n \sim \normal(0, A \Sigma A)$, and learn this transformed distribution using just $\tilde{O}(d^2)$ samples.  Of course, to be useful, we need to find the private preconditioner using $\tilde{O}(d^2)$ samples.

In order to obtain such a matrix $A$, we essentially need a good multiplicative estimate of the covariance matrix $\Sigma$ along every direction.  However, the Gaussian mechanism makes this difficult because it adds an i.i.d.\ Gaussian matrix $Z$, which ignores the shape of $\Sigma$.  Specifically, if we use the Gaussian mechanism and add the noise matrix $Z$, then unless we draw $\Omega(\kappa d^{3/2})$ samples, the directions of low variance will be completely overwhelmed by noise.

Our main observation is that when we use the Gaussian mechanism, even with just $n = \tilde{O}(d^{3/2})$ samples, we can still obtain a good enough estimate of $\Sigma$ to make some progress.  Specifically, we can find a matrix $A$ such that $\id \preceq A \Sigma A \preceq \frac{7}{10} \kappa \id$.  Given such a procedure, we can iterate $O(\log \kappa)$ times until the condition number is a constant.

To see how we make progress, we argue that if we draw just $\tilde{O}(d^{3/2})$ samples, and compute the noisy empirical covariance $\wh\Sigma = \wt\Sigma + Z$, then the directions of $\wt\Sigma$ with \emph{large variance} are approximately preserved, even though the directions with \emph{small variance} will be overwhelmed by noise.  We can leverage this fact in the following way.  Suppose we see a direction of $\wh\Sigma$ with variance at least $\kappa/2$.  Then this direction cannot only appear to have large variance due to noise, it must also be large in $\Sigma$, meaning we have a good multiplicative approximation.  On the other hand, suppose we see a direction of $\wh\Sigma$ with variance at most $\kappa/2$.  Then this direction may have almost no variance in $\Sigma$, but it cannot possibly have variance much larger than $\kappa/2$.  Thus, we have discovered that this direction has less variance than our bound $\kappa$, meaning we can clamp the samples more aggressively in that direction to reduce the sensitivity of the estimator!  More precisely, we compute the eigenvectors and eigenvalues of $\wh\Sigma$ and let $A$ be the matrix that partially projects out the eigenvectors with large eigenvalues.  We can show that this matrix reduces the maximum variance by more than it reduces the minimum variance, thus (after some rescaling) $\id \preceq A \Sigma A \preceq \frac{7}{10} \kappa \id$, as desired.  

\medskip\textbf{Connection to Average Sensitivity.}
At a high level, the problem with the Gaussian mechanism is that it adds noise proportional to the \emph{global sensitivity}, which is the maximum that changing one sample can change the estimate in the worst case.  Intuitively, we'd like to add noise proportional to the \emph{average sensitivity}, which is the amount that replacing one sample from the true distribution with an independent random sample from the true distribution changes the estimate.  Simply adding noise proportional to average sensitivity is not private, however there are several techniques (e.g.~\cite{NissimRS07, DworkL09}) that make it possible to add noise roughly proportional to average sensitivity for \emph{univariate} statistics.  Unfortunately, these techniques either do not apply, or are computationally inefficient for \emph{multivariate} statistics.  Our recursive private preconditioning technique can be viewed as a new technique for adding noise proportional to average sensitivity that is computationally efficient in high dimensions.  Currently the technique is specific to Gaussian (or subgaussian) covariance estimation, but it would be interesting to understand how generally this technique can be applied.

\subsubsection{Privately Learning Product Distributions}   Although learning Boolean product distributions is quite different from learning Gaussians, our algorithm uses a similar approach of recursively reducing the sensitivity of the natural estimator.  Suppose we have a product distribution $\cP$ over $\zo^d$ with mean $p = \ex{}{\cP}$ and want to output a product distribution $\wh\cP$ with mean $\wh p$ that is close in TV distance.  Bounding the TV distance between $\cP,\wh\cP$ requires some care, so to simplify this high-level discussion, we assume that $p \succeq 1/d$, in which case
\begin{equation} \label{eq:productconvergence}
\SD(\cP, \wh\cP) \leq O\left( \sqrt{\sum_j \frac{(p_j - \wh p_j)^2}{p_j}} \right)
\end{equation}
is a reasonably tight bound.  Without privacy, it would suffice to draw samples $X_1,\dots,X_n \sim \cP$ and compute the empirical mean $\wt p = \frac{1}{n} \sum_{i} X_i$ and output the corresponding product distribution, ensuring $\SD(\cP,\wh\cP) \lesssim \sqrt{d/n}$.

\medskip\textbf{First Attempt: The Gaussian Mechanism.}  As with Gaussians, the natural approach is to use the Gaussian mechanism, and compute $\wh p = \wt p + Z$ where $Z \sim \normal(0, (\sqrt{d}/n)^2 )^d$ is a vector with i.i.d.\ Gaussian entries and $\sqrt{d}/n$ is the global sensitivity of the empirical mean.  The problem with this mechanism is that if the mean of $\cP$ is roughly $1/d$ in every coordinate, then we cannot upper bound~\eqref{eq:productconvergence} unless $\|Z\|_2 \lesssim 1/\sqrt{d}$, which requires $n = \Omega(d^{3/2})$.


\medskip\textbf{Recursive Private Preconditioning.}
The key to improving the sample complexity is to recognize that the sensitivity analysis and the accuracy analysis cannot both be tight at the same time.  That is, if $p = (\frac12,\dots,\frac12)$ then it suffices to have $\| Z \|_2 \lesssim 1$, which requires only $n = O(d)$ samples.  On the other hand, if $p = (\frac1d,\dots,\frac1d)$, then we really need $\| Z \|_2 \lesssim 1/\sqrt{d}$.  However, in this case each sample from $\cP$ will satisfy $\| X_i \|_2 = O(\log d)$, so we reduce the sensitivity to just $O(\log d / n)$ by clamping  the samples to have this norm, in which case we can obtain $\|Z\|_2 = 1/\sqrt{d}$ using just $n = \tilde{O}(d)$ samples.

The challenge is that $p$ may have some coordinates that are roughly balanced and some that are very biased.  If we could partition the coordinates into groups based on their bias, then we could apply an argument like the above on each group separately, but the challenge is to do this partitioning privately.  

Similar to what we did for Gaussians, we can achieve this partitioning by starting with the Gaussian mechanism.  Suppose we use the Gaussian mechanism to obtain $\wh p = \wt p + Z$.  Then if we draw $n = \tilde{O}(d)$ samples, we will have $\| Z \|_\infty = \tilde{O}(1/\sqrt{d})$.  Now, consider two cases:  for coordinates $j$ such that $\wh p_j \geq 1/4$, then we know that $\wt p_j$ is at least, say, $1/8$, so we have
$
(\wh p_j - \wt p_j)^2 / \wt p_j = O(1/d)
$
which is a good enough estimate for that coordinate.  Thus, we can lock in our estimate of these coordinates and move on.  Now, the coordinates $j$ we have left satisfy $\wh p_j \leq 1/4$, and thus we know $\wt p_j$ is at most, say, $3/8$.  Thus, if we restrict the distribution to just these coordinates, then we can clamp the norm of the samples and estimate again using less noise.  Every time we iterate this process, we can reduce the upper bound on the bias by a constant factor until we get down to the case where all coordinates have bias at most $O(1/d)$, which we can handle separately.  Iterating this approach $O(\log d)$ times requires us to add up estimation error and privacy loss across the different rounds, but this only incurs additional polylogarithmic factors.

\subsection{Additional Related Work}


\medskip\emph{Differentially Private Learning and Statistics.}  The most directly comparable papers to ours are recent results on learning \emph{low-dimensional} statistics.  In addition to the aforementioned work of Karwa and Vadhan~\cite{KarwaV18}, Bun, Nissim, Stemmer, and Vadhan~\cite{BunNSV15} showed how to private learn an arbitrary distribution in Kolmogorov distance, which is weaker than TV distance, with almost no increase in sample complexity.  Diakonikolas, Hardt, and Schmidt~\cite{DiakonikolasHS15} extended that work to give a practical algorithm for learning structured one-dimensional distributions in TV distance.

An elegant work of Smith~\cite{Smith11} showed how to estimate arbitrary \emph{asymptotically normal} statistics with only a small increase in sample complexity compared to non-private estimation.  Technically, this work doesn't apply to covariance estimation or estimating sparse product distributions, for which the asymptotic distribution is not normal.  More fundamentally, this algorithm learns a high-dimensional distribution one coordinate at a time, which is quite costly for the distributions we consider here.

Subsequent to our work, Cai, Wang, and Zhang~\cite{CaiWZ19} studied mean and covariance estimation of subgaussian distributions (as well as sparse mean estimation, which we don't consider in this work) subject to differential privacy, but in a setting with strong \emph{a priori} bounds on the parameters.  In particular, they prove a lower bound for mean estimation of subgaussian distributions that is incomparable to our Theorems~\ref{thm:product-lb} and~\ref{thm:guassian-lb}.  It is quantitatively larger by a factor of $\log^{1/2}(1/\delta)$, but it only holds for the more general class of subgaussian non-product distributions.  In particular they use a reduction from~\cite{SteinkeU17a} to boost one of Theorem~\ref{thm:product-lb} or~\ref{thm:guassian-lb} in a way that fails to preserve the property of being Gaussian or being a product distribution.


\medskip
\emph{Covariance Estimation.}  For covariance estimation, the works closest to ours are that of Dwork, Talwar, Thakurta, and Zhang~\cite{DworkTTZ14} and Sheffet~\cite{Sheffet17}.  Their algorithms require that the norm of the data be bounded, and the sample complexity depends polynomially on this bound.  In contrast, our algorithms have either mild or no dependence on the norm of the data.

\medskip
\emph{Robust Statistical Estimation on High-Dimensional Data.} 
Recently, there has been significant interest in the computer science community in robustly estimating distributions~\cite{DiakonikolasKKLMS16, LaiRV16, CharikarSV17, DiakonikolasKKLMS17, DiakonikolasKKLMS18, SteinhardtCV18}, where the goal is to estimate some distribution from samples even when a constant fraction of the samples may be corrupted by an adversary.  As observed by Dwork and Lei~\cite{DworkL09}, differentially private estimation and robust estimation both seek to minimize the influence of outliers, and thus there is a natural conceptual connection between these two problems.  Technically, the two problems are incomparable. Differential privacy seeks to limit the influence of outliers in a very strong sense, and without making any assumptions on the data, but only when up to $O(1/\eps)$ samples are corrupted.  In contrast, robust estimation limits the influence of outliers in a weaker sense, and only when the remaining samples are chosen from a nice distribution, but tolerates up to $\Omega(n)$ corruptions.

\medskip\emph{Differentially Private Testing.}  There have also been a number of works on differentially private \emph{hypothesis testing}.  For example,~\cite{WangLK15, GaboardiLRV16,KiferR17, CaiDK17, KakizakiSF17} gave private algorithms for goodness-of-fit testing, closeness, and independence testing.  Recently,~\cite{AcharyaSZ18} and~\cite{AliakbarpourDR18} have given essentially optimal algorithms for goodness-of-fit and closeness testing of arbitrary distributions.  \cite{AcharyaKSZ18}~designed nearly optimal algorithms for estimating properties like support size and entropy. \cite{GaboardiR18,Sheffet18} study hypothesis testing in the local differential privacy setting. All these works consider testing of \emph{arbitrary} distributions, and so they necessarily have sample complexity growing exponentially in the dimension.


\medskip
\emph{Privacy Attacks and Lower Bounds.} A complementary line of work has established limits on the accuracy of private algorithms for high-dimensional learning.  For example, Dwork \etal~\cite{DworkSSUV15} (building on~\cite{BunUV14, HardtU14, SteinkeU15, SteinkeU17a}) designed a \emph{robust tracing attack} that can infer sensitive information about individuals in a dataset using highly noisy statistical information about the dataset.  These attacks apply to nice distributions like product distributions and Gaussians, but require that the dataset be too small to learn the underlying distribution in total variation distance, and thus do not contradict our results.  These attacks apply to a number of learning problems, such as PCA~\cite{DworkTTZ14}, ERM~\cite{BassilyST14}, and variable selection~\cite{BafnaU17,SteinkeU17b}.  Similar attacks lead to computational hardness results for differentially private algorithms for high-dimensional data~\cite{DworkNRRV09,UllmanV11,Ullman16,KowalczykMUZ16,KowalczykMUW18}, albeit for learning problems that encode certain cryptographic functionalities.



\section{Preliminaries} \label{sec:dp}
A \emph{dataset} $X = (X_1,\dots,X_n) \in \cX^n$ is a collection of elements from some \emph{universe}.  We say that two datasets $X,X' \in \cX^n$ are \emph{neighboring} if they differ on at most a single entry, and denote this by $X \sim X'$.  Informally, differential privacy requires that for every pair of datasets $X,X' \in \cX^n$ that differ on at most a single entry, the distributions $M(X)$ and $M(X')$ are close.  In our work we consider a few different variants of differential privacy.  The first is the standard variant of differential privacy.
\begin{defn}[Differential Privacy (DP)~\cite{DworkMNS06}] A randomized algorithm $M: \cX^n \rightarrow \cY$ satisfies \emph{$(\eps,\delta)$-differential privacy ($(\eps,\delta)$-DP)} if for every pair of neighboring datasets $X, X' \in \cX^n$,
$$
\forall Y \subseteq \cY~~~\pr{}{M(X) \in Y} \leq e^{\eps} \pr{}{M(X') \in Y} + \delta.
$$ 
\end{defn}

\noindent The second variant is so-called \emph{concentrated differential privacy}~\cite{DworkR16}, specifically the refinement \emph{zero-mean concentrated differential privacy}~\cite{BunS16}.
\begin{defn}[Concentrated Differential Privacy (zCDP)~\cite{BunS16}]
    A randomized algorithm $M: \cX^n \rightarrow \cY$
    satisfies \emph{$\rho$-zCDP} if for
    every pair of neighboring datasets $X, X' \in \cX^n$,
    $$\forall \alpha \in (1,\infty)~~~D_\alpha\left(M(X)||M(X')\right) \leq \rho\alpha,$$
    where $D_\alpha\left(M(X)||M(X')\right)$ is the
    $\alpha$-R\'enyi divergence between $M(X)$ and $M(X')$.\footnote{Given two probability distributions $P,Q$ over $\Omega$, $D_{\alpha}(P\|Q) = \frac{1}{\alpha - 1}\log\left( \sum_{x} P(x)^{\alpha} Q(x)^{1-\alpha}\right)$}
\end{defn}

\noindent Both of these definitions are closed under post-processing
\begin{lem}[Post-Processing~\cite{DworkMNS06,BunS16}] \label{lem:postprocessing}
If $M : \cX^n \to \cY$ is $(\eps,\delta)$-DP and $P : \cY \to \cZ$ is any randomized function, then the algorithm $P \circ M$ is $(\eps,\delta)$-DP.  Similarly if $M$ is $\rho$-zCDP then the algorithm $P \circ M$ is $\rho$-zCDP.
\end{lem}

\noindent Qualitatively, DP with $\delta = 0$ is stronger than zCDP, which is stronger than DP with $\delta > 0$.  These relationships are quantified in the following lemma.
\begin{lem}[Relationships Between Variants of DP~\cite{BunS16}] \label{lem:dpdefns}
 For every $\eps \geq 0$,
\begin{enumerate}
\item If $M$ satisfies $(\eps,0)$-DP, then $M$ is $\frac{\eps^2}{2}$-zCDP.
\item If $M$ satisfies $\frac{\eps^2}{2}$-zCDP, then $M$ satisfies $(\frac{\eps^2}{2} + \eps \sqrt{2 \log(\frac{1}{\delta})},\delta)$-DP for every $\delta > 0$.
\end{enumerate}
\end{lem}

\noindent Note that the parameters for DP and zCDP are on different scales, with $(\eps,\delta)$-DP roughly commensurate with $\frac{\eps^2}{2}$-zCDP.

\medskip
\noindent\textbf{Composition.} A crucial property of all the variants of differential privacy is that they can be composed adaptively.  By adaptive composition, we mean a sequence of algorithms $M_1(X),\dots,M_T(X)$ where the algorithm $M_t(X)$ may also depend on the outcomes of the algorithms $M_1(X),\dots,M_{t-1}(X)$.
\begin{lem}[Composition of DP~\cite{DworkMNS06, DworkRV10, BunS16}] \label{lem:dpcomp}
If $M$ is an adaptive composition of differentially private algorithms $M_1,\dots,M_T$, then the following all hold:
\begin{enumerate}
\item If $M_1,\dots,M_T$ are $(\eps_1,\delta_1),\dots,(\eps_T,\delta_T)$-DP then $M$ is $(\eps,\delta)$-DP for $$\eps = \sum_t \eps_t~~~~\textrm{and}~~~~\delta = \sum_t \delta_t$$
\item If $M_1,\dots,M_T$ are $(\eps_0,\delta_1),\dots,(\eps_0,\delta_T)$-DP for some $\eps_0 \leq 1$, then for every $\delta_0 > 0$, $M$ is $(\eps, \delta)$-DP for
$$\eps = \eps_0 \sqrt{6 T \log(1/\delta_0)}~~~~\textrm{and}~~~~\delta = \delta_0 + \sum_t \delta_t$$
\item If $M_1,\dots,M_T$ are $\rho_1,\dots,\rho_T$-zCDP then $M$ is $\rho$-zCDP for $\rho = \sum_t \rho_t$.
\end{enumerate}
\end{lem}

Note that the first and the third properties say that $(\eps,\delta)$-DP and $\rho$-zCDP compose linearly---the parameters simply add up.  The second property says that $(\eps,\delta)$-DP actually composes sublinearly---the parameter $\eps$ grows roughly with the square root of the number of steps in the composition, provided we allow a small increase in $\delta$.

\medskip
\noindent\textbf{The Gaussian Mechanism.}
Our algorithms will extensively use the well known and standard Gaussian mechanism to ensure differential privacy.

\begin{defn}[$\ell_2$-Sensitivity]
Let $f : \cX^n \to \R^d$ be a function, its \emph{$\ell_2$-sensitivity} is
$$
\Delta_{f} = \max_{X \sim X' \in \cX^n} \| f(X) - f(X') \|_{2}
$$
\end{defn}

\begin{lem}[Gaussian Mechanism] \label{lem:gaussiandp}
Let $f : \cX^n \to \R^d$ be a function with $\ell_2$-sensitivity $\Delta_{f}$.  Then the Gaussian mechanism
$$M_{f}(X) = f(X) + N\left(0, \left(\frac{\Delta_{f}}{\sqrt{2\rho}}\right)^2 \cdot \id\right)$$
satisfies $\rho$-zCDP.
\end{lem}

\noindent In order to prove accuracy, we will use the following standard tail bounds for Gaussian random variables.

\begin{lem}\label{fact:gaussian-error}
    If $Z \sim N(0,\sigma^2)$ then for every $t > 0$,
    $\pr{}{|Z| > t\sigma} \leq 2e^{-t^2/2}.$
\end{lem}

\noindent\textbf{Parameter Estimation to Distribution Estimation.}
In this work, our goal is to estimate some underlying distribution in total variation distance.
For both Gaussian and product distributions, we will achieve this by estimating the parameters of the distribution, and we argue that a distribution from the class with said parameters will be accurate in total variation distance.
For product distributions, we require an estimate of the parameters, which is accurate in terms of a type of chi-squared distance; this is shown in the proof of Theorem~\ref{thm:ppde_acc}.
For Gaussian distributions, the parameter estimate we require is slightly more difficult to describe.
For a vector $x$, define $\|x\|_\Sigma = \|\Sigma^{-1/2}x\|_2$.
Similarly, for a matrix $X$, define $\|X\|_\Sigma = \|\Sigma^{-1/2}X\Sigma^{-1/2}\|_F$.
With these two norms in place, we have the following lemma, which is a combination of Corollaries 2.13 and 2.14 of~\cite{DiakonikolasKKLMS16}.
\begin{lemma}
\label{lem:gaussian-tv}
Let $\alpha \geq 0$ be smaller than some absolute constant.
Suppose that $\|\mu - \hat \mu\|_\Sigma \leq \alpha$, and $\|\Sigma - \hat \Sigma\|_\Sigma \leq \alpha$, where $\mathcal{N}(\mu, \Sigma)$ is a Gaussian distribution in $\mathbb{R}^d$, $\hat \mu \in \mathbb{R}^d$, and $\Sigma \in \mathbb{R}^{d \times d}$ is a PSD matrix.
Then $\dtv(\mathcal{N}(\mu, \Sigma), \mathcal{N}(\hat \mu, \hat \Sigma)) \leq O(\alpha)$.
\end{lemma}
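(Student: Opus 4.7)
The plan is to reduce total variation distance to KL divergence via Pinsker's inequality and then exploit the closed-form expression for KL between two Gaussians. Pinsker gives $d_{TV}(\mathcal{N}(\mu,\Sigma), \mathcal{N}(\hat\mu,\hat\Sigma)) \leq \sqrt{\tfrac{1}{2} D_{KL}(\mathcal{N}(\hat\mu,\hat\Sigma)\,\|\,\mathcal{N}(\mu,\Sigma))}$. Since TV is symmetric, I choose this direction because $\Sigma$ (the one paired with the hypothesis norm) appears in the quadratic form. Recall
\[
2 D_{KL} \;=\; \operatorname{tr}(\Sigma^{-1}\hat\Sigma) - d - \log\det(\Sigma^{-1}\hat\Sigma) + (\hat\mu-\mu)^T \Sigma^{-1} (\hat\mu-\mu).
\]

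The mean contribution is exactly $\|\hat\mu - \mu\|_\Sigma^2 \leq \alpha^2$ by hypothesis. For the covariance contribution, set $M = \Sigma^{-1/2}\hat\Sigma\,\Sigma^{-1/2}$ and write $M = I + E$, so that $\|E\|_F = \|\hat\Sigma - \Sigma\|_\Sigma \leq \alpha$. Let $\lambda_1,\dots,\lambda_d$ be the eigenvalues of $E$; then $\sum_i \lambda_i^2 \leq \alpha^2$, and in particular $|\lambda_i| \leq \alpha$ for all $i$. Provided $\alpha$ is smaller than a suitable absolute constant (say $\alpha \leq 1/2$), the scalar inequality $x - \log(1+x) \leq x^2$ holds uniformly on $[-1/2, 1/2]$, giving
\[
\operatorname{tr}(M) - d - \log\det(M) \;=\; \sum_{i=1}^d \bigl[\lambda_i - \log(1+\lambda_i)\bigr] \;\leq\; \sum_{i=1}^d \lambda_i^2 \;\leq\; \alpha^2.
\]

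Combining the two bounds, $D_{KL} \leq \alpha^2$, and Pinsker then yields $d_{TV} \leq \alpha/\sqrt{2} = O(\alpha)$, which is the desired conclusion.

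The only nontrivial step is the scalar Taylor estimate for the covariance term; this is precisely why one needs the hypothesis that $\alpha$ is bounded by an absolute constant. Everything else is algebraic manipulation of the standard Gaussian KL formula together with the two hypothesized norm bounds, so there is no substantive obstacle beyond making this smallness requirement explicit.
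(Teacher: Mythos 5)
Your proof is correct. All the ingredients check out: the KL formula for Gaussians is stated correctly; the mean term is exactly $\|\hat\mu-\mu\|_\Sigma^2$; setting $M=\Sigma^{-1/2}\hat\Sigma\Sigma^{-1/2}=I+E$ (with the symmetric square root, so $E$ is symmetric) gives $\|E\|_F=\|\hat\Sigma-\Sigma\|_\Sigma\le\alpha$, hence every eigenvalue $\lambda_i$ of $E$ satisfies $|\lambda_i|\le\alpha$; and the scalar bound $x-\log(1+x)\le x^2$ does hold on $[-1/2,1/2]$ (indeed on roughly $[-0.68,\infty)$), so the covariance term is at most $\sum_i\lambda_i^2\le\alpha^2$. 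Pinsker then gives $\dtv\le\alpha/\sqrt{2}$. The smallness hypothesis on $\alpha$ also ensures $M\succ 0$, so $\hat\Sigma$ is automatically nonsingular and the KL divergence is finite, which is worth noting explicitly.

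The route differs from the paper's. The paper does not give a self-contained argument: it invokes Corollaries~2.13 and~2.14 of~\cite{DiakonikolasKKLMS16}, which handle the mean-shift and covariance-shift cases separately and are presumably combined via a triangle inequality through an intermediate Gaussian $\normal(\hat\mu,\Sigma)$. Your single-shot KL-plus-Pinsker computation is more elementary and self-contained, and it produces an explicit constant ($1/\sqrt{2}$) rather than an unspecified $O(\cdot)$; the trade-off is that the triangle-inequality route cleanly factors the problem into two independent statements, one of which (the mean-shift bound) actually gives a slightly sharper constant than Pinsker when the covariances agree. For the purposes of this lemma the approaches are interchangeable.
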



\section{Private Covariance Estimation for Gaussians}
\label{sec:gaussian-cov}
In this section we present our algorithm for privately estimating the covariance of an unknown Gaussian.  Suppose we are given i.i.d.\ samples $X_1, \ldots, X_n \sim \normal (0, \Sigma)$ where $\id \preceq \Sigma \preceq \kappa \id$.  Our goal is to privately output $\wh\Sigma$ so that
$$
\| \Sigma - \wh\Sigma \|_{\Sigma} \leq O(\alpha),
$$ 
where $\| A \|_\Sigma = \| \Sigma^{-1/2} A \Sigma^{-1/2} \|_F$.  Here the matrix square root denotes any possible square root; it is trivial to check that all such choices are equivalent.  By Lemma~\ref{lem:gaussian-tv}, this condition ensures 
\[
\dtv (\normal (0, \Sigma), \normal (0, \Sigmahat)) \leq O(\alpha).
\]

\subsection{Useful Concentration Inequalities}
We will need several facts about Gaussians and Gaussian matrices.  Throughout this section, let $\GUE(\sigma^2)$ denote the distribution over $d \times d$ symmetric matrices $M$ where for all $i \leq j$, we have $M_{ij} \sim \normal (0, \sigma^2)$ i.i.d..
From basic random matrix theory, we have the following guarantee.
\begin{theorem}[see e.g. \cite{Tao12} Corollary 2.3.6]
\label{thm:GUE}
For $d$ sufficiently large, there exist absolute constants $C, c > 0$ such that
\[
\pr{M \sim \GUE (\sigma^2) }{\| M \|_{2} > A \sigma \sqrt{d}} \leq C \exp (-c A d)
\]
for all $A \geq C$.
\end{theorem}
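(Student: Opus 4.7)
The plan is to prove this by a standard $\epsilon$-net argument on the unit sphere. First I would reduce to the case $\sigma = 1$ by scaling, since if $M \sim \GUE(\sigma^2)$ then $M/\sigma \sim \GUE(1)$ and $\|M\|_2 = \sigma \|M/\sigma\|_2$. So it suffices to prove that for $M \sim \GUE(1)$ and $A$ larger than some absolute constant, $\Pr[\|M\|_2 > A\sqrt{d}] \leq C \exp(-cAd)$.

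For a symmetric matrix $M$ we have the variational characterization $\|M\|_2 = \sup_{x \in S^{d-1}} |x^\top M x|$. The first step is to analyze a single quadratic form: for a fixed unit vector $x$, writing $x^\top M x = \sum_i x_i^2 M_{ii} + 2\sum_{i<j} x_i x_j M_{ij}$, this is a centered Gaussian whose variance one can check is at most $2$ (using $\sum_i x_i^4 + 4\sum_{i<j} x_i^2 x_j^2 \leq 2(\sum_i x_i^2)^2 = 2$). By Lemma~\ref{fact:gaussian-error}, $\Pr[|x^\top M x| > t] \leq 2\exp(-t^2/4)$ for every $t > 0$.

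Next, I would take a $(1/4)$-net $\cN$ of $S^{d-1}$, which can be chosen to have $|\cN| \leq 12^d$ by the standard volume argument. The net approximation lemma for quadratic forms gives that if $y \in \cN$ is within $1/4$ of the maximizer $x$, then $|y^\top M y| \geq (1 - 2\cdot \tfrac14)\|M\|_2 = \tfrac12 \|M\|_2$. Thus
\[
\Pr\bigl[\|M\|_2 > A\sqrt{d}\bigr] \leq \sum_{y \in \cN} \Pr\bigl[|y^\top M y| > A\sqrt{d}/2\bigr] \leq 2 \cdot 12^d \cdot \exp\bigl(-A^2 d / 16\bigr).
\]
Finally, choose the absolute constant $C$ large enough so that whenever $A \geq C$ the exponent $A^2/16$ dominates $\log 12 + cA$ for some fixed $c > 0$; then $12^d \exp(-A^2 d / 16) \leq \exp(-cAd)$, which gives the claimed bound.

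The only mildly delicate step is the variance computation for the single quadratic form and the net-approximation inequality, but both are entirely routine; the tail probability itself comes directly from Lemma~\ref{fact:gaussian-error}. There is no real obstacle here beyond bookkeeping of the constants $C, c$ so that the bound holds for all $A \geq C$. Alternatively, since this is a textbook fact, one could simply cite Corollary 2.3.6 of \cite{Tao12} directly, as the statement indicates.
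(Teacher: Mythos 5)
Your proof is correct. The paper itself offers no argument here and simply cites Corollary 2.3.6 of Tao's book, so there is no in-paper proof to compare against; your $\epsilon$-net argument is the standard way to derive this tail bound. The reduction to $\sigma=1$, the variance computation $\mathrm{Var}(x^\top M x) = \sum_i x_i^4 + 4\sum_{i<j}x_i^2x_j^2 \leq 2$, the $(1/4)$-net of size at most $12^d$, the approximation inequality $\max_{y\in\cN}|y^\top M y| \geq (1-2\cdot\tfrac14)\|M\|_2$ (which for symmetric $M$ follows from the one-net variational characterization), and the union bound all go through. One small observation: your intermediate bound $2\cdot 12^d\exp(-A^2 d/16)$ is sub-Gaussian in $A$, so it is strictly stronger than the stated $\exp(-cAd)$ for large $A$; absorbing the $12^d$ factor therefore requires nothing beyond taking $C$ large enough that $A^2/16 - \log 12 \geq cA$ for all $A\geq C$, exactly as you indicate. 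Either giving this short self-contained argument or just citing the textbook, as the theorem label suggests and as you also note, would be acceptable.
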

\noindent
We also require the following, well known tail bound on quadratic forms on Gaussians.
\begin{theorem}[Hanson-Wright Inequality (see e.g.~\cite{LaurentM00})]
\label{thm:hanson-wright}
Let $X \sim \normal (0, \id)$ and let $A$ be a $d \times d$ matrix.
Then, for all $t > 0$, the following two bounds hold:
\begin{align}
&\pr{}{X^\top A X - \tr (A) \geq 2 \| A \|_F \sqrt{t} + 2 \| A \|_2 t} \leq \exp (-t) \label{eq:hw-ub} \\
&\pr{}{X^\top A X - \tr (A) \leq -2 \| A \|_F \sqrt{t}} \leq \exp (-t)\label{eq:hw-lb}
\end{align}
\end{theorem}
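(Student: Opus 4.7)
The statement is the Hanson-Wright concentration inequality for Gaussian quadratic forms, which I would prove by the Cram\'er-Chernoff method on the moment generating function of $X^\top A X - \tr(A)$, following Laurent and Massart. The first step is a structural reduction: since $X^\top A X = X^\top\!\bigl((A+A^\top)/2\bigr)\!X$ and passing to the symmetric part can only decrease both $\|A\|_F$ and $\|A\|_2$, I may assume $A$ is symmetric. Diagonalizing $A = U^\top\operatorname{diag}(\lambda_1,\ldots,\lambda_d)U$ and using the rotational invariance of the standard Gaussian ($Y := UX \sim \normal(0,\id)$), the quantity of interest becomes $S := \sum_i \lambda_i(Y_i^2 - 1)$, where the $Y_i$ are i.i.d.\ $\normal(0,1)$, $\sum_i \lambda_i^2 = \|A\|_F^2$, and $\max_i|\lambda_i| = \|A\|_2$.

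The next step is to bound the log-MGF. The chi-squared identity $\EE[\exp(\eta(Y_i^2 - 1))] = (1-2\eta)^{-1/2}e^{-\eta}$ for $\eta < 1/2$, combined with independence, gives $\log\EE[\exp(\eta S)] = \sum_i \bigl(-\eta\lambda_i - \tfrac12\log(1-2\eta\lambda_i)\bigr)$. The key elementary inequality is
\[
-u - \tfrac12\log(1-2u) \;\leq\; \frac{u^2}{1 - 2u_+}, \qquad u < \tfrac12,
\]
which follows from the Taylor expansion $-u - \tfrac12\log(1-2u) = \sum_{k\geq 2}(2u)^k/(2k)$ by termwise comparison with the geometric series $u^2/(1-2u) = \sum_{k\geq 2} 2^{k-2}u^k$; for $u \leq 0$ it sharpens to the purely sub-Gaussian bound $-u - \tfrac12\log(1-2u) \leq u^2$, provable from the standard $\log(1+y) \geq y - y^2/2$ for $y \geq 0$. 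Summing the first bound over $i$ yields $\log\EE[\exp(\eta S)] \leq \eta^2\|A\|_F^2/(1-2\eta\|A\|_2)$ on $0 \leq \eta < 1/(2\|A\|_2)$.

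Finally, apply the Chernoff bound $\pr{}{S \geq s} \leq \exp\!\bigl(-\eta s + \eta^2\|A\|_F^2/(1-2\eta\|A\|_2)\bigr)$ and minimize. Writing $v = \|A\|_F^2$, $c = \|A\|_2$, and reparameterizing by $u = 1 - 2\eta c \in (0,1]$, the first-order optimality condition gives $u^* = \sqrt{v/(v+2cs)}$, and substitution shows the optimized exponent equals $-(\sqrt{v+2cs} - \sqrt{v})^2/(4c^2)$. Setting this $\leq -t$ reduces, after squaring $\sqrt{v+2cs} \geq \sqrt{v}+2c\sqrt{t}$, to exactly $s \geq 2\sqrt{vt} + 2ct$, which yields the upper tail statement. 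The lower tail is obtained by the same argument applied with $\eta < 0$: since then each $\eta\lambda_i$ has opposite sign to $\lambda_i$, the sharper sub-Gaussian bound $u^2$ becomes applicable to the components that would otherwise give sub-exponential tails, yielding a sub-Gaussian MGF bound $\eta^2\|A\|_F^2$ whose standard Chernoff optimization produces the deviation $2\|A\|_F\sqrt{t}$ without a linear-in-$t$ term. The main obstacle is the Chernoff optimization for the upper tail; the reparameterization $u = 1 - 2\eta\|A\|_2$ makes the otherwise-messy algebra transparent, as Laurent and Massart realized, but once that step is identified everything else is rotational invariance plus bookkeeping.
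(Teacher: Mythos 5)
The paper states this theorem with a citation to \cite{LaurentM00} and supplies no proof, so there is no argument of the paper's to compare against; I am evaluating your reconstruction on its own terms. The upper tail \eqref{eq:hw-ub} is handled correctly and follows the standard Laurent--Massart/Cram\'er--Chernoff route: symmetrize $A$ (which leaves the quadratic form and trace unchanged and, by the triangle inequality, only shrinks both norms), diagonalize and use rotational invariance to reduce to $S = \sum_i \lambda_i(Y_i^2 - 1)$, use the $\chi^2_1$ MGF to obtain $\log\EE\, e^{\eta S} = \sum_i\bigl(-\eta\lambda_i - \tfrac12\log(1-2\eta\lambda_i)\bigr)$, bound each summand by $u^2/(1-2u_+)$ with $u=\eta\lambda_i$ (your termwise Taylor comparison is valid since $2^k/(2k)\le 2^{k-2}$ for $k\ge 2$), and optimize the Chernoff exponent. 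I checked the reparameterization: with $v=\|A\|_F^2$, $c=\|A\|_2$, $u=1-2\eta c$, the optimizer is indeed $u^*=\sqrt{v/(v+2cs)}$, the exponent collapses to $-(\sqrt{v+2cs}-\sqrt{v})^2/(4c^2)$, and equating this to $-t$ and squaring gives exactly $s = 2\sqrt{vt}+2ct$. That part is complete.

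The lower tail is where your sketch has a genuine gap. You assert that taking $\eta<0$ yields the purely sub-Gaussian MGF bound $\log\EE\,e^{\eta S}\le \eta^2\|A\|_F^2$ because ``each $\eta\lambda_i$ has opposite sign to $\lambda_i$, [so] the sharper sub-Gaussian bound $u^2$ becomes applicable to the components that would otherwise give sub-exponential tails.'' But flipping the sign of $\eta$ only helps the eigenvalues of one sign and \emph{hurts} the other: if $\lambda_i<0$ then for $\eta<0$ the product $u_i=\eta\lambda_i$ is \emph{positive}, that summand still carries the sub-exponential factor $u_i^2/(1-2u_i)$, and $\eta^2\|A\|_F^2$ is not a valid bound. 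In fact \eqref{eq:hw-lb} itself is false for general $A$: take $d=1$, $A=-1$, so $X^\top AX - \tr(A) = 1-X^2$ and $\Pr\!\left[1-X^2 \le -2\sqrt t\,\right]=\Pr\!\left[X^2\ge 1+2\sqrt t\,\right]$ decays like $e^{-\sqrt t}$, far slower than $e^{-t}$. (The general lower tail, obtained by applying \eqref{eq:hw-ub} to $-A$, requires an additional $-2\|A\|_2 t$ term.) Your argument, and the stated inequality, are correct precisely when $A\succeq 0$: then $u_i=\eta\lambda_i\le 0$ for all $i$ when $\eta<0$, the sub-Gaussian bound $u_i^2$ holds termwise with no upper restriction on $|\eta|$, and minimizing $\eta s + \eta^2\|A\|_F^2$ over $\eta<0$ gives the deviation $2\|A\|_F\sqrt t$. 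The paper only invokes \eqref{eq:hw-lb} with $A=\Sigma$ a PSD covariance, so the missing hypothesis never bites in its applications, but you should state $A\succeq 0$ for the lower tail rather than present an argument that reads as though it works for arbitrary $A$.
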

\noindent As a special case of the above inequality, we also have
\begin{fact}[\cite{LaurentM00}]
\label{fact:chi-squared}
Fix $\beta > 0$, and let $X_1, \ldots, X_m \sim \normal (0, \sigma^2)$ be independent.
Then
\[
\Pr \left[ \left| \frac{1}{m} \sum_{i = 1}^m X_i^2 - \sigma^2 \right| > 4 \sigma^2 \left( \sqrt{\frac{\log(1/\beta)}{m}} + \frac{2 \log(1/\beta)}{m} \right) \right] \leq \beta
\]
\end{fact}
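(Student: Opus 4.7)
The plan is to derive the claim as a direct corollary of the Hanson-Wright inequality (Theorem~\ref{thm:hanson-wright}). First I would reduce to the case $\sigma = 1$ by rescaling: setting $Y_i := X_i / \sigma$ gives i.i.d.\ $Y_i \sim \normal(0, 1)$, and the statement is equivalent to the chi-squared tail bound
\[
\Pr\left[ \left| \frac{1}{m} \sum_{i=1}^m Y_i^2 - 1 \right| > 4 \left( \sqrt{\tfrac{\log(1/\beta)}{m}} + \tfrac{2 \log(1/\beta)}{m} \right) \right] \leq \beta.
\]

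Next, I would rewrite this as a quadratic form in a Gaussian vector. Collecting the $Y_i$ into $Y \sim \normal(0, \id_m)$, we have $\frac{1}{m} \sum_i Y_i^2 = Y^\top A Y$ with $A = \frac{1}{m} \id_m$. A direct computation records the three relevant quantities: $\tr(A) = 1$, $\|A\|_F = 1/\sqrt{m}$, and $\|A\|_2 = 1/m$. Plugging these into Hanson-Wright with $t = \log(2/\beta)$ yields
\[
\Pr\left[ Y^\top A Y - 1 \geq \tfrac{2 \sqrt{\log(2/\beta)}}{\sqrt{m}} + \tfrac{2 \log(2/\beta)}{m} \right] \leq \tfrac{\beta}{2}
\]
from the upper-tail bound~\eqref{eq:hw-ub}, and the symmetric inequality (without the $2t \|A\|_2$ term) from the lower-tail bound~\eqref{eq:hw-lb}. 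A union bound combines these into a two-sided tail bound that fails with probability at most $\beta$.

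Finally, I would clean up the constants: using the elementary estimate $\log(2/\beta) \leq 2 \log(1/\beta)$ (valid for $\beta \leq 1/2$; the range $\beta > 1/2$ is handled trivially since the RHS inside the probability is bounded below by a constant comparable to the tail the inequality claims) and absorbing the resulting factors into the leading $4$, we recover the stated form of the bound. There is no real obstacle in this argument; it is a bookkeeping exercise matching the Hanson-Wright parameters to an identity-shaped quadratic form, and the factor $4$ in the statement is precisely the slack needed to absorb both the two-sided union bound and the $\log 2$ factors from rescaling $t$.
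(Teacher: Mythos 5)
Your derivation is correct and matches exactly what the paper intends: the paper states this as an unproven ``special case of the above inequality'' (Hanson-Wright, Theorem~\ref{thm:hanson-wright}) cited from Laurent--Massart, and your argument—scaling to $\sigma = 1$, writing the average as $Y^\top (\tfrac{1}{m}\id_m) Y$, reading off $\tr = 1$, $\|\cdot\|_F = 1/\sqrt{m}$, $\|\cdot\|_2 = 1/m$, applying both tails with $t = \log(2/\beta)$, and absorbing the $\log 2$ slack into the factor $4$—is precisely how that specialization goes. The only loose thread is your parenthetical about $\beta > 1/2$: your reduction of the HW threshold at $t=\log(2/\beta)$ to the advertised threshold requires $\log(2/\beta) \le 2\log(1/\beta)$, which genuinely fails as $\beta \to 1$, so the inequality as stated for that range would need a separate (direct Gaussian/chi-squared) argument rather than ``trivially''; in practice this range is irrelevant to the paper's applications, where $\beta$ is small.
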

\ifnum\epsdeltastuff=1
Another simple corollary of this fact is the following.
\begin{coro}
\label{cor:hanson-wright-extreme-lb}
Let $X \sim \normal (0, 1) \in \R^d$ and let $A$ be a $d \times d$ PSD matrix. 
Then, there exists a universal constant $C \leq 10000$ so that
\[
\Pr \left[ X^\top A X < \frac{1}{C} \tr (A) \right] < \frac{1}{10} \; .
\]
\end{coro}

\noindent The proof of this corollary appears in Section~\ref{sec:hanson-wright-extreme-lb}.
\fi

\subsection{Deterministic Regularity Conditions}
We will rely on certain regularity properties of i.i.d.\ samples from a Gaussian.
These are standard concentration inequalities, and a reference for these facts is Section 4 of~\cite{DiakonikolasKKLMS16}.
\begin{fact}
\label{fact:gaussian-facts}
Let $X_1, \ldots, X_n \sim \normal (0, \Sigma)$ i.i.d.\ for $\id \preceq \Sigma \preceq \kappa \id$.  Let $Y_i = \Sigma^{-1/2} X_i$ and let 
\[
\Sigmahat_Y = \frac{1}{n} \sum_{i = 1}^n Y_i Y_i^\top
\]
Then for every $\beta > 0$, the following conditions hold except with probability $1-O(\beta)$.
\begin{align}
&\forall i \in [n]~~~\| Y_i \|_2^2 \leq O\left( d \log (n / \beta) \right) \label{eq:cov-cond1} \\
&\left( 1 - O \left( \sqrt{\frac{d + \log(1/\beta)}{n}} \right) \right) \cdot \id \preceq  \Sigmahat_Y \preceq \left( 1 + O \left( \sqrt{\frac{d + \log(1/\beta)}{n}} \right) \right) \cdot \id \label{eq:cov-cond2} \\
&\left\| \id - \Sigmahat_Y \right\|_F \leq O \left( \sqrt{\frac{d^2 + \log(1/\beta)}{n}} \right) \label{eq:cov-cond3}
\end{align}
\end{fact}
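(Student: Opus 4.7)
The plan is to reduce everything to the isotropic case: since $Y_i = \Sigma^{-1/2} X_i$, the $Y_i$ are i.i.d.\ $\normal(0, \id)$, so all three bounds are statements about the empirical second moment of $n$ independent standard Gaussian vectors in $\R^d$. None of the conditions actually depend on $\Sigma$, so $\kappa$ plays no role in the analysis. Each bound will follow from the Hanson--Wright inequality (Theorem~\ref{thm:hanson-wright}) combined, where appropriate, with a union bound or a net argument.

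For \eqref{eq:cov-cond1}, note that $\|Y_i\|_2^2 = Y_i^\top \id\, Y_i$ is a quadratic form in a standard Gaussian with $\tr(\id) = d$, $\|\id\|_F = \sqrt{d}$, and $\|\id\|_2 = 1$. Applying \eqref{eq:hw-ub} with $t = \log(n/\beta)$ yields $\|Y_i\|_2^2 \leq d + 2\sqrt{d\log(n/\beta)} + 2\log(n/\beta) = O(d\log(n/\beta))$ with probability at least $1 - \beta/n$, and a union bound over $i \in [n]$ finishes this item.

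For \eqref{eq:cov-cond2}, I would use the standard $\eps$-net argument over the unit sphere $S^{d-1}$. For a fixed unit vector $u$, the random variable $u^\top \Sigmahat_Y u = \tfrac{1}{n}\sum_i (Y_i^\top u)^2$ is a scaled chi-squared with $n$ degrees of freedom, so Fact~\ref{fact:chi-squared} (equivalently Hanson--Wright with the rank-one matrix $uu^\top$) gives $|u^\top \Sigmahat_Y u - 1| \leq O(\sqrt{\log(1/\beta')/n} + \log(1/\beta')/n)$ with probability $1 - \beta'$. Fixing a $(1/4)$-net $\cN$ of $S^{d-1}$ with $|\cN| \leq 9^d$, setting $\beta' = \beta/|\cN|$ so that $\log(1/\beta') = O(d + \log(1/\beta))$, and union bounding over $\cN$ controls the quadratic form at every net point. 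A standard symmetrization argument (passing from the net to the full sphere at the cost of a factor of $2$ in the operator norm) then yields \eqref{eq:cov-cond2} with the stated rate $O(\sqrt{(d + \log(1/\beta))/n})$, provided $n$ is large enough that this quantity is a small constant.

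For \eqref{eq:cov-cond3}, the cleanest route is to view the full Frobenius deviation as a single quadratic form in the $nd$-dimensional Gaussian $Y = (Y_1, \ldots, Y_n)$. For any fixed symmetric matrix $M$ with $\|M\|_F \leq 1$, one has $\tr((\Sigmahat_Y - \id)M) = \tfrac{1}{n}(Y^\top \tilde M Y - \tr(\tilde M))$ where $\tilde M$ is block-diagonal with $n$ copies of $M$; then $\|\tilde M\|_F = \sqrt{n}$ and $\|\tilde M\|_2 \leq 1$, so Hanson--Wright gives $|\tr((\Sigmahat_Y - \id)M)| \leq O(\sqrt{t/n} + t/n)$ with probability $1 - e^{-t}$. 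Dualizing $\|\Sigmahat_Y - \id\|_F = \sup_{\|M\|_F \leq 1} \tr((\Sigmahat_Y - \id)M)$ via a $(1/4)$-net on the unit Frobenius ball of symmetric $d \times d$ matrices (of size at most $9^{d(d+1)/2}$) and choosing $t = O(d^2 + \log(1/\beta))$ gives the claimed bound. The main thing to be careful about is getting the right rate on the net-union step; the $d^2$ term inside the square root in \eqref{eq:cov-cond3} is precisely the log-cardinality of the net, which is why this approach is tight. Everything else is routine scalar concentration plus rescaling.
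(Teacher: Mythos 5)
The paper supplies no proof of this fact at all -- it simply cites Section 4 of \cite{DiakonikolasKKLMS16} -- so there is no ``paper's route'' to compare against; what you have given is a correct, self-contained argument by a standard method. The structure (Hanson--Wright at three scales: applied to a single vector for \eqref{eq:cov-cond1}, to each point of a $(1/4)$-net of $S^{d-1}$ for \eqref{eq:cov-cond2}, and to each point of a $(1/4)$-net of the unit Frobenius ball of symmetric matrices via the block-diagonal lifting $\tilde M = \mathrm{diag}(M,\ldots,M)$ for \eqref{eq:cov-cond3}) is sound, and the observation that the $d^2$ in \eqref{eq:cov-cond3} is exactly the log-cardinality of the matrix net is the right way to see why the rate is what it is. Two small points worth noting. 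First, Hanson--Wright produces deviations of the form $O\bigl(\sqrt{t/n}+t/n\bigr)$, so your argument actually yields \eqref{eq:cov-cond2} and \eqref{eq:cov-cond3} with an extra additive $t/n$ term; the stated square-root forms follow once $n \gtrsim d+\log(1/\beta)$ and $n \gtrsim d^2+\log(1/\beta)$ respectively, which is the only regime in which the paper invokes them. This is not a defect of your proof but of the fact's phrasing: $\Sigmahat_Y$ is sub-exponential rather than sub-Gaussian, so the pure square-root rate cannot hold for $\log(1/\beta) \gg n$, and any correct argument must produce the linear-in-$t$ correction. Second, in \eqref{eq:cov-cond2} the phrase ``Hanson--Wright with the rank-one matrix $uu^\top$'' should be understood as Hanson--Wright on the stacked $nd$-dimensional Gaussian with the block-diagonal matrix $\frac{1}{n}\mathrm{diag}(uu^\top,\ldots,uu^\top)$ (or, as you also note, Fact~\ref{fact:chi-squared} applied to the $n$ i.i.d.\ scalars $Y_i^\top u$); a single rank-one quadratic form only controls one summand. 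Finally, ``symmetrization'' is not quite the right name for the net-to-sphere step -- it is the usual approximation argument bounding $\|T\|$ by $\frac{1}{1-2\eta}\sup_{u\in\mathcal{N}_\eta}|u^\top T u|$ -- but the content is correct.
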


\noindent We now note some simple consequences of these conditions.
These inequalities follow from simple linear algebra and we omit their proof for conciseness.
\begin{lemma}
\label{lem:rotations-cov-conds}
Let $Y_1, \ldots, Y_n$ satisfy \eqref{eq:cov-cond1}--\eqref{eq:cov-cond3}.
Fix $M \succ 0$, and for all $i = 1, \ldots, n$, let $Z_i = M^{1/2} Y_i$, and let $\wh\Sigma_{Z} = \frac{1}{n} \sum_{i=1}^{n} Z_i Z_i^\top$.  Let $\kappa'$ be the top eigenvalue of $M$.
Then
\begin{align*}
&\forall i \in [n]~~~\| Z_i \|_2^2 \leq O\left(\kappa' d \log (n / \beta) \right) \\
&\left( 1 - O \left( \sqrt{\frac{d + \log(1/\beta)}{n}} \right) \right) \cdot M \preceq \wh\Sigma_{Z} \preceq \left( 1 + O \left( \sqrt{\frac{d + \log(1/\beta)}{n}} \right) \right) \cdot M \\
&\left\| M - \wh\Sigma_Z \right\|_M \leq O \left( \sqrt{\frac{d^2 + \log(1/\beta)}{n}} \right)
\end{align*}
\end{lemma}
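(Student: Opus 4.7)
The plan is to reduce each of the three conclusions to the corresponding hypothesis by conjugating with $M^{1/2}$, exploiting the identity $Z_i = M^{1/2} Y_i$ and its consequence $\wh\Sigma_Z = M^{1/2} \wh\Sigma_Y M^{1/2}$.

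For the first bound on $\| Z_i \|_2^2$, I would just write $\| Z_i \|_2^2 = Y_i^\top M Y_i \leq \|M\|_{\mathrm{op}} \| Y_i \|_2^2 = \kappa' \| Y_i \|_2^2$ and invoke \eqref{eq:cov-cond1}.

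For the sandwich inequality on $\wh\Sigma_Z$, I would first observe that $\wh\Sigma_Z = \frac{1}{n}\sum_i Z_i Z_i^\top = M^{1/2} \wh\Sigma_Y M^{1/2}$. Since conjugation by $M^{1/2} \succ 0$ preserves the Loewner order (if $A \preceq B$ then $M^{1/2} A M^{1/2} \preceq M^{1/2} B M^{1/2}$), applying this to the two-sided inequality \eqref{eq:cov-cond2} on $\wh\Sigma_Y$ yields exactly the claimed sandwich on $\wh\Sigma_Z$ with $M$ in place of $\id$.

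For the Mahalanobis-Frobenius bound, I would unfold the definition: $\| M - \wh\Sigma_Z \|_M = \| M^{-1/2}(M - \wh\Sigma_Z) M^{-1/2} \|_F$. Substituting $\wh\Sigma_Z = M^{1/2} \wh\Sigma_Y M^{1/2}$ gives $M^{-1/2}(M - \wh\Sigma_Z) M^{-1/2} = \id - \wh\Sigma_Y$, so $\|M - \wh\Sigma_Z\|_M = \|\id - \wh\Sigma_Y\|_F$, and the conclusion is immediate from \eqref{eq:cov-cond3}.

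There is no real obstacle here; the only potential source of confusion is the choice of square root of $M$, but the computation is invariant under the choice because all three quantities being bounded ($Y_i^\top M Y_i$, $M^{1/2} \wh\Sigma_Y M^{1/2}$, and $\id - \wh\Sigma_Y$) are expressible purely in terms of $M$ without reference to any particular square root.
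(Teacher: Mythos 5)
Your proof is correct, and it is exactly the ``simple linear algebra'' the paper alludes to when it omits the verification: conjugate \eqref{eq:cov-cond2} by $M^{1/2}$ (Loewner order is preserved under congruence), use $\|Z_i\|_2^2 = Y_i^\top M Y_i \le \|M\|_2\|Y_i\|_2^2$ for \eqref{eq:cov-cond1}, and note $M^{-1/2}\wh\Sigma_Z M^{-1/2} = \wh\Sigma_Y$ for \eqref{eq:cov-cond3}. One pedantic aside: your closing remark slightly overstates things, since $M^{1/2}\wh\Sigma_Y M^{1/2}$ does depend on the choice of square root; what is invariant is the sandwich conclusion $(1\pm\psi)M \preceq \wh\Sigma_Z$, which holds for any factor $B$ with $BB^\top = M$ by the same congruence argument, so nothing actually breaks.
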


\subsection{A Simple Algorithm for Well Conditioned Gaussians}
\noindent 
We first consider the following simple algorithm:~remove all points whose norm exceeds a certain threshold, then compute the empirical covariance of the resulting data set, and perturb the empirical covariance with noise to preserve privacy. This algorithm will have nearly-optimal dependence on most parameters, however, it will have a polynomial dependence on the condition number.  Pseudocode for this algorithm is given in Algorithm \ref{alg:naivePCE}.

\begin{algorithm}[h!] \label{alg:naivePCE}
\caption{Naive Private Gaussian Covariance Estimation $\NaivePCE_{\rho,\beta, \kappa}(X)$}
\KwIn{A set of $n$ samples $X_1, \ldots, X_n$ from an unknown Gaussian.  Parameters $\rho, \beta, \kappa > 0$}
\KwOut{A covariance matrix $M$.} \vspace{10pt}

Let 
$
S \gets \left\{ i \in [n]: \| X_i \|_2^2 \leq O(d \kappa \log (n / \beta)) \right\}
$\\
Let 
$$
\sigma \gets \Theta \left( \frac{d \kappa \log (\frac{n}{ \beta})}{n\rho^{1/2}} \right)
$$

Let $M' \gets \frac{1}{n} \sum_{i \in S} X_i X_i^\top + N$ where $N \sim \GUE (\sigma^2)$ \\
Let $M$ be the projection of $M'$ into the set of PSD matrices.

\Return $M$
\end{algorithm}

\begin{lemma} [Analysis of \NaivePCE]
\label{lem:dp-naive}
For every $\rho, \beta, \kappa, n$, $\NaivePCE_{\rho,\beta,\kappa}(X)$ satisfies $\rho$-zCDP, and if $X_1,\dots,X_n$ are sampled i.i.d.\ from $\normal(0,\Sigma)$ for $\id \preceq \Sigma \preceq \kappa \id$ and satisfy \eqref{eq:cov-cond1}--\eqref{eq:cov-cond3}, then with probability at least $1-O(\beta)$, it outputs $M$ so that $M = \Sigma^{1/2} (I + N') \Sigma^{1/2} + N$ where $\Sigma^{1/2} (I + N') \Sigma^{1/2} 
\succeq 0$, and
\begin{align}
  \| N' \|_F &\leq O \left( \sqrt{\frac{d^2 + \log(1/\beta)}{n}} \right) ~\mathrm{and}~ \| N \|_\Sigma \leq O \left( \frac{d^{2} \kappa \log (n/\beta)  \log^{1/2}(1/\beta)}{n \rho^{1/2}} \right) \; ,~\mathrm{and} \label{eq:dp-naive-frob} \\
\| N' \|_2 &\leq O \left( \sqrt{\frac{d + \log(1/\beta)}{n}} \right) ~\mathrm{and}~ \| N \|_2 \leq O \left( \frac{d^{3/2} \kappa \log (n/\beta) \log(1/\beta)}{n \rho^{1/2}} \right) \; . \label{eq:dp-naive-op}
\end{align}

\end{lemma}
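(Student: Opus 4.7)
The proof splits naturally into two parts, privacy and accuracy, that can be handled essentially independently.

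\emph{Privacy.} The plan is to bound the Frobenius sensitivity of $f(X) = \frac{1}{n}\sum_{i \in S} X_i X_i^\top$ and invoke the Gaussian mechanism. For neighboring datasets $X \sim X'$, at most two terms in the sum differ, and each term $X_i X_i^\top$ has Frobenius norm at most $\|X_i\|_2^2 \leq O(d\kappa \log(n/\beta))$ by the hard truncation defining $S$, so $\Delta_f \leq O(d\kappa \log(n/\beta)/n)$. Viewing the symmetric matrix as a vector over the upper triangle (with the standard $\sqrt{2}$ rescaling on off-diagonal entries), the $\GUE(\sigma^2)$ noise with the chosen $\sigma = \Theta(d\kappa \log(n/\beta)/(n\sqrt{\rho}))$ matches the scale required by Lemma~\ref{lem:gaussiandp} for $\rho$-zCDP, and the final projection onto the PSD cone is post-processing, hence preserves privacy by Lemma~\ref{lem:postprocessing}. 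Crucially, this argument uses no distributional assumption on the data.

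\emph{Accuracy setup.} I would first use condition \eqref{eq:cov-cond1} together with $\|\Sigma\|_2 \leq \kappa$ to conclude that $\|X_i\|_2^2 = \|\Sigma^{1/2}Y_i\|_2^2 \leq \kappa \|Y_i\|_2^2 \leq O(d\kappa \log(n/\beta))$, so no sample is truncated and $S = [n]$. Substituting $X_i = \Sigma^{1/2}Y_i$ gives $\frac{1}{n}\sum_i X_i X_i^\top = \Sigma^{1/2}\widehat{\Sigma}_Y \Sigma^{1/2}$, and defining $N' = \widehat{\Sigma}_Y - \id$ lets us write the pre-projection output as $M' = \Sigma^{1/2}(\id + N')\Sigma^{1/2} + N_{\mathrm{GUE}}$ with $N_{\mathrm{GUE}} \sim \GUE(\sigma^2)$. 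Conditions \eqref{eq:cov-cond2} and \eqref{eq:cov-cond3} directly supply the desired operator and Frobenius bounds on $N'$, and \eqref{eq:cov-cond2} also ensures $\id + N' \succeq 0$ in the relevant regime of $n$, so $\Sigma^{1/2}(\id + N')\Sigma^{1/2}$ is PSD.

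\emph{Controlling the noise.} For the Frobenius norm, $\|N_{\mathrm{GUE}}\|_F^2$ is a weighted sum of independent chi-squared variables with expectation $\Theta(d^2 \sigma^2)$, so Fact~\ref{fact:chi-squared} yields $\|N_{\mathrm{GUE}}\|_F \leq O(d\sigma \log^{1/2}(1/\beta))$ with probability $1-\beta$. Since $\Sigma \succeq \id$ forces $\|\Sigma^{-1/2}\|_2 \leq 1$, the submultiplicative inequality $\|ABC\|_F \leq \|A\|_2 \|C\|_2 \|B\|_F$ gives $\|N_{\mathrm{GUE}}\|_\Sigma \leq \|N_{\mathrm{GUE}}\|_F$. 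For the operator norm, I would apply Theorem~\ref{thm:GUE} with $A$ just large enough that $C\exp(-cAd) \leq \beta$, which produces $\|N_{\mathrm{GUE}}\|_2 \leq O(\sigma\sqrt{d}\log(1/\beta))$ uniformly across the regimes of $d$ versus $\log(1/\beta)$. The final step is to absorb the PSD projection: because $\Sigma^{1/2}(\id + N')\Sigma^{1/2}$ lies in the convex PSD cone, the Euclidean projection is $1$-Lipschitz in both $\|\cdot\|_F$ and $\|\cdot\|_2$, so defining $N := M - \Sigma^{1/2}(\id + N')\Sigma^{1/2}$ yields $\|N\|_F \leq \|N_{\mathrm{GUE}}\|_F$ and $\|N\|_2 \leq \|N_{\mathrm{GUE}}\|_2$. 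Plugging the chosen $\sigma$ into these bounds gives exactly the quantities in \eqref{eq:dp-naive-frob} and \eqref{eq:dp-naive-op}.

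I expect the main subtlety to be the bookkeeping around the projection: the decomposition $M = \Sigma^{1/2}(\id + N')\Sigma^{1/2} + N$ is only useful because $\Sigma^{1/2}(\id + N')\Sigma^{1/2}$ itself already lies in the PSD cone, which is what lets the $1$-Lipschitz contraction transfer both the Frobenius and operator norm control from $N_{\mathrm{GUE}}$ to $N$ simultaneously. Apart from that, everything is a direct combination of the sensitivity analysis, the regularity conditions of Fact~\ref{fact:gaussian-facts}, and the two GUE tail inequalities.
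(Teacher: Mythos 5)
Your proof is correct and follows essentially the same route as the paper's: identify the sensitivity for privacy, observe that \eqref{eq:cov-cond1} rules out any truncation so $M'$ has the form $\Sigma^{1/2}\widehat\Sigma_Y\Sigma^{1/2} + N_{\mathrm{GUE}}$, extract the $N'$ bounds from \eqref{eq:cov-cond2}--\eqref{eq:cov-cond3}, control $N_{\mathrm{GUE}}$ with Fact~\ref{fact:chi-squared} and Theorem~\ref{thm:GUE} plus $\|\cdot\|_\Sigma\le\|\cdot\|_F$ when $\Sigma\succeq\id$, and then contract through the PSD projection using exactly the non-expansiveness property the paper invokes. The one tiny inaccuracy is cosmetic: you don't need \eqref{eq:cov-cond2} to certify $\Sigma^{1/2}(\id+N')\Sigma^{1/2}\succeq 0$ — it equals the empirical covariance $\frac{1}{n}\sum_i X_iX_i^\top$ and is PSD unconditionally, which is what the paper uses and avoids any regime restriction on $n$.
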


\begin{proof}
We first prove the privacy guarantee.
Given two neighboring data sets $X, X'$ of size $n$ which differ in that one contains $X_i$ and the other contains $X'_i$, the truncated empirical covariance of these two data sets can change in Frobenius norm by at most
\[
\left\| \frac{1}{n} \left(X_iX_i^\top - X'_i (X'_i)^\top \right) \right\|_F \leq \frac{1}{n} \| X_i \|_2^2 + \frac{1}{n} \| X_i' \|_2^2 \leq O \left( \frac{d \kappa \log (n/\beta) }{n} \right) \; .
\]
\noindent
Thus the privacy guarantee follows immediately from Lemma~\ref{lem:gaussiandp}.

We now prove correctness.
Recall $M'$ is the original noised covariance before projection back to the SDP cone.
We first prove that $M'$ satisfies this form, with $\Sigma^{1/2} (I + N') \Sigma^{1/2} = \frac{1}{n} \sum_{i = 1}^n X_i X_i^\top$.
Clearly this implies that $\Sigma^{1/2} (I + N') \Sigma^{1/2} \succeq 0$.
Since \eqref{eq:cov-cond1} holds, we have $S = [n]$.  The first inequality in \eqref{eq:dp-naive-frob} now follows from Lemma~\ref{lem:rotations-cov-conds}, and the second follows from Fact~\ref{fact:chi-squared} and since $\| N \|_\Sigma \leq \| N \|_F$, as $\Sigma \succeq I$.  By a similar logic, \eqref{eq:dp-naive-op} holds since we can apply Lemma~\ref{lem:rotations-cov-conds} and Theorem~\ref{thm:GUE}.
Finally, to argue about $M$, simply observe that $\Sigma^{1/2} (I + N') \Sigma^{1/2}$ is PSD, and projection onto the PSD cone can only decrease distance (in either spectral norm or Frobenius norm) to any element in the PSD cone.
\end{proof}

The following is an immediate consequence of Lemma~\ref{lem:dp-naive}, seen by noting that $\|\Sigma - M\|_\Sigma = \|\Sigma^{-1/2} N \Sigma^{-1/2} + N'\|_F \leq \|N\|_\Sigma + \|N'\|_F$.
\begin{thm} \label{thm:naivepce}
For every $\rho, \beta, \kappa > 0$, the algorithm $\NaivePCE_{\rho,\beta,\kappa}$ is $\rho$-zCDP and, when given
$$
n = O\left( \frac{d^2 + \log(\frac{1}{\beta})}{\alpha^2} + \frac{\kappa d^2 \polylog(\frac{\kappa d}{\alpha \beta \rho})}{\alpha \rho^{1/2}} \right), 
$$ 
samples from $\normal(0,\Sigma)$ satisfying $\id \preceq \Sigma \preceq \kappa \id$, with probability at least $1- O(\beta)$, it returns $M$ such that $\| \Sigma - M \|_{\Sigma} \leq O(\alpha).$
\end{thm}

\subsection{A Private Recursive Preconditioner}

When $\kappa$ is a constant, Theorem~\ref{thm:naivepce} says that \NaivePCE~privately estimates the covariance of a Gaussian with little overhead compared to non-private estimation.  
In this section we will show how to nearly eliminate the dependence on the covariance by privately learning a \emph{preconditioner} $A$ such that $\id \preceq A \Sigma A \preceq 1000 \id$.  
Once we have this preconditioner, we can reduce the condition number of the distribution to a constant.
In this state, we can apply \NaivePCE~to estimate the covariance at no cost in $\kappa$.

\subsubsection{Reducing the Condition Number by a Constant}

Our preconditioner works recursively.  The main ingredient in the recursive construction is an algorithm, \PPreCond{} (Algorithm~\ref{alg:private-precond}) that privately estimates a matrix $A$ such that the condition number of $A\Sigma A$ improves over that of $\Sigma$ by a constant factor.  Once we have this primitive we can apply it recursively in a straightforward way.  Note that in Algorithm~\ref{alg:private-precond}, when we apply \NaivePCE~to obtain a weak estimate of $\Sigma$, we use too few samples for \NaivePCE~to obtain a good estimate of $\Sigma$ on its own.
\begin{algorithm}[h!] \label{alg:private-precond}
\caption{Private Preconditioning $\PPreCond_{\rho,\beta,\kappa,K}(X)$}
\KwIn{A set of $n$ samples $X_1, \ldots, X_n$ from an unknown Gaussian.  Parameters $\rho, \beta, \kappa, K > 0$.}
\KwOut{A symmetric matrix $A$.}\vspace{10pt}
Let $Z \gets \NaivePCE_{\rho,\beta,\kappa}(X_1, \ldots, X_n)$\\
Let $(\lambda_1,v_1), \dots, (\lambda_d,v_d)$ be the eigenvalues and the corresponding eigenvectors of $Z$\\
Let $V \gets \mathrm{span} \left( \left\{v_i : \lambda_i \geq \frac{\kappa}{2} \right\} \right) \subseteq \R^d$

\vspace{10pt}
\Return the pair $(V,A)$ where 
$$A = \frac{1}{\sqrt{K}} \Pi_{V} + \Pi_{V^\perp}$$
\end{algorithm}
\noindent

The guarantee of Algorithm~\ref{alg:private-precond} is captured in the following theorem.
\begin{thm}
\label{thm:dp-precondition}
For every $\rho,\beta,\kappa,K > 0$, $\PPreCond_{\rho,\beta,\kappa,K}(X)$ satisfies $\rho$-zCDP and, if $X_1,\dots,X_n$ are sampled i.i.d.\ from $\normal(0,\Sigma)$ for $\id \preceq \Sigma \preceq \kappa \id$ and satisfy \eqref{eq:cov-cond1}--\eqref{eq:cov-cond3}, then with probability at least $1 - O(\beta)$ it outputs $(V,A)$ such that
\begin{equation}
\label{eq:dp-preconditioning}
(1 - \psi)^2 (1 - \Gamma) \id \preceq A \Sigma A \preceq (1 + \psi) \cdot \kappa \left( \max\left( \frac{1}{K}, \frac{1}{2} \right) + \varphi \right) \id
\end{equation}
where
$$
\varphi = O \left( \frac{d^{3/2} \log (n / \beta) \log (1 / \beta)}{n \rho^{1/2}} \right),
$$
$$
\psi = O \left( \sqrt{\frac{d + \log(1 / \beta)}{n}} \right),\mbox{ and }
$$
$$
\Gamma = \max \left\{ \frac{2 K}{(1/2 - \varphi) \kappa}~, \frac{16 K \varphi^2}{(1/2 - \varphi)^2} \right\}$$
In particular, if $\kappa > 1000$, and $K$ is an appropriate constant, and
$$n \geq O\left(\frac{d^{3/2} \polylog(\frac{d}{\rho \beta})}{\rho^{1/2}} \right)$$
then $1.1A$ is such that $\id \preceq (1.1A) \Sigma (1.1A) \preceq \frac{7}{10} \kappa \id$.
\end{thm}

\begin{proof}
Privacy follows since we are simply post-processing the output of Algorithm~\ref{alg:naivePCE} (Lemma~\ref{lem:postprocessing}).
Thus it suffices to prove correctness.  We assume that \eqref{eq:cov-cond1}--\eqref{eq:cov-cond3} hold simultaneously.  By Lemma~\ref{lem:dp-naive}, \eqref{eq:dp-naive-frob}--\eqref{eq:dp-naive-op} hold simultaneously for the matrix $Z$ except with probability $O(\beta)$.  We will condition on these events throughout the remainder of the proof.  Observe that \eqref{eq:cov-cond2} implies that $\Sigmahat = \frac{1}{n} \sum_{i = 1}^n X_i X_i^\top$ is non-singular.

We will prove the upper bound and lower bound in \eqref{eq:dp-preconditioning} in two separate lemmata.

\begin{lemma}
\label{lem:upper-bound}
Let $V,A$ be as in Algorithm~\ref{alg:private-precond}.
Then, conditioned on \eqref{eq:cov-cond1}--\eqref{eq:dp-naive-op}, with probability $1 - O(\beta)$, we have
\[
 \left\| A \Sigma A \right\|_2 \leq (1 + \psi) \cdot \kappa \left( \max\left( \frac{1}{K}, \frac{1}{2} \right) + \varphi \right) \; .
\]
 \end{lemma}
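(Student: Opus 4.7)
The plan is to exploit two complementary consequences of the decomposition of $Z$ from Lemma~\ref{lem:dp-naive}, combined with the fact that $V$ and $V^\perp$ are orthogonal eigenspaces of $Z$. Specifically, I would write $Z = \Sigma^{1/2}(I+N')\Sigma^{1/2} + N$ with $\|N'\|_2 \le \psi$ and $\|N\|_2 \le O(\kappa \varphi)$, and first extract the bilinear perturbation bound: for any vectors $x,y \in \R^d$,
\[
|x^\top Z y - x^\top \Sigma y| \le \psi \, \|\Sigma^{1/2}x\|_2 \, \|\Sigma^{1/2}y\|_2 + O(\kappa \varphi) \, \|x\|_2 \, \|y\|_2.
\]

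From this I would derive two ingredients. First, for any unit $v \in V^\perp$: since $v$ is a linear combination of eigenvectors of $Z$ with eigenvalue below $\kappa/2$, we have $v^\top Z v \le \kappa/2$, and rearranging the perturbation bound with $x = y = v$ yields $v^\top \Sigma v \le \kappa(1/2+\varphi)(1+\psi) =: q_0$, after absorbing constants into the big-$O$ hidden in $\psi$ and $\varphi$. Second, for any $v_V \in V$ and $v_{V^\perp} \in V^\perp$, the orthogonality of $Z$'s eigenspaces forces $v_V^\top Z v_{V^\perp} = 0$; combining this with the perturbation bound and the spectral bounds $v_V^\top \Sigma v_V \le \kappa \|v_V\|_2^2$ and $v_{V^\perp}^\top \Sigma v_{V^\perp} \le q_0 \|v_{V^\perp}\|_2^2$ gives
\[
|v_V^\top \Sigma v_{V^\perp}| \le \psi \sqrt{\kappa q_0} \, \|v_V\|_2\|v_{V^\perp}\|_2 + O(\kappa\varphi)\|v_V\|_2\|v_{V^\perp}\|_2 = O\big(\kappa(\psi+\varphi)\big)\|v_V\|_2\|v_{V^\perp}\|_2.
\]

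With these ingredients, for any unit $v$ I would decompose $v = v_V + v_{V^\perp}$ and expand
\[
v^\top A \Sigma A v = \tfrac{1}{K} v_V^\top \Sigma v_V + \tfrac{2}{\sqrt{K}} v_V^\top \Sigma v_{V^\perp} + v_{V^\perp}^\top \Sigma v_{V^\perp}.
\]
Bounding the first term by $(\kappa/K)\|v_V\|_2^2$ via $\Sigma \preceq \kappa \id$, the third by $q_0 \|v_{V^\perp}\|_2^2$, and the cross term by $O(\kappa(\psi+\varphi))/\sqrt{K}$ via $2\|v_V\|_2\|v_{V^\perp}\|_2 \le \|v_V\|_2^2 + \|v_{V^\perp}\|_2^2 = 1$, the sum is at most $\max(\kappa/K,\, q_0) + O(\kappa(\psi+\varphi))$. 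This quantity is in turn at most $(1+\psi)\kappa(\max(1/K,\,1/2) + \varphi)$ after absorbing the additive error into the big-$O$ constants already present in the definitions of $\psi$ and $\varphi$.

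The main obstacle is the cross term. A naive Cauchy--Schwarz bound using only $v_V^\top \Sigma v_V \le \kappa$ and $v_{V^\perp}^\top \Sigma v_{V^\perp} \le q_0$ yields $|v_V^\top \Sigma v_{V^\perp}| \le \sqrt{\kappa q_0}\|v_V\|_2\|v_{V^\perp}\|_2 = \Theta(\kappa)$, which would only give $\|A\Sigma A\|_2 \lesssim \kappa/K + \kappa/2 \approx \kappa$ rather than the desired $\kappa/2$. The critical observation that rescues the bound is that $v_V^\top Z v_{V^\perp} = 0$ holds for free because $V$ and $V^\perp$ are spans of disjoint eigenspaces of the symmetric matrix $Z$; this transfers the smallness of the cross term to the perturbation scale $O(\kappa(\psi+\varphi))$ rather than the much larger scale $\sqrt{\kappa q_0}$.
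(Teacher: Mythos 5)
Your proof is correct, and it relies on exactly the same crucial observation as the paper—that $V$ and $V^\perp$ are spans of disjoint eigenspaces of the symmetric matrix $Z$, which forces $\Pi_V Z \Pi_{V^\perp}=0$ and thereby controls what would otherwise be a $\Theta(\kappa)$ cross term. The mechanics, however, are genuinely different. The paper first passes from $\Sigma$ to the empirical covariance $\Sigmahat$ via Lemma~\ref{lem:rotations-cov-conds}, then writes $A\Sigmahat A = AZA - ANA$ and bounds the two pieces in operator norm; the eigenspace orthogonality is used once, as the clean matrix identity $AZA = \tfrac1K\Pi_V Z\Pi_V + \Pi_{V^\perp}Z\Pi_{V^\perp}$, from which $\|AZA\|_2 \le \kappa\max(1/K,1/2)$ falls out of the eigenvalue threshold defining $V$, and $\|ANA\|_2 \le \kappa\varphi$ from Theorem~\ref{thm:GUE}. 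You instead stay with $\Sigma$ throughout: you write $Z=\Sigma^{1/2}(I+N')\Sigma^{1/2}+N$, extract a bilinear perturbation estimate that transfers quadratic-form information from $Z$ to $\Sigma$, and expand $v^\top A\Sigma A v$ coordinate-by-coordinate in the $\Pi_V/\Pi_{V^\perp}$ decomposition, handling the cross term by hand via $v_V^\top Z v_{V^\perp}=0$. This is in fact closer in spirit to the paper's own proof of the companion lower bound (Lemma~\ref{lem:lower-bound}), which does precisely this vector-by-vector expansion of $u^\top A\Sigmahat A u$; so you have, in effect, given a unified treatment of both directions. The cost is a somewhat longer argument and a slightly worse constant: you pick up an extra $O(\kappa(\psi+\varphi))$ from the cross term that the matrix-norm route avoids, and you handle this by ``absorbing'' it into the hidden $O(\cdot)$ constants in $\psi$ and $\varphi$. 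That is legitimate here (enlarging the constants in $\psi,\varphi$ only weakens the lower bound in Lemma~\ref{lem:lower-bound}, so the two halves of Theorem~\ref{thm:dp-precondition} remain consistent), but it would be worth one explicit sentence noting that the constants must be chosen large enough relative to $K$ and to the Cauchy--Schwarz losses so that the absorption goes through.
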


\begin{lemma}
\label{lem:lower-bound}
Let $V,A$ be as in Algorithm~\ref{alg:private-precond}.
Then, conditioned on \eqref{eq:cov-cond1}--\eqref{eq:dp-naive-op}, with probability $1 - O(\beta)$, we have
\begin{equation}
\label{eq:lower-bound}
A \Sigma A \succeq (1 - \psi)^2 (1 - \Gamma) \id \; .
\end{equation}
 \end{lemma}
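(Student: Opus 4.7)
The plan is to prove the PSD inequality pointwise: for every unit vector $u \in \R^d$, show $u^\top A \Sigma A u \geq (1-\psi)^2(1-\Gamma)$. Decompose $u = u_V + u_{V^\perp}$ along the complementary $Z$-invariant eigenspaces defining $V$, so that $Au = \tfrac{1}{\sqrt{K}} u_V + u_{V^\perp}$ and $\|Au\|^2 = \|u_V\|^2/K + \|u_{V^\perp}\|^2$. From Lemma~\ref{lem:rotations-cov-conds} applied with $M = \Sigma$ one has $\Sigmahat \preceq (1+\psi)\Sigma$, hence $\Sigma \succeq \tfrac{1}{1+\psi}\Sigmahat = \tfrac{1}{1+\psi}(Z - N)$, and the problem reduces to lower-bounding $(Au)^\top Z (Au)$ together with controlling the noise quadratic form $(Au)^\top N (Au)$.

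I would then combine two complementary lower bounds on $u^\top A \Sigma A u$. The first exploits the definition of $V$: since $V$ and $V^\perp$ are invariant for the symmetric matrix $Z$, $A$ commutes with $Z$, so
\[
(Au)^\top Z (Au) = \frac{1}{K}\,u_V^\top Z u_V + u_{V^\perp}^\top Z u_{V^\perp} \geq \frac{\kappa}{2K}\|u_V\|^2,
\]
using $u_V^\top Z u_V \geq (\kappa/2)\|u_V\|^2$ by definition of $V$ and $u_{V^\perp}^\top Z u_{V^\perp} \geq 0$ since $Z \succeq 0$ after the PSD projection in \NaivePCE{}. Coupled with $|(Au)^\top N(Au)| \leq \|N\|_2 \|Au\|^2 \leq \kappa\varphi \|Au\|^2$ from \eqref{eq:dp-naive-op}, this yields the $V$-directional bound
\[
u^\top A \Sigma A u \;\geq\; \frac{\kappa}{1+\psi}\left[\frac{(1/2-\varphi)\|u_V\|^2}{K} - \varphi \|u_{V^\perp}\|^2\right].
\]
The second, isotropic bound uses only the hypothesis $\Sigma \succeq I$, upgraded to $\Sigma \succeq \tfrac{1-\psi}{1+\psi} I \succeq (1-\psi)^2 I$ via $\Sigmahat \succeq (1-\psi)\Sigma$, giving
\[
u^\top A \Sigma A u \;\geq\; (1-\psi)^2 \|Au\|^2 \;=\; (1-\psi)^2\bigl(1 - \|u_V\|^2(1-1/K)\bigr).
\]

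Parameterizing by $s := \|u_V\|^2 \in [0,1]$, the first bound is increasing in $s$ and the second is decreasing, so the infimum over $u$ of $u^\top A \Sigma A u$ is attained at the unique crossing point of the two expressions. The two terms defining $\Gamma$ are exactly those that emerge from equating the bounds: $\tfrac{2K}{(1/2-\varphi)\kappa}$ captures the crossing in the regime where $\kappa$ is the binding parameter (small crossing $s^\star$), while $\tfrac{16 K \varphi^2}{(1/2-\varphi)^2}$ captures the regime where the noise term $\varphi\|u_{V^\perp}\|^2$ pushes the crossing to larger $s^\star$. Taking the maximum of the two accounts for both regimes simultaneously; substituting the crossing value of $s$ into the isotropic bound gives $u^\top A \Sigma A u \geq (1-\psi)^2(1-\Gamma)$, which is the claimed PSD inequality.

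The main obstacle is the algebraic bookkeeping in this final step: the two-term form of $\Gamma$ reflects a genuine bifurcation depending on whether $\kappa\varphi$ is small relative to $1$ or not, and care is needed because in the latter regime any naive perturbation bound on $\Pi_{V^\perp}Z\Pi_{V^\perp}$ becomes vacuous, forcing us to rely solely on $Z \succeq 0$ together with the vanishing of the cross blocks $\Pi_V Z \Pi_{V^\perp}$. Beyond that, no new randomness is introduced, so no further concentration inequalities are needed past those furnished by conditioning on \eqref{eq:cov-cond1}--\eqref{eq:dp-naive-op}.
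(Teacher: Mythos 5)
Your high-level strategy is the paper's: lower bound $u^\top A\Sigma A u$ uniformly over unit vectors by combining a bound that is useful when $s = \|\Pi_V u\|_2^2$ is not too small (exploiting the definition of $V$) with a bound that is useful when $s$ is small (exploiting $\Sigmahat \succeq (1-\psi)\id$), and the dichotomy $s \gtrless \Gamma$ is exactly the paper's two-case split. The problem is that your ``$V$-directional'' bound is too weak, and the weakness you yourself flag as ``the main obstacle'' in your closing paragraph is not actually avoided by your argument. By writing $(Au)^\top\Sigmahat\,(Au) = (Au)^\top Z (Au) - (Au)^\top N (Au)$ globally and then invoking $\abs{(Au)^\top N(Au)} \leq \|N\|_2\|Au\|_2^2 \leq \kappa\varphi\|Au\|_2^2$, you pick up a loss term $-\kappa\varphi\|u_{V^\perp}\|_2^2$. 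Since $\kappa$ is unbounded while $\varphi$ shrinks only polylogarithmically in $n$, the product $\kappa\varphi$ can be arbitrarily large --- this is exactly the interesting regime --- and this penalty then dwarfs the gain $\tfrac{\kappa(1/2-\varphi)}{K}s$ for moderate $s$. Concretely, with $\kappa = 10^6$, $K = 4$, $\varphi = 0.01$ and $\psi$ negligible, one has $\Gamma \approx 0.027$ so the target is $(1-\psi)^2(1-\Gamma) \approx 0.97$, yet your two bounds cross near $s^\star \approx 0.075$ where their common value is about $0.94 < 0.97$. The claimed inequality $A\Sigma A \succeq (1-\psi)^2(1-\Gamma)\id$ therefore does not follow from your pair of bounds, and no further optimization over $s$ can close the gap.

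What repairs it is precisely the block-level bookkeeping you gesture at but do not carry out: expand $u^\top A\Sigmahat A u$ into the four $\Pi_V,\Pi_{V^\perp}$ blocks and, crucially, never introduce $N$ on the $V^\perp\!\times\!V^\perp$ block at all --- just use $u^\top\Pi_{V^\perp}\Sigmahat\Pi_{V^\perp}u \geq 0$ when $s \geq \Gamma$, and $u^\top\Pi_{V^\perp}\Sigmahat\Pi_{V^\perp}u \geq (1-\psi)\|\Pi_{V^\perp}u\|_2^2$ when $s < \Gamma$. The noise then enters only in the $V\!\times\!V$ block, where it is absorbed into the factor $(1/2-\varphi)$, and in the off-diagonal blocks, where $\Pi_V Z\Pi_{V^\perp} = 0$ forces $\Pi_V\Sigmahat\Pi_{V^\perp} = -\Pi_V N\Pi_{V^\perp}$, giving a cross-term of size at most $\tfrac{2\kappa\varphi}{\sqrt{K}}\|\Pi_V u\|_2\|\Pi_{V^\perp}u\|_2 \leq \tfrac{2\kappa\varphi}{\sqrt{K}}\sqrt{s}$. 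Unlike your $\kappa\varphi\|u_{V^\perp}\|_2^2$, this penalty vanishes as $s \to 0$, and as $s$ crosses $\Gamma$ the positive $V\!\times\!V$ contribution overtakes it; matching these two effects is exactly where the two terms in $\Gamma$ come from. One further note: this sharper argument does not rely on $Z \succeq 0$ from the PSD projection in $\NaivePCE$ --- it uses only the $V\!\times\!V$ block of $Z$ (eigenvalues $\geq \kappa/2$ by construction) and $\Sigmahat \succeq 0$ --- whereas if you insist on $Z$ being the projected matrix, the identity $Z = \Sigmahat + N$ you rely on is no longer exact.
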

 \noindent
These two lemmata therefore together imply Theorem~\ref{thm:dp-precondition}.
We now turn our attention to the proofs of these lemmata.
Let $N$ be the Gaussian noise added to the empirical covariance in $\NaivePCE$, so that $Z = \Sigmahat + N$.

\begin{proof}[Proof of Lemma~\ref{lem:upper-bound}]
By Lemma~\ref{lem:rotations-cov-conds} (with $M = \Sigma$), it suffices to show that 
\[
\| A \Sigmahat A \|_2 \leq \kappa \left( \max\left( \frac{1}{K}, \frac{1}{2} \right) + \varphi \right) \; .
\]
But with probability $1 - \beta$, we have 
\begin{align*}
\left\| A \Sigmahat A \right\|_2 &\leq \left\| A Z A \right\|_2 + \left\| A N A \right\|_2 \\
&\stackrel{(a)}{\leq} \left\| A Z A \right\|_2 + \kappa \varphi \; ,
\end{align*}
where (a) follows since $\| A \|_2 \leq 1$ and Theorem~\ref{thm:GUE}.
We now observe that since $V$ is a span of eigenvectors of $Z$, we have
\[
A Z A = \frac{1}{K} \Pi_V Z \Pi_V + \Pi_{V^\perp} Z \Pi_{V^\perp} \; ,
\]
and so by our choice of $V$, we have $\| A Z A \|_2 \leq \kappa \cdot \max (1 / K, 1/2)$.
This completes our proof.
\end{proof}

\noindent We now prove the lower bound in Theorem \ref{thm:dp-precondition}:
\begin{proof}[Proof of Lemma~\ref{lem:lower-bound}]
As before, by Lemma~\ref{lem:rotations-cov-conds}, it suffices to prove that
\[
A \Sigmahat A \succeq (1 - \psi) \left( 1 - \Gamma \right) \id \; .
\]
This is equivalent to showing that for all unit vectors $u$, we have 
\[
u^T A \Sigmahat A u \geq (1 - \psi) \left( 1 - \Gamma \right) \; .
\]
Fix any such $u$.
Expanding, we have
\begin{equation}
u^T A \Sigmahat A u = \frac{1}{K} u^T \Pi_{V} \Sigmahat \Pi_V u + \frac{1}{K^{1/2}} u^T \Pi_{V} \Sigmahat \Pi_{V^\perp} u + \frac{1}{K^{1/2}} u^T \Pi_{V^\perp} \Sigmahat \Pi_{V} u + u^T \Pi_{V^\perp} \Sigmahat \Pi_{V^\perp} u \; . \label{eq:lb-expanded}
\end{equation}
The first and last terms are non-negative since $\Sigmahat$ is PSD, but the other terms may be negative, so we need to control their magnitude.
Note that
\begin{align*}
u^T \Pi_{V} \Sigmahat \Pi_V u &= u^T \Pi_V Z \Pi_V u - u^T \Pi_V N \Pi_V u  \\
&\geq \frac{\kappa}{2} \| \Pi_V u \|_2^2 - \kappa \varphi \| \Pi_V u \|_2^2 = \kappa \left( \frac{1}{2} - \varphi \right) \left\| \Pi_V u \right\|_2^2 \; .
\end{align*}
where the inequality follows from our choice of $V$ (the ``large'' directions of $Z$), and Theorem~\ref{thm:GUE} (bounding the spectral norm of $N$).
On the other hand, we have
\begin{align*}
\left| \frac{1}{K^{1/2}} u^T \Pi_{V} \Sigmahat \Pi_{V^\perp} u \right| &= \left| \frac{1}{K^{1/2}} u^T \Pi_{V} (Z - N) \Pi_{V^\perp} u \right| \\
&\stackrel{(a)}{=} \left| \frac{1}{K^{1/2}} u^T \Pi_V N \Pi_{V^\perp} u \right| \\
&\stackrel{(b)}{\leq} \frac{\kappa}{K^{1/2}} \varphi \| \Pi_V u \|_2 \| \Pi_{V^\perp} u \|_2 \\
&\leq \frac{\kappa}{K^{1/2}} \varphi \| \Pi_V u \|_2 \; ,
\end{align*}
where (a) follows since $\Pi_V Z \Pi_{V^\perp} = 0$, and (b) follows from Theorem~\ref{thm:GUE}.
Similarly, we have
\[
\left| \frac{1}{K^{1/2}} u^T \Pi_{V^\perp} \Sigmahat \Pi_{V} u \right| \leq \frac{\kappa}{K^{1/2}} \varphi \| \Pi_V u \|_2 \; .
\]
Thus, if we have $\| \Pi_V u \|_2^2 \geq \Gamma$, by our choice of $\Gamma$, we have
\begin{align*}
&\frac{1}{K} u^T \Pi_{V} \Sigmahat \Pi_V u + \frac{1}{K^{1/2}} u^T \Pi_{V} \Sigmahat \Pi_{V^\perp} u + \frac{1}{K^{1/2}} u^T \Pi_{V^\perp} \Sigmahat \Pi_{V} u \\
\geq~&\frac{\kappa}{K} \left( \frac{1}{2} - \varphi \right) \left\| \Pi_V u \right\|_2^2 - 2 \frac{\kappa}{K^{1/2}} \varphi \| \Pi_V u \|_2 \\
\geq~&\frac{\kappa}{2 K} \left( \frac{1}{2} - \varphi \right) \left\| \Pi_V u \right\|_2^2 \\
\geq~&1 \; .
\end{align*}
Thus in this case the claim follows since the final term in~\eqref{eq:lb-expanded} is nonnegative since $\Sigmahat$ is PSD.

Now consider the case where
\[
\| \Pi_V u \|_2 < \Gamma \; ,
\]
or equivalently, since by the Pythagorean theorem we have $\| \Pi_V u \|_2^2  + \| \Pi_{V^\perp} u \|_2^2 = 1$, 
\[
\| \Pi_{V^\perp} u \|_2^2 > 1 - \Gamma \; .
\]
Then, since we have $\Sigmahat \succeq (1 - \psi) \id$ (Fact~\ref{fact:gaussian-facts}), we have
\begin{align*}
u^T A \Sigmahat A u &\geq (1 - \psi) u^T \left( \frac{1}{K} \Pi_V \Pi_V + \Pi_{V^\perp} \Pi_{V^\perp} \right) u \\
& \geq (1 - \psi) \| \Pi_{V^\perp} u \|_2^2 \\
&\geq (1 - \psi) \left( 1 - \Gamma \right) \; ,
\end{align*}
as claimed.
\end{proof}
\noindent
Combining Lemma~\ref{lem:upper-bound} and Lemma~\ref{lem:lower-bound} yield the desired conclusion.
\end{proof}

\subsubsection{Recursive Preconditioning}

Once we have \PPreCond, we can apply it recursively to obtain a private preconditioner, \PPC~(Algorithm~\ref{alg:PPC}) that reduces the condition number down to a constant.
\begin{algorithm}[h!] \label{alg:PPC}
\caption{Privately estimating covariance $\PPC_{\rho,\beta,\kappa}(X)$}
\KwIn{A set of $n$ samples $X_1, \ldots, X_n$ from an unknown Gaussian $\normal (0, \Sigma)$.  Parameters $\rho,\beta, \kappa > 0$}
\KwOut{A symmetric matrix $A$} \vspace{10pt}

Let  $T \gets O(\log \kappa)~~~\rho' \gets \rho/T~~~\beta' \gets \beta/T$\\
Let $K$ be the constant from Theorem~\ref{thm:dp-precondition}.\\
Let $\kappa^{(1)} \gets \kappa$ and let $X^{(1)}_i \gets X_i$ for $i = 1, \ldots, n$.\\
\For{$t = 1, \ldots, T$}{
Let $\tilde{A}^{(t)} \gets \PPreCond_{\rho', \beta', \kappa^{(t)},K}(X^{(t)}_1, \ldots, X^{(t)}_n)$, and let $A^{(t)} \gets 1.1 \tilde{A}^{(t)}$.\\
Let $\kappa^{(t+1)} \gets 0.7 \kappa^{(t)}$\\
Let $X^{(t+1)}_i \gets A^{(t)} X^{(t)}_i$ for $i = 1, \ldots, n$.}

\vspace{10pt}
\Return{The matrix $A = \prod_{t=1}^{T} A^{(t)}$}.
\end{algorithm}

\begin{theorem}
\label{thm:cov-kappa}
For every $\rho, \alpha, \beta,\kappa > 0$, $\PPC_{\rho,\beta,\kappa}$ satisfies $\rho$-zCDP, and when given
\[
n = \Omega\left( \frac{d^{3/2} \log^{1/2}(\kappa) \polylog(\frac{d \log \kappa}{
\rho \beta}) }{\rho^{1/2}} \right)
\]
samples $X_1,\dots,X_n \sim \normal(0,\Sigma)$ for $\id \preceq \Sigma \preceq \kappa \id$, with probability $1-O(\beta)$ it outputs a symmetric matrix $A$ such that $\id \preceq A\Sigma A \preceq 1000 \id$.
\end{theorem}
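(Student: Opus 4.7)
The plan is to prove privacy and correctness separately, with a straightforward induction on the loop index $t$ being the heart of the correctness argument.

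For privacy, the algorithm is an adaptive composition of $T = O(\log \kappa)$ calls to $\PPreCond$, each with privacy parameter $\rho' = \rho/T$. Applying the same linear map $A^{(t)}$ to every sample preserves the neighboring relation on datasets, and $A^{(t)}$ itself is computed from the outputs of previous rounds by post-processing (Lemma~\ref{lem:postprocessing}). Hence each round of the for-loop is $\rho'$-zCDP by Theorem~\ref{thm:dp-precondition}, and adaptive composition of zCDP (Lemma~\ref{lem:dpcomp}) gives $T\rho' = \rho$-zCDP overall.

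For correctness, I would maintain the inductive invariant that at the start of iteration $t$, the samples $X^{(t)}_i$ are i.i.d.\ from $\normal(0, \Sigma^{(t)})$, where $\Sigma^{(t)} = B^{(t)} \Sigma B^{(t)}$ with $B^{(t)} = \prod_{s=1}^{t-1} A^{(s)}$, and $\id \preceq \Sigma^{(t)} \preceq \kappa^{(t)} \id$. The base case is immediate from the hypothesis. The key observation that makes the inductive step easy is that the regularity conditions \eqref{eq:cov-cond1}--\eqref{eq:cov-cond3} are statements about the whitened samples $Y_i = \Sigma^{-1/2} X_i = (\Sigma^{(t)})^{-1/2} X^{(t)}_i$, which do not change from iteration to iteration. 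Thus conditioning once on the event of Fact~\ref{fact:gaussian-facts} (with parameter $\beta$) at the outset, the hypotheses of Theorem~\ref{thm:dp-precondition} apply at every iteration. Invoking Theorem~\ref{thm:dp-precondition} with parameters $(\rho', \beta', \kappa^{(t)}, K)$ then yields that $\tilde A^{(t)}$, scaled by $1.1$ to form $A^{(t)}$, satisfies $\id \preceq A^{(t)} \Sigma^{(t)} A^{(t)} \preceq \tfrac{7}{10}\kappa^{(t)}\id$, which is exactly $\id \preceq \Sigma^{(t+1)} \preceq \kappa^{(t+1)} \id$ by the definition of $\Sigma^{(t+1)}$ and $\kappa^{(t+1)} = 0.7\kappa^{(t)}$. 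This closes the induction.

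Choosing $T = \lceil \log_{10/7}(\kappa/1000) \rceil = O(\log \kappa)$ forces $\kappa^{(T+1)} \leq 1000$, so after the final iteration the accumulated matrix $A = \prod_{t=1}^{T} A^{(t)}$ satisfies $\id \preceq A\Sigma A \preceq 1000\,\id$, as required. The sample complexity for each invocation of $\PPreCond$ is, by Theorem~\ref{thm:dp-precondition}, $\Omega(d^{3/2}\polylog(d/\rho'\beta')/\rho'^{1/2})$; substituting $\rho' = \rho/T$, $\beta' = \beta/T$, and $T = O(\log\kappa)$ gives the advertised bound $n = \Omega(d^{3/2}\log^{1/2}(\kappa)\polylog(d\log\kappa/\rho\beta)/\rho^{1/2})$. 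A union bound over the $T$ rounds (each failing with probability $O(\beta')$) together with the initial event of Fact~\ref{fact:gaussian-facts} keeps the overall failure probability at $O(\beta)$.

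The main obstacle to watch is making sure the inductive hypothesis really transfers cleanly between rounds, since the preconditioner $A^{(t)}$ depends on the data. The clean way around this is exactly the observation above: conditioning on the original whitened-sample regularity event once suffices, because the whitened samples are an invariant of the entire procedure. Everything else is a routine composition and union-bound calculation.
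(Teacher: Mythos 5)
Your proposal is correct and follows essentially the same route as the paper: condition once on the deterministic regularity events \eqref{eq:cov-cond1}--\eqref{eq:cov-cond3} for the original whitened samples, then apply Theorem~\ref{thm:dp-precondition} and composition round by round, closing the induction via $\kappa^{(t+1)} = 0.7\kappa^{(t)}$. Your explicit observation that the regularity conditions are an invariant of the procedure (because $(\Sigma^{(t)})^{-1/2} X^{(t)}_i$ can be chosen to equal $\Sigma^{-1/2} X_i$) is exactly the unstated justification the paper relies on when it says ``the theorem now follows by induction,'' so you have simply spelled out a step the paper elides.
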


\begin{proof}
Privacy is immediate from Theorem~\ref{thm:dp-precondition} and composition of $\rho$-zCDP (Lemma~\ref{lem:dpcomp}).

By Fact~\ref{fact:gaussian-facts}, \eqref{eq:cov-cond1}--\eqref{eq:cov-cond3} hold for the sample $X_1,\dots,X_n$ except with probability $O(\beta)$.  Define $\Sigma^{(1)} = \Sigma$ and recursively define $\Sigma^{(t)} = A^{(t-1)} \Sigma^{(t-1)} A^{(t-1)}$ to be the covariance after the $t$-th round of preconditioning.
    By the guarantee of $\PPreCond$ (Theorem~\ref{thm:dp-precondition}), a union bound, and our choice of $n$, we have that for every $t$, we obtain a matrix $A^{(t)}$ such that $$\id \preceq A^{(t)} \Sigma^{(t)} A^{(t)} \preceq 0.7 \kappa^{(t)} \id.$$
The theorem now follows by induction on $t$.
\end{proof}

\subsection{Putting It All Together}

We can now combine our private preconditioning algorithm with the na\"ive algorithm for covariance estimation to obtain a complete algorithm for covariance estimation.
\begin{algorithm}[h!] \label{alg:privatecovariance}
\caption{Private Covariance Estimator $\PGCEKappa_{\rho, \beta, \kappa}(X)$}
\KwIn{Samples $X_1,\dots,X_{n}$ from an unknown Gaussian $\normal(0, \Sigma)$.  Parameters $\rho,\beta, \kappa > 0$.}
\KwOut{A matrix $\wh{\Sigma}$ such that $\| \Sigma - \wh{\Sigma} \|_\Sigma \leq \alpha$.}
\vspace{10pt}
Let $\rho' \gets \rho/2$ and $\beta' \gets \beta/2$\\
Let $A \gets \PPC_{\rho', \beta', \kappa}(X_1,\dots,X_n)$ be the private preconditioner \\\vspace{10pt}
Let $Y_i \gets A X_i$ for $i = 1,\dots,n$.\\
Let $\wt\Sigma \gets \NaivePCE_{\rho', \beta', 1000}(Y_1,\dots,Y_n)$\vspace{10pt}

\Return $\wh\Sigma = A^{-1} \wt\Sigma A^{-1}$
\end{algorithm}

This algorithm has the following guarantee.
\begin{theorem}
\label{thm:pce-kappa}
For every $\rho,\beta,\kappa > 0$, $\PGCEKappa_{\rho.\beta,\kappa}(X)$ is $\rho$-zCDP and, when given
$$
n = O\left( \frac{d^{2} + \log(\frac{1}{\beta})}{\alpha^2} + \frac{d^2 \polylog(\frac{d}{\alpha \beta \rho})}{\alpha \rho^{1/2}} + \frac{d^{3/2}  \log^{1/2}(\kappa) \polylog(\frac{d \log \kappa}{
\rho \beta})}{\rho^{1/2}} \right)
$$
$X_1,\dots,X_n \sim \normal(0,\Sigma)$ for $\id \preceq \Sigma \preceq \kappa \id$, with probability $1 - O(\beta)$, it outputs $\wh\Sigma$ such that $\| \Sigma  - \wh\Sigma \|_{\Sigma} \leq O(\alpha)$.
\end{theorem}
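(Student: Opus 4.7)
\emph{Privacy.} The algorithm is an adaptive composition of $\PPC_{\rho/2,\beta/2,\kappa}$ and $\NaivePCE_{\rho/2,\beta/2,1000}$ (with the latter run on the post-processed samples $Y_i = AX_i$), followed by the data-independent post-processing $\wh\Sigma = A^{-1}\wt\Sigma A^{-1}$. Each stage is $(\rho/2)$-zCDP by Theorem~\ref{thm:cov-kappa} and Theorem~\ref{thm:naivepce}, so Lemma~\ref{lem:dpcomp} together with Lemma~\ref{lem:postprocessing} gives $\rho$-zCDP overall.

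\emph{Accuracy.} By Fact~\ref{fact:gaussian-facts}, the $X_i$ satisfy the regularity conditions \eqref{eq:cov-cond1}--\eqref{eq:cov-cond3} except with probability $O(\beta)$; I condition on this event throughout. For $n$ as in the theorem statement, Theorem~\ref{thm:cov-kappa} produces, with probability $1-O(\beta)$, a symmetric preconditioner $A$ with $\id \preceq A\Sigma A \preceq 1000\,\id$. Lemma~\ref{lem:rotations-cov-conds} applied with $M = A\Sigma A$ then transports the regularity conditions from the $X_i$ to the samples $Y_i = A X_i$ relative to $A\Sigma A$. Since the accuracy analysis of $\NaivePCE$ in Lemma~\ref{lem:dp-naive} depends on its input only through these regularity conditions, I can invoke Theorem~\ref{thm:naivepce} with $\kappa = 1000$ on the $Y_i$ to obtain $\wt\Sigma$ satisfying $\|A\Sigma A - \wt\Sigma\|_{A\Sigma A} \leq O(\alpha)$ whenever $n \geq O(d^2/\alpha^2 + d^2\,\polylog(d/(\alpha\beta\rho))/(\alpha \rho^{1/2}))$. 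Combining this requirement with the one from Theorem~\ref{thm:cov-kappa} yields the stated sample complexity.

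\emph{Transferring the error to $\Sigma$-norm, and the main obstacle.} The remaining and most substantive step is the exact identity $\|\Sigma - \wh\Sigma\|_\Sigma = \|A\Sigma A - \wt\Sigma\|_{A\Sigma A}$. Set $B = A\Sigma A$; by symmetry of $A$, we have $B = (A\Sigma^{1/2})(A\Sigma^{1/2})^\top$, so $C := A\Sigma^{1/2}$ is a (non-symmetric) square-root factor of $B$ with $C^{-1} = \Sigma^{-1/2}A^{-1}$ and $C^{-\top} = A^{-1}\Sigma^{-1/2}$. A direct computation gives
\[
C^{-1}(A\Sigma A - \wt\Sigma)C^{-\top} = I - \Sigma^{-1/2} A^{-1} \wt\Sigma A^{-1} \Sigma^{-1/2} = \Sigma^{-1/2}(\Sigma - \wh\Sigma)\Sigma^{-1/2}.
\]
The matrix $Q := B^{-1/2} C$ satisfies $QQ^\top = B^{-1/2}(CC^\top)B^{-1/2} = I$, so $Q$ is orthogonal and $C = B^{1/2} Q$; consequently $C^{-1}(A\Sigma A - \wt\Sigma)C^{-\top} = Q^\top B^{-1/2}(A\Sigma A - \wt\Sigma) B^{-1/2} Q$, whose Frobenius norm (being orthogonally invariant) equals $\|A\Sigma A - \wt\Sigma\|_{A\Sigma A}$. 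Taking the Frobenius norm of the display yields the claimed identity, and hence $\|\Sigma - \wh\Sigma\|_\Sigma \leq O(\alpha)$. The one subtlety worth flagging is that $A$ is computed from the same samples later fed to $\NaivePCE$, so the $Y_i$ are not literally i.i.d.\ from $\normal(0,A\Sigma A)$; this is precisely why I route through Lemma~\ref{lem:dp-naive}'s regularity-condition framework (combined with Lemma~\ref{lem:rotations-cov-conds}) rather than through an i.i.d.-based statement.
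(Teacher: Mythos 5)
Your proof is correct and follows the paper's high-level decomposition (precondition via $\PPC$, then apply $\NaivePCE$, then undo the transform), but you add two things the paper elides. First, you correctly observe that the preconditioner $A$ is computed from the \emph{same} samples later fed to $\NaivePCE$, so the transformed samples $Y_i = AX_i$ are not literally i.i.d.\ from $\normal(0, A\Sigma A)$ as the paper casually asserts; your fix---routing through the deterministic regularity conditions \eqref{eq:cov-cond1}--\eqref{eq:cov-cond3} together with Lemma~\ref{lem:rotations-cov-conds}---is exactly the right way to make this rigorous, since Lemma~\ref{lem:dp-naive}'s accuracy bounds depend on the data only through those conditions and the conditions are transported by any fixed or data-dependent linear map once they hold for the original samples. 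Second, the change-of-norm identity $\|\Sigma - \wh\Sigma\|_\Sigma = \|A\Sigma A - \wt\Sigma\|_{A\Sigma A}$ is asserted without proof in the paper; your derivation via the non-symmetric square root $C = A\Sigma^{1/2}$ and the orthogonal matrix $Q = B^{-1/2}C$ is a clean and correct way to establish it. One small note: when you say ``Lemma~\ref{lem:rotations-cov-conds} applied with $M = A\Sigma A$,'' the lemma as stated uses the symmetric square root $M^{1/2}$, whereas $AX_i = (A\Sigma^{1/2})(\Sigma^{-1/2}X_i)$ uses the non-symmetric factor $A\Sigma^{1/2}$; as you in effect use later, these differ only by right-multiplication by an orthogonal matrix, under which all three conclusions of the lemma are invariant, so the application is justified but would benefit from a one-line remark to that effect.
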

\begin{proof}
Privacy follows from Theorem~\ref{thm:dp-precondition} and composition of $\rho$-zCDP (Lemma~\ref{lem:dpcomp}).

By construction, the samples $Y_1,\dots,Y_n$ are i.i.d.\ from $\normal(0, A \Sigma A)$.  By Theorem~\ref{thm:cov-kappa}, and our choice of $n$, we have that except with probability $O(\beta)$, $A$ is such that $\id \preceq A\Sigma A \preceq 1000 \id$.  Therefore, combining the guarantees of \NaivePCE~with our choice of $n$ we obtain that, except with probability $O(\beta)$, $\wt\Sigma$ satisfies $\| A \Sigma A - \wt\Sigma \|_{A \Sigma A} \leq O(\alpha)$.  The theorem now follows because $\| \Sigma - \wh \Sigma \|_{\Sigma} = \| \Sigma - A^{-1} \wt \Sigma A^{-1} \|_{\Sigma} = \| A \Sigma A - \wt \Sigma \|_{A \Sigma A} \leq O(\alpha).$
\end{proof}

\ifnum\epsdeltastuff=1
\section{Condition-Number-Free Preconditioning} \label{sec:gaussian-cov-nokappa}

In this section we build on the results of
Section~\ref{sec:gaussian-cov} to obtain a
private preconditioner for Gaussians that
have potentially \emph{unbounded} covariance.
That is, we require $\Sigma \succeq \id$ but
do not require any upper bound on $\| \Sigma \|_2$.
Note that, since we don't need an upper bound
on $\Sigma$, the lower bound $\Sigma \succeq \id$
is arbitrary, and can be replaced with
$\Sigma \succeq \gamma \id$ for any $\gamma > 0$.

Intuitively, the polylogarithmic dependence on $\kappa$ in Theorem~\ref{thm:cov-kappa} comes from the fact that the recursive preconditioner is oblivious to the actual spectrum of $\Sigma$, and only reduces the upper bound on the condition number by a constant factor in each round, so we have to precondition for $\Omega(\log \kappa)$ rounds.  At a high-level, our algorithm for estimating a Gaussian with unbounded covariance starts by obtaining a rough estimate of $\| \Sigma \|_2$ and uses this information to reduce the condition number much faster, so that the number of rounds of recursive preconditioning is independent of $\kappa$.  In order to avoid any dependence on $\kappa$ in the sample complexity, our algorithm uses the flexibility of $(\eps,\delta)$-differential privacy (for $\delta > 0$) in an essential way.


\newcommand{\indi}{\mathbf{1}}

\subsection{Estimating the Spectral Norm}

First we describe how to obtain a rough estimate of the spectral norm of the covariance matrix $\Sigma$.  Our algorithm makes use of an algorithm for privately computing histograms~\cite{BeimelNS13,Vadhan17}.  Given a \emph{universe} $\cX$ and a dataset $X = (X_1,\dots,X_n) \in \cX^n$, the \emph{histogram} is the vector
\[
q(X) = \left(\frac{1}{n} \sum_{i = 1}^n \indi_{y} (X_i) \right)_{y \in \cX}
\]
where $\indi_{y}(x)$ is the function that returns $1$ if and only if $x = y$.
That is, a point query simply asks what fraction of the points in the data set are equal to a single point in the universe.
\begin{theorem}[\cite{BeimelNS13,Vadhan17}]
\label{thm:private-histogram}
For every universe $\cX$, $n \in \N$, $0 < \eps,\delta,\beta \leq 1$,
there is a polynomial time $(\eps, \delta)$-DP mechanism $H : \cX^n \to \R^\cX$ such that, for every $X \in \cX^n$, with probability at least $1 - O(\beta)$,
$$
\left\| q(X) - H(X) \right\|_{\infty} \leq O\left( \frac{\log(\frac{n}{\delta\beta})}{\eps n} \right)
$$
\end{theorem}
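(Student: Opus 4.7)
The plan is to use the standard \emph{stability-based histogram} mechanism, which exploits the sparsity of $q(X)$: regardless of how large $\cX$ is, at most $n$ coordinates of $q(X)$ are non-zero, so we can do work proportional to $n$ rather than $|\cX|$.

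First I would define the mechanism $H$ as follows. Initialize $H(X)$ to the all-zeros vector in $\R^{\cX}$. For each $y \in \cX$ that actually appears in the dataset $X$ (there are at most $n$ such $y$), draw an independent Laplace noise variable $Z_y$ of scale $b = 2/(\eps n)$ and set $\tilde q_y = q(X,y) + Z_y$; then output $H(X)_y = \tilde q_y$ if $\tilde q_y > T$, where $T = \Theta(\log(1/\delta)/(\eps n))$ is a threshold, and $H(X)_y = 0$ otherwise. After a single pass to hash the dataset and tabulate multiplicities, the mechanism runs in time $\tilde O(n)$, giving the polynomial-time claim.

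Next I would prove $(\eps,\delta)$-DP by fixing neighbors $X \sim X'$ and partitioning $\cX$ according to the supports of $q(X)$ and $q(X')$. For every $y$ outside both supports, both mechanisms deterministically output $0$, so these contribute nothing to the privacy loss. For every $y$ in both supports, $|q(X,y) - q(X',y)| \leq 1/n$, so by the standard Laplace mechanism analysis the distributions of $\tilde q_y$ are $(\eps/2,0)$-indistinguishable and post-processing by the threshold preserves this. The crux is the at-most-two coordinates $y$ appearing in exactly one support (under replacement neighbors): say $y$ is in the support of $X$ only; then $q(X,y) \leq 1/n$, and by the Laplace tail bound and our choice of $T$, the probability that $\tilde q_y > T$ is at most $\delta/4$, so the two worlds agree on $H(\cdot)_y$ except with probability $\delta/2$ from these ``phantom'' coordinates. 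Combining the $(\eps/2,0)$ guarantees across the at most two shared affected coordinates with the $\delta$ from phantoms gives overall $(\eps,\delta)$-DP.

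For accuracy I would apply the Laplace tail bound together with a union bound over the at most $n$ coordinates for which any noise is drawn: except with probability $O(\beta)$, every $|Z_y|$ is bounded by $O(\log(n/\beta)/(\eps n))$, and any additional error introduced by zeroing a coordinate with $\tilde q_y \leq T$ is at most $T = O(\log(1/\delta)/(\eps n))$. Combining these yields $\|q(X) - H(X)\|_\infty = O(\log(n/(\delta\beta))/(\eps n))$, matching the stated bound. The main subtle point is choosing $T$ to simultaneously (i) be large enough that the probability of outputting a phantom coordinate can be absorbed into $\delta$, and (ii) be small enough that the thresholding error does not dominate the Laplace tail error; both bounds scale logarithmically, and the advertised choice of $T$ makes them match up to constants.
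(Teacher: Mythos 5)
Your proof is correct and reconstructs the standard stability-based (thresholded-Laplace) histogram mechanism, which is exactly the argument behind Theorem~\ref{thm:private-histogram} in the cited sources \cite{BeimelNS13,Vadhan17}; the paper itself does not give a proof, it only imports the statement as a black box. The one place to be more careful than your sketch is the privacy step for phantom bins: rather than ``absorbing'' their probability into $\delta$ informally, you should argue that for any output event $S$ one has $\pr{}{H(X) \in S} \leq \pr{}{H(X) \in S'} + \delta$ where $S'$ forces both potentially-phantom coordinates to $0$, and then observe that on all remaining coordinates the counts of $X$ and $X'$ agree exactly, so the conditional distributions coincide; this yields the clean $(\eps,\delta)$ bound you want without any conditioning issues with the threshold.
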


With this theorem in hand, we can now describe our algorithm.  To describe the algorithm, we need to describe a particular set of point queries.  We will use these definitions throughout.
Let $C > 0$ be a sufficiently large universal constant to be specified later.  For any $r \in \R_{\geq 0}$, define the interval $I_r = [C^{r-1}, C^r]$.  Define the infinite domain to be a partition of $[\frac{1}{C^2},\infty)$ into geometrically increasing intervals $I_r$.  Specifically,
$$
\cX = \left\{~I_{\log_C(d) - 1}~,~I_{\log_C(d)}~,~I_{\log_C(d) + 1}~,~\dots~\right\} \cup \bot
$$
We then bucket the samples $X_1,\dots,X_n$ by their $\ell_2^2$ norm.  Specifically, for each sample $X_i \in \R^d$ we let $B_i = I_r$ if $\| X_i \|_2^2 \in I_r$ for some $I_r \in \cX$ and $B_i = \bot$ otherwise.  We now construct the dataset $B = (B_1,\dots,B_n)$ and run $H(B)$ to find an interval $I_r$ such that at least $n/4$ samples lie in $I_r$.
The pseudocode for this algorithm is given in Algorithm~\ref{alg:estimate-trace}.
\begin{algorithm}[h!] \label{alg:estimate-trace}
\caption{Privately estimating spectral norm $\PEstimateTrace_{\eps,\delta,\beta}(X_1, \ldots, X_n)$}
\KwIn{A set of $n$ samples from an unknown Gaussian $\normal (0, \Sigma)$.  Parameters $\eps, \delta, \beta > 0$.}
\KwOut{An estimate of $\| \Sigma \|_2$}\vspace{10pt}

Let $C, I_r, \cX, B$ and $H$ be as described above and let $(h_{I_{r}})_{I_r \in \cX} \gets H(B_1,\dots,B_n)$\\
If possible, let $r$ be such that $h_{I_{r}} \geq 1/4$\\\vspace{10pt}

\Return $C^r$ if such an $r$ exists or $\bot$ otherwise
\end{algorithm}

Our guarantee about this algorithm is the following.
\begin{theorem}
\label{thm:trace}
For every $0 < \eps$ and  $\delta,\beta \in (0,1/n)$, $\PEstimateTrace_{\eps,\delta,\beta}(X)$ is $(\eps,\delta)$-DP.  Moreover, if $X_1, \ldots, X_n$ are i.i.d.\ from $\normal(0,\Sigma)$ for $\Sigma \succeq \id$, and
\[
n \geq \Omega \left( \frac{\log (\frac{1}{ \delta \beta}) }{\eps} \right)
\]
then with probability $1 - O(\beta)$, it outputs $T$ so that $\| \Sigma \|_2 \in [\xi T/d, \Xi d]$ for absolute constants $\xi,\Xi$.
\end{theorem}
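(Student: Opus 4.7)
The plan is to prove privacy by observing that the algorithm is mere post-processing of the private histogram $H$, and to prove accuracy by showing that with high probability over the sample a constant fraction of the $B_i$'s land in a single geometric bucket whose right endpoint $C^r$ approximates $\tr(\Sigma)$ up to a constant, which in turn sandwiches $\|\Sigma\|_2$ between $T/d$ and $T$ (up to constants).

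For privacy, the map $(X_1,\ldots,X_n)\mapsto(B_1,\ldots,B_n)$ is applied coordinatewise, so neighboring $X$-datasets become neighboring $B$-datasets in $\cX^n$. The histogram $H$ is $(\eps,\delta)$-DP by Theorem~\ref{thm:private-histogram}, and everything after the call to $H$ is a deterministic function of its output, so by Lemma~\ref{lem:postprocessing} the whole algorithm is $(\eps,\delta)$-DP.

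For accuracy, the key step is a two-sided multiplicative concentration estimate for $\|X_i\|_2^2=Z_i^\top\Sigma Z_i$, where $Z_i\sim\normal(0,\id)$. Markov's inequality yields $\Pr[\|X_i\|_2^2>K\tr(\Sigma)]\le 1/K$ for any $K\ge 1$, while Corollary~\ref{cor:hanson-wright-extreme-lb} yields $\Pr[\|X_i\|_2^2<\tr(\Sigma)/C_\mathrm{hw}]\le 1/10$ for an absolute constant $C_\mathrm{hw}$. Fixing $K$ and the bucketing base $C$ to be suitable absolute constants, the window $W:=[\tr(\Sigma)/C_\mathrm{hw},\,K\tr(\Sigma)]$ is covered by $O(1)$ consecutive buckets $I_r\in\cX$, and each sample lands in $W$ with probability at least $7/8$. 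Because $\Sigma\succeq\id$ forces $\tr(\Sigma)\ge d$, the entire window $W$ lies above $d/C^2$, so no sample in $W$ is assigned to $\bot$. By pigeonhole, one of the $O(1)$ covering buckets $I_{r^*}$ receives probability mass at least $\Omega(1)\ge 3/8$ per sample, and a standard Chernoff bound (absorbed into the sample complexity) shows that the empirical fraction $q(B)_{I_{r^*}}$ is at least $3/8$, and the total fraction of $B_i$ outside $W$ is at most $1/8$, except with probability $\beta$.

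By Theorem~\ref{thm:private-histogram} combined with $n=\Omega(\log(1/(\delta\beta))/\eps)$, the private histogram has $\ell_\infty$ error at most $1/8$ with probability $1-\beta$. Hence $h_{I_{r^*}}\ge 3/8-1/8=1/4$, so the algorithm returns some $r$. Any such $r$ satisfies $h_{I_r}\ge 1/4$, so the true empirical mass in $I_r$ is at least $1/8$, which strictly exceeds the mass outside $W$; therefore $I_r$ must intersect $W$, and because $I_r$ is a geometric interval of width ratio $C$, we conclude $T=C^r=\Theta(\tr(\Sigma))$. Combining with the trivial inequalities $\|\Sigma\|_2\le\tr(\Sigma)\le d\|\Sigma\|_2$ then gives $\|\Sigma\|_2\in[\xi T/d,\Xi T]$ for absolute constants $\xi,\Xi$, as claimed. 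The main obstacle is the lower tail of the multiplicative concentration: Theorem~\ref{thm:hanson-wright} only controls deviations in terms of $\|\Sigma\|_F$ and $\|\Sigma\|_2$, which can be as large as $\tr(\Sigma)$ itself and hence cannot by themselves certify a multiplicative anti-concentration estimate. This is exactly what Corollary~\ref{cor:hanson-wright-extreme-lb} is introduced to provide, and it is what makes the pigeonhole step succeed with a constant fraction of the samples rather than only a $1/\polylog$ fraction.
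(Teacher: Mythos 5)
Your proposal is correct and follows essentially the same structure as the paper's proof: privacy is post-processing of the private histogram (Theorem~\ref{thm:private-histogram} together with Lemma~\ref{lem:postprocessing}), and accuracy follows by showing that most of the $\|X_i\|_2^2$ concentrate in $O(1)$ consecutive geometric buckets around $\tr(\Sigma)$, so any bucket that the private histogram certifies above the $1/4$ threshold must be $\Theta(\tr(\Sigma))$, and then $\|\Sigma\|_2\le\tr(\Sigma)\le d\|\Sigma\|_2$ finishes. The one substantive deviation is that you use Markov's inequality for the upper tail $\Pr\left[\|X_i\|_2^2>K\,\tr(\Sigma)\right]\le 1/K$ where the paper (Lemma~\ref{lem:trace-lemma}) invokes the Hanson-Wright upper tail together with $\|\Sigma\|_F\le\tr(\Sigma)$ and $\|\Sigma\|_2\le\tr(\Sigma)$; both yield the needed constant-probability bound, and yours is the more elementary route. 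Two constant-level cautions are worth flagging, though neither is a real gap: your pigeonhole claim of mass $\ge 3/8$ in a single bucket only holds if you pick the base $C$ large enough that $W$ spans at most two buckets (i.e.\ $C>K\,C_{\mathrm{hw}}$); if $W$ can straddle three buckets, as in the paper, pigeonhole gives only about $7/24$ and you must tighten the histogram's $\ell_\infty$ accuracy to well below $1/8$, which is what the paper implicitly does. Also, your endgame step says the empirical mass in $I_r$, bounded below by $1/8$, ``strictly exceeds'' the mass outside $W$, bounded above by $1/8$ --- that is an equality, not a strict inequality, so one of the two thresholds needs a little slack (e.g.\ demand $\ell_\infty$ error $\le 1/16$) for the implication $I_r\cap W\neq\emptyset$ to go through.
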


\noindent To prove this theorem, we first show the following concentration bounds.  The first such bound says that if $\tr(\Sigma)$ is in the interval $I_{r^*}$, then almost all of the samples fall into one of the three buckets $I_{r^* - 1}, I_{r^*},$ and $I_{r^*+1}$.
\begin{lemma}
\label{lem:trace-lemma}
If $X_1, \ldots, X_n \sim \normal (0, \Sigma)$ for $n \geq \Omega(\log(1/\beta))$, and $r^*$ is such that $\tr(\Sigma) \in I_{r^*}$, then with probability $1-O(\beta)$, there are at most $2n/9$ samples such that $\| X_i \|_2^2 \not\in I_{r^* - 1} \cup I_{r^*} \cup I_{r^* + 1}$.
\end{lemma}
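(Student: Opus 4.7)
The strategy is to show that, for each $i$ independently, $\|X_i\|_2^2$ lies in the union of three target buckets with constant probability strictly smaller than $2/9$, and then apply a Chernoff bound to the number of ``bad'' samples.

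First, write $X_i = \Sigma^{1/2} Y_i$ with $Y_i \sim \normal(0, \id)$, so that $\|X_i\|_2^2 = Y_i^\top \Sigma Y_i$ is a quadratic form in a standard Gaussian with expectation $\tr(\Sigma)$. For the upper tail, I apply Hanson-Wright (Theorem~\ref{thm:hanson-wright}, Eq.~\eqref{eq:hw-ub}) with $A = \Sigma$ and use the PSD-matrix inequalities $\|\Sigma\|_F \leq \tr(\Sigma)$ and $\|\Sigma\|_2 \leq \tr(\Sigma)$; choosing $t$ a sufficiently large constant that $1 + 2\sqrt{t} + 2t \leq C$ and $e^{-t} \leq 1/20$, this gives
$$
\Pr\bigl[\|X_i\|_2^2 > C \cdot \tr(\Sigma)\bigr] \leq 1/20.
$$
For the lower tail, I invoke Corollary~\ref{cor:hanson-wright-extreme-lb} (applied to $Y_i$ with the matrix $\Sigma$), which directly yields
$$
\Pr\bigl[\|X_i\|_2^2 < \tr(\Sigma)/C\bigr] < 1/10,
$$
provided $C$ is at least the universal constant from that corollary. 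Taking a union bound, each individual sample satisfies $\|X_i\|_2^2 \in [\tr(\Sigma)/C,\, C \tr(\Sigma)]$ with probability at least $1 - p_1$ for some $p_1 < 1/6$.

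The bucket inclusion is then immediate: since $\tr(\Sigma) \in I_{r^*} = [C^{r^*-1}, C^{r^*}]$, the window $[\tr(\Sigma)/C,\, C \tr(\Sigma)]$ is contained in $[C^{r^*-2},\, C^{r^*+1}] = I_{r^*-1} \cup I_{r^*} \cup I_{r^*+1}$. Hence the indicator that $\|X_i\|_2^2 \notin I_{r^*-1} \cup I_{r^*} \cup I_{r^*+1}$ is stochastically dominated by $\mathrm{Bernoulli}(p_1)$ with $p_1 < 1/6$. Since the samples are i.i.d., the number of bad indices has mean at most $n/6$, and a standard multiplicative Chernoff bound (with $\delta = 1/3$) yields
$$
\Pr\bigl[\,\text{more than } 2n/9 \text{ bad samples}\,\bigr] \leq \exp(-\Omega(n)) \leq \beta,
$$
as soon as $n \geq \Omega(\log(1/\beta))$, matching the hypothesis.

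The delicate point is that the ``vanilla'' Hanson-Wright lower tail is too weak in the low-effective-rank regime (where $\|\Sigma\|_F \approx \tr(\Sigma)$), so Theorem~\ref{thm:hanson-wright} cannot be applied symmetrically to both tails of $Y_i^\top \Sigma Y_i$. This asymmetry is exactly what Corollary~\ref{cor:hanson-wright-extreme-lb} is designed to bypass, via what should be a direct Chernoff/Laplace-transform argument on the generalized chi-squared random variable $\sum_j \lambda_j Y_{i,j}^2$; invoking it resolves the difficulty cleanly at the cost of a universal (if somewhat unwieldy) constant $C$.
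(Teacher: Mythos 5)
Your proof is correct and follows essentially the same route as the paper's: decompose $\|X_i\|_2^2 = Y_i^\top \Sigma Y_i$, bound the upper tail with Hanson--Wright using $\|\Sigma\|_F, \|\Sigma\|_2 \le \tr(\Sigma)$, bound the lower tail by invoking Corollary~\ref{cor:hanson-wright-extreme-lb}, observe that $[\tr(\Sigma)/C,\,C\tr(\Sigma)] \subseteq I_{r^*-1}\cup I_{r^*}\cup I_{r^*+1}$, and finish with a Chernoff-type concentration bound on the count of bad indices. The only cosmetic differences are the per-sample failure probability you obtain ($3/20$ vs.\ the paper's $1/5$) and your use of multiplicative Chernoff where the paper uses Hoeffding; both yield the required $2n/9$ bound for $n = \Omega(\log(1/\beta))$.
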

\begin{proof}
Let $r^*$ be so that $\tr (\Sigma) \in I_{r^*}$.
Clearly by assumption we have $r^* \in \{\log_C d, \log_C (\kappa d)\}$.
Recall that for each $X_i$, since $\| X_i \|_2^2 = Y_i^\top \Sigma Y_i$, where $Y_i \sim \normal (0, 1)$.
\begin{align*}
\pr{}{\| X_i \|_2^2 \not\in I_{r^* - 1} \cup I_{r^*} \cup I_{r^* + 1}} &\leq \pr{}{ Y_i^\top \Sigma Y_i < \tfrac{1}{C} \tr (\Sigma)} + \pr{}{ Y_i^\top \Sigma Y_i > C \tr (\Sigma)}
\end{align*}

By Corollary~\ref{cor:hanson-wright-extreme-lb} if $C$ is a sufficiently large constant, we have $\pr{}{Y_i^\top \Sigma Y_i < \frac{1}{C} \tr (\Sigma)} \leq \frac{1}{10}$, and by Theorem~\ref{thm:hanson-wright}, we know that, for a sufficiently large constant $C$,
\begin{align*}
\Pr \left[ Y_i^\top \Sigma Y > C \tr (\Sigma) \right] &\leq  \exp \left( - \min \left( \frac{(C - 1)^2 \tr (\Sigma)^2}{16 \| \Sigma \|_F^2}, \frac{(C - 1) \tr (\Sigma)}{4 \| \Sigma \|_2} \right) \right) \\
&\leq \exp \left( - \frac{C - 1}{4} \right) < \frac{1}{10} \; ,
\end{align*}
Hence we have 
\[
\Pr \left[ \| X_i \|_2^2 \not\in I_{r^* - 1} \cup I_{r^*} \cup I_{r^* + 1} \right] \leq \frac{1}{5} \; .
\]
Thus, by Hoeffding's bound, with probability $1 - O(\beta)$, the fraction of points $X_i$ which land in a bucket which is not $I_{r^* - 1}, I_{r^*},$ or $I_{r^* + 1}$ less than $1/5 + \sqrt{\log (1 / \beta) / 2n}$.  By our choice of $n$, this fraction is at most $2/9$.
\end{proof}

\noindent We can now return to the proof of Theorem~\ref{thm:trace}.
\begin{proof}[Proof of Theorem~\ref{thm:trace}]
Condition on the event that the conclusions of Theorem~\ref{thm:private-histogram} and Lemma~\ref{lem:trace-lemma} hold simultaneously, which, by our choice of $n$, occurs with probability at least $1 - O(\beta)$.
Let $r^*$ be so that $\tr (\Sigma) \in I_{r^*}$.
By definition we have $r^* \in \{\log_C(d) - 1, \ldots, \log_C (\kappa d) + 1 \}$.
Then, by Lemma~\ref{lem:trace-lemma}, we know that at least one of $I_{r^* - 1}, I_{r^*},$ or $I_{r^* + 1}$ must have probability mass at least $7 / 27$.
By Theorem~\ref{thm:private-histogram}, this implies that for this $r$, we know that $\cM(L)$ outputs at least 
$$\frac{7}{ 27} - O\left( \frac{\log(\frac{n}{\delta\beta}))}{\eps n} \right) > \frac{1}{4}$$ where the inequality follows from our choice of $n$.
Hence our algorithm will not terminate and output $\bot$.
Moreover, by Lemma~\ref{lem:trace-lemma}, no bucket other than $I_{r^* - 1}, I_{r^*},$ or $I_{r^* + 1}$ can have probability mass more than $2 / 9$, and hence by Theorem~\ref{thm:private-histogram}, $H(B)$ will not output a value $\geq 1/4$ for any other bucket.
Hence the output of $\PEstimateTrace$ is $C^r$ for $r \in \{ r^* - 1, r^*, r^* + 1\}$.
The last guarantee now follows simply because $\| \Sigma \|_2 \leq \tr (\Sigma) \leq d \| \Sigma \|_2$.
\end{proof}

\noindent
For the remainder of the section, we will simply say that \PEstimateTrace~\emph{succeeds} to mean that it satisfies the conclusion of Theorem~\ref{thm:trace}.

\subsection{Preconditioning with an Estimate of the Spectral Norm}

The algorithm in the previous section can be used to obtain a small interval $I = [a,b]$ that contains $\| \Sigma \|_2$.  Once we have such an interval, we can essentially run our recursive preconditioning subroutine $\PPreCond$ a few  times to essentially eliminate the largest eigenvector of $\Sigma$.

\begin{algorithm}[h!] \label{alg:precondition-interval}
\caption{Preconditioning given a range $\PPreCondRange_{\rho,\beta,I}(X_1, \ldots, X_n)$}
\KwIn{A set of $n$ samples $X_1, \ldots, X_n$ from an unknown Gaussian $\normal (0, \Sigma)$.  Parameters $\rho, \beta, > 0$, and an interval $I = [a, b] \subseteq [1, \infty)$ so that $a > 40 d^3$}
\KwOut{A private preconditioner for $\Sigma$, assuming that $\| \Sigma \|_2 \in I$} \vspace{10pt}

Set parameters $\beta' \gets \frac{\beta}{\log(b/a)}, \rho' \gets \frac{\rho}{\log(b/a)}$\\
Let $\kappa \gets b$ and let $V$ be the empty subspace\\
\While{$V$ is empty and $\kappa > a / 2$}{
	Let $(V,A) \gets \PPreCond_{\rho',\beta',\kappa, \frac{\kappa}{d^2}}(X_1, \ldots, X_n)$\\
	Let $\kappa \gets \frac{99}{100} \kappa$
}\vspace{10pt}

\Return the pair $(V,A)$, or $\bot$ if $V$ is empty.  Note that, by construction of $\PPreCond$
\[
A = \frac{2 d}{\kappa^{1/2}} \Pi_V + \Pi_{V^\perp}
\]
\end{algorithm}

\begin{lemma}
\label{lem:cov-precondition-interval}
For every $0 < \eps$ and $\delta,\beta \in (0,\frac1n)$ and , if $I = [a,b] \subseteq [40d^3,\infty)$ is an interval such that $\| \Sigma \|_2 \in [a,b]$, and $X_1,\dots,X_n$ are i.i.d.\ samples from $\normal(0,\Sigma)$ satisfying \eqref{eq:cov-cond1}--\eqref{eq:cov-cond3} for
$$
n \geq O\left( \frac{d^{3/2} \polylog(\frac{1}{\beta \rho}\cdot\frac{b}{a})}{\rho^{1/2}} \right)
$$
then $\PPreCondRange$ is $\rho$-zCDP and outputs $A$ such that
\begin{equation}
\label{eq:precondition-interval}
\left(1 - O\left( \frac{1}{d^2} \right) \right) \id \preceq A \Sigma A \preceq \left(5 d^2 + \frac{3 \| \Sigma \|_2}{4} \right) \id
\end{equation}
and a subspace $V$ such that if $U$ is a subspace satisfying $\| \Pi_{U} \Sigma \Pi_{U} \|_2 \leq 10d^2$ then $U' = U + V$ satisfies $\dim (U') > \dim (U)$ and $\| \Pi_U A \Sigma A \Pi_U \|_2 < 10 d^2$.
\end{lemma}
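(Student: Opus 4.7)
The plan is to analyze the while loop by applying Theorem~\ref{thm:dp-precondition} at each iteration and then pinning down the terminal value of $\kappa$. For privacy, each call to \PPreCond{} is $\rho'$-zCDP, and the loop makes at most $O(\log(b/a))$ iterations, so adaptive composition (Lemma~\ref{lem:dpcomp}) yields $\rho$-zCDP overall. For accuracy, I would first union-bound over the iterations so that the regularity conditions \eqref{eq:cov-cond1}--\eqref{eq:cov-cond3} and the noise guarantees of Lemma~\ref{lem:dp-naive} hold simultaneously with probability $1 - O(\beta)$, after which all of the analysis becomes deterministic.

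Let $\kappa^*$ denote the value of $\kappa$ used in the iteration where $V$ first becomes non-empty. The top eigenvalue of $Z = \wh\Sigma + N$ in each iteration is $\|\Sigma\|_2 \pm O(\psi \|\Sigma\|_2 + \kappa \varphi)$, so $V$ non-empty at $\kappa^*$ forces $\|\Sigma\|_2 \gtrsim \kappa^*/2$, while $V$ empty at the previous iteration (with $\kappa_{\mathrm{prev}} = (100/99)\kappa^*$) forces $\|\Sigma\|_2 \lesssim \kappa_{\mathrm{prev}}/2$. Together these pin $\kappa^* \in [1.98\|\Sigma\|_2, 2\|\Sigma\|_2]$ up to $O(\psi+\varphi)$ corrections, and in particular $\kappa^* > a/2$, so the loop terminates before exhausting its budget. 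Plugging $K = \kappa^*/d^2 \geq 20d$ into Theorem~\ref{thm:dp-precondition} (so $\max(1/K,1/2) = 1/2$) directly gives $A\Sigma A \preceq (1+\psi)(1/2+\varphi)\kappa^*\,\id$; combining with $\kappa^* \leq 2\|\Sigma\|_2(1+o(1))$, and letting the $5d^2$ slack absorb the additive noise $O(\kappa^*\varphi) = O(d^2)$, matches the claimed upper bound. For the lower bound, the same theorem yields $\Gamma = O(K/\kappa^*) + O(K\varphi^2) = O(1/d^2)$ given our choice of $n$, so $A\Sigma A \succeq (1-\psi)^2(1-O(1/d^2))\id \succeq (1 - O(1/d^2))\id$.

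For the subspace claim, every unit $v \in V$ satisfies $v^\top Z v \geq \kappa^*/2$, hence $v^\top \Sigma v \geq \kappa^*/2 - O(\psi\|\Sigma\|_2 + \kappa^*\varphi) \gg 10d^2$, since $\kappa^*/2 \gtrsim a/2 \geq 20d^3$. In contrast, $u^\top \Sigma u \leq 10d^2$ for every unit $u \in U$. Thus no nonzero $v \in V$ can lie in $U$, which gives $\dim(U + V) > \dim(U)$. The remaining norm bound is proved by decomposing each $u \in U$ as $u = p + q$ with $p \in V, q \in V^\perp$, applying $A u = (d/\sqrt{\kappa^*})p + q$, and expanding $(Au)^\top \Sigma (Au)$ as a quadratic form in $p,q$: the diagonal terms are controlled by the shrinkage factor $d^2/\kappa^* \ll 1$ on $V$ and the constraint $u^\top \Sigma u \leq 10 d^2$, while the cross term $p^\top \Sigma q$ is handled via the Cauchy--Schwarz inequality $|p^\top \Sigma q| \leq \sqrt{p^\top \Sigma p}\,\sqrt{q^\top \Sigma q}$ combined with the bound $q^\top \Sigma q \leq \kappa^*/2 + O(\mathrm{noise})$ that follows from $q \in V^\perp$. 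I expect this last decomposition to be the main obstacle, because $U$ and $V$ need not be orthogonal, so the naive bound $\|A\Sigma A\|_2 \leq \|\Sigma\|_2$ is too weak and one must carefully exploit the joint constraints on $p$ and $q$ coming from $u \in U$.
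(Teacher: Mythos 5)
Your high-level plan matches the paper's: privacy by composition, the PSD sandwich \eqref{eq:precondition-interval} by invoking Theorem~\ref{thm:dp-precondition} with $K = \kappa^*/d^2$ (so $\max(1/K,1/2)=1/2$ and $\Gamma = O(1/d^2)$), and the dimension step by observing that any unit $v \in V$ has $v^\top \Sigma v \gtrsim \kappa^*/2 \geq a/4 > 10d^3 \geq 10d^2$, hence $v\notin U$. Your pinning of $\kappa^*$ to within a $(99/100)$ factor of $2\|\Sigma\|_2$ is more explicit than the paper's one-line invocation of Theorem~\ref{thm:dp-precondition}, but it is the same underlying analysis.

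The gap is exactly where you flagged it, but the obstruction is not that ``$U$ and $V$ need not be orthogonal.'' Writing $u = p+q$ with $p=\Pi_V u$, $q=\Pi_{V^\perp} u$, and $S_1 = p^\top\Sigma p$, $S_2 = q^\top\Sigma q$, $c = p^\top\Sigma q$, you have
\begin{equation*}
u^\top A\Sigma A u = \tfrac{1}{K}S_1 + \tfrac{2}{\sqrt K}c + S_2,
\qquad
u^\top\Sigma u = S_1 + 2c + S_2 \leq 10d^2,
\end{equation*}
and you propose $|c| \leq \sqrt{S_1 S_2}$. This is too weak: the constraint only forces $(\sqrt{S_1}-\sqrt{S_2})^2 \leq 10d^2$ and is perfectly compatible with, e.g., $\|p\|_2 = \|q\|_2 = 1/\sqrt 2$, $S_1 = S_2 \approx \kappa^*/4$, and $c = -\sqrt{S_1 S_2}$, in which case $u^\top\Sigma u = 0$ but
\begin{equation*}
u^\top A\Sigma A u = \tfrac{\kappa^*}{4}\bigl(1 - \tfrac{1}{\sqrt K}\bigr)^2 \approx \tfrac{\kappa^*}{4} = \Omega(a),
\end{equation*}
which is $\gg 10d^2$. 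So the inequalities you wrote do not imply the conclusion; Cauchy--Schwarz handles the case $c \geq 0$ but nothing rules out $c$ large and negative. The correct control of the cross term uses the eigenspace structure of $Z=\wh\Sigma + N$: since $V$ and $V^\perp$ are invariant under $Z$, $\Pi_V Z\Pi_{V^\perp}=0$, so $p^\top\wh\Sigma q = -p^\top N q$ and $|p^\top\Sigma q| \leq (\|N\|_2 + \|\wh\Sigma-\Sigma\|_2)\|p\|_2\|q\|_2$, a noise-scale quantity rather than a $\sqrt{S_1 S_2}$-scale quantity. This is exactly the trick used in the paper's Lemma~\ref{lem:lower-bound}.

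For context: the paper's own written proof of this lemma only establishes privacy, \eqref{eq:precondition-interval}, and $\dim(U')>\dim(U)$; the $\|\Pi_U A\Sigma A\Pi_U\|_2 < 10d^2$ subclaim is asserted but not argued. So you were attempting to fill a real gap in the write-up, just with the wrong tool.
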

\begin{proof}
Privacy and~\eqref{eq:precondition-interval} follow immediately from Theorem~\ref{thm:dp-precondition} and composition of zCDP.
By Theorem~\ref{thm:dp-precondition} and the assumption $\| \Sigma \|_2 \in [a, b]$ $V$ is non-empty with probability at least $1 - O(\beta)$.
Let $U$ be as in the lemma statement, we will prove that the desired conditions hold.  Let $v \in V$ be a unit vector.  We have $v \not\in U$ as otherwise $v^\top \Pi_U \Sigma \Pi_U v^\top = v^\top \Sigma v \geq a / 4 > 10 d^3 \geq 10 d^2$, which contradicts the assumption that $\| \Pi_U \Sigma \Pi_U \|_2 < 10 d^2$.
Hence $U + V$ must have dimension strictly larger than $U$, which proves the claim.
\end{proof}

\subsection{Recursive Preconditioning}

We now show how to combine $\PEstimateTrace$ and $\PPreCondRange$ to recursively to obtain a private preconditioning matrix $A$ such that $\id \preceq \Sigma \preceq O(d^4) \id$.  Once we have reduced the condition number down to a polynomial, we can apply $\PGCEKappa$ to estimate the covariance.


\begin{algorithm}[h!] \label{alg:pce}
\caption{Privately estimating covariance $\PPCRange_{\eps,\delta,\beta}(X_1, \ldots, X_n)$}
\KwIn{A set of $n$ samples from an unknown Gaussian $\normal (0, \Sigma)$.  Parameters $\eps,\delta,\beta > 0$}
\KwOut{A symmetric matrix $A$ such that $\id \preceq A \Sigma A \preceq O(d^4) \id$.} \vspace{10pt}

Set parameters $\eps' = \eps/\sqrt{d \log(1/\delta)}~~~\delta' = \delta/d~~~\rho' = (\eps')^2 / \log(1/\delta)~~~\beta' = \beta / d$\\
Let $\xi, \Xi$ be the absolute constants from Theorem~\ref{thm:trace}\\
Let $X^{(1)}_i \gets X_i$ for $i = 1, \ldots, n$\\
\For{$j = 1, \ldots, d$}{
	Let $T^{(j)} \gets \PEstimateTrace_{\eps', \delta',\beta'}(X^{(j)}_1, \ldots, X^{(j)}_n)$\\
	Let $I^{(j)} \gets [a^{(j)}, b^{(j)}]$ for $a^{(j)} = \xi T^{(j)}$ and $b^{(j)} = \Xi d T^{(j)}$\\
	\If{$a^{(j)} < 40 d^3$}{\textbf{Break}}
	Let $A^{(j)} \gets \PPreCondRange_{\rho', \beta', I^{(j)}}(X^{(j)}_1, \ldots, X^{(j)}_n)$\\
	Let $X^{(j+1)}_i = A^{(j)} X^{(j)}_i$ for all $i = 1, \ldots, n$.
}

Let $J$ be the number of iterations the above loop ran for, and let
\[
A \gets 2 \prod_{j = 1}^J A^{(j)}
\]
%

\Return $A$
\end{algorithm}

\begin{theorem} \label{thm:ppc-nokappa}
For every $0 < \eps, \delta, \beta \leq 1$, $\PPCRange_{\eps,\delta,\beta}$ is $(O(\eps),O(\delta))$-DP and, if $X_1,\dots,X_n$ are i.i.d.\ samples from $\normal(0,\Sigma)$ and
$$
n \geq O\left( \frac{d^{3/2} \sqrt{\log(1/\delta)} \cdot \polylog(\frac{d \log(1/\delta)}{\beta \eps})}{\eps} \right)
$$
then with probability at least $1 - O(\beta)$ it outputs $A$ such that $\id \preceq A\Sigma A \preceq O(d^4) \id$.

\end{theorem}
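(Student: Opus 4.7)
The plan is to establish privacy and utility separately, since they require different types of analysis.

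For privacy, each iteration invokes \PEstimateTrace, which is $(\eps',\delta')$-DP by Theorem~\ref{thm:trace}, and \PPreCondRange, which is $\rho'$-zCDP by Lemma~\ref{lem:cov-precondition-interval}. I would convert the zCDP guarantee to approximate DP via Lemma~\ref{lem:dpdefns}; the choice $\rho' = (\eps')^2/\log(1/\delta)$ makes each \PPreCondRange call $(O(\eps'),\delta')$-DP. Each iteration is thus $(O(\eps'),O(\delta'))$-DP by basic composition, and across at most $d$ iterations, advanced composition (Lemma~\ref{lem:dpcomp} part 2) gives $(O(\eps' \sqrt{d\log(1/\delta)}),O(d\delta'+\delta))$-DP. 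Substituting $\eps'=\eps/\sqrt{d\log(1/\delta)}$ and $\delta'=\delta/d$ yields $(O(\eps),O(\delta))$-DP, as desired.

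For correctness, I would prove by induction on $j$ the invariant that, writing $\Sigma^{(j+1)} = A^{(j)} \cdots A^{(1)} \Sigma A^{(1)} \cdots A^{(j)}$ for the covariance of the transformed samples $X^{(j+1)}_i$, there is a subspace $U^{(j+1)}$ of dimension at least $j$ on which $\| \Pi_{U^{(j+1)}} \Sigma^{(j+1)} \Pi_{U^{(j+1)}} \|_2 \leq 10d^2$, and $\Sigma^{(j+1)} \succeq (1-O(1/d^2))^j \id$. The base case uses the empty subspace and the hypothesis $\Sigma \succeq \id$. For the inductive step, I first observe that the whitened transformed samples $(\Sigma^{(j)})^{-1/2} X^{(j)}_i$ are orthogonal rotations of the original whitened samples $\Sigma^{-1/2} X_i$, so the regularity conditions \eqref{eq:cov-cond1}--\eqref{eq:cov-cond3} transfer automatically after a single union bound at the start. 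If $a^{(j+1)} \geq 40 d^3$, Theorem~\ref{thm:trace} guarantees $\|\Sigma^{(j+1)}\|_2 \in [a^{(j+1)},b^{(j+1)}]$, so Lemma~\ref{lem:cov-precondition-interval} applies: its output $A^{(j+1)}$ preserves the slightly-relaxed lower bound (losing one more factor of $(1-O(1/d^2))$) and strictly grows the low-spectrum subspace. A union bound over the $d$ iterations (with $\beta'=\beta/d$) controls the total failure probability.

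Termination is forced in at most $d$ iterations. If the loop breaks with $a^{(J+1)} < 40 d^3$, Theorem~\ref{thm:trace} gives $\|\Sigma^{(J+1)}\|_2 \leq \Xi d \cdot T^{(J+1)} \leq O(d^4)$. If instead the loop completes all $d$ iterations, the invariant forces $\dim(U^{(d+1)}) = d$ and hence $\|\Sigma^{(d+1)}\|_2 \leq 10d^2$. In either case, combining the upper bound $\|\Sigma^{(J+1)}\|_2 \leq O(d^4)$ with the accumulated lower bound $\Sigma^{(J+1)} \succeq (1-O(1/d))\id$ and the final rescaling by $2$ in the definition of $A$, the returned matrix satisfies $\id \preceq A\Sigma A^T \preceq O(d^4)\id$.

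The main obstacle I expect is verifying that Lemma~\ref{lem:cov-precondition-interval} remains applicable across iterations: the covariance at step $j$ is only $\succeq (1-O(1/d^2))^j \id$ rather than exactly $\succeq \id$, so I will either need to verify that the proof of that lemma tolerates this relaxation, or to absorb the cumulative slack into the constants of the output bound. A secondary subtlety is that the composition argument must account for the fact that the interval $I^{(j)}$ and the preconditioner $A^{(j)}$ depend adaptively on the outcomes of earlier private computations, but this is handled by the adaptive form of Lemma~\ref{lem:dpcomp}.
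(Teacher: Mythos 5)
Your proposal takes essentially the same approach as the paper's proof: privacy follows from zCDP-to-DP conversion plus basic composition within an iteration and advanced composition across the (at most) $d$ iterations, and correctness follows from tracking, iteration by iteration, both the accumulated multiplicative loss in the lower bound (absorbed by the final factor of $2$ in $A$, giving $4(1-O(1/d^2))^d \succeq 1$) and the strictly growing low-spectrum subspace $U^{(j)}$ which forces termination in $\leq d$ rounds with $\|A\Sigma A\|_2 \leq \max(40\Xi d^4, 10d^2) = O(d^4)$. Your flagged concern about whether Lemma~\ref{lem:cov-precondition-interval} tolerates the slight degradation of the $\succeq \id$ hypothesis across rounds is a legitimate subtlety that the paper's proof glosses over; its resolution is exactly the absorption into the final rescaling that you propose, so the proof aligns with the paper's.
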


\begin{proof}
We first prove that $\PGCE$ satisfies $(O(\eps), O(\delta))$-DP.  Using the fact that zCDP implies DP (Lemma~\ref{lem:dpdefns}) and composition of DP (Lemma~\ref{lem:dpcomp}), each iteration of the loop satisfies $(O(\eps/\sqrt{d \log(1/\delta)}, O(\delta))$-DP.  Therefore the whole algorithm satisfies $(O(\eps),O(\delta))$-DP by the strong composition theorem for DP (Lemma~\ref{lem:dpcomp}).

Condition on the event that $\PEstimateTrace$ and $\PPreCondRange$ succeed in each of the $d$ iterations, which, by our choice of parameters, occurs with probability at least $1 - O(\beta)$.  Under this assumption, we will show that $\id \preceq A \Sigma A \preceq O(d^4) \id$.  For the lower bound, by Lemma~\ref{lem:cov-precondition-interval}, since each instance of $\PEstimateTrace$ succeeds, we know that
\[
A \Sigma A \succeq 4 \left(1 - O \left( \frac{1}{d^2} \right) \right)^d \id \succeq 4 \left(1 - O \left( \frac{1}{d} \right) \right) \id \succeq \id
\]

We now prove the upper bound on $\| A \Sigma A \|_2$.  For each iteration $j = 1, \ldots, T$, let $V^{(j)}$ be the subspace found by $\PPreCondRange$.  Let $U^{(1)}$ be the empty subspace and let $U^{(j1)} = U^{(j-1)} + V^{(j-1)}$
By Lemma~\ref{lem:cov-precondition-interval}, we know that $\dim (U^{(j)}) > \dim(U^{(j - 1)})$ for all $j = 1, \ldots, T$.
Thus, in the final iteration, either $a < 40 d^3$, in which case $\| A \Sigma A \|_2 \leq 40 \Xi d^4 = O(d^4)$, or else $J = d$.  However, if this is the case then $U^{(J)} = \R^d$, and by Lemma~\ref{lem:cov-precondition-interval} we have that $\| A \Sigma A\|_2 \leq 10 d^2 = O(d^4)$.  This completes the proof.
\end{proof}

\subsection{Putting it All Together}

Finally, we combine the ingredients we've developed in this section to obtain a private algorithm for covariance estimation of arbitrarily ill-conditioned Gaussians.  First we present our algorithm, which uses $\PPCRange$ to obtain a preconditioning matrix $A$ such that $\id \preceq A \Sigma A \preceq O(d^4) \id$, and then we use $\PGCEKappa$ on the distribution $\normal(0, A\Sigma A)$ to obtain a final estimate of the covariance.  Since the distribution $\normal(0,A \Sigma A)$ is reasonably well conditioned, the sample complexity of $\PGCEKappa$ will be small.  The pseudocode is presented as Algorithm~\ref{alg:pgce-nokappa}.

\begin{algorithm}[h!] \label{alg:pgce-nokappa}
\caption{Private Covariance Estimator $\PGCE_{\eps,\delta, \beta}(X)$}
\KwIn{Samples $X_1,\dots,X_{n}$ from an unknown Gaussian $\normal(0, \Sigma)$.  Parameters $\eps,\delta,\beta> 0$.}
\KwOut{A matrix $\wh{\Sigma}$ such that $\| \Sigma - \wh{\Sigma} \|_\Sigma \leq \alpha$.}
\vspace{10pt}
Let $A \gets \PPCRange_{\eps,\delta,\beta,\kappa}(X_1,\dots,X_n)$ be the private preconditioner \\
Let $\kappa^* = O(d^4)$ be the upper bound on $\|A\Sigma A\|_2$ from Theorem~\ref{thm:ppc-nokappa}\\\vspace{10pt}
Let $Y_i \gets A X_i$ for $i = 1,\dots,n$\\
Let $\rho \gets \eps^2 / 8 \log(1/\delta)$\\
Let $\wt\Sigma \gets \PGCEKappa_{\rho, \beta, \kappa^*}(Y_1,\dots,Y_n)$\vspace{10pt}

\Return $\wh\Sigma = A^{-1} \wt\Sigma A^{-1}$
\end{algorithm}

\begin{theorem}
\label{thm:pce-nokappa}
For every $0 \leq \eps,\delta,\beta \leq 1$, $\PGCE_{\eps,\delta,\beta}(X)$ is $(O(\eps),O(\delta))$-DP and, when given
$$
n = O\left( \frac{d^{2} + \log(\frac{1}{\beta})}{\alpha^2} + \frac{d^2 \sqrt{\log(\frac{1}{\delta})} \cdot \polylog(\frac{d \log(1/\delta)}{\alpha \beta \eps})}{\alpha \eps}\right)
$$
samples $X_1,\dots,X_n \sim \normal(0,\Sigma)$ for $\Sigma \succeq \id$, with probability $1 - O(\beta)$, it outputs $\wh\Sigma$ with $\| \Sigma  - \wh\Sigma \|_{\Sigma} \leq O(\alpha)$.
\end{theorem}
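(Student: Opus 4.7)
The plan is to mimic the structure of the proof of Theorem~\ref{thm:pce-kappa}, only with the preconditioner $\PPCRange$ (Theorem~\ref{thm:ppc-nokappa}) playing the role of $\PPC$ (Theorem~\ref{thm:cov-kappa}). The key advantage here is that $\PPCRange$ produces a preconditioner $A$ that guarantees $\id \preceq A \Sigma A \preceq \kappa^* \id$ with $\kappa^* = O(d^4)$, even when $\Sigma$ has \emph{no} a priori upper bound, at a cost of only $(O(\eps),O(\delta))$-DP.

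For privacy, the algorithm is a sequential composition of $\PPCRange$ (which is $(O(\eps),O(\delta))$-DP by Theorem~\ref{thm:ppc-nokappa}) and $\PGCEKappa$ (which is $\rho$-zCDP with $\rho = \eps^2/(8\log(1/\delta))$ by Theorem~\ref{thm:pce-kappa}). By Lemma~\ref{lem:dpdefns}, the latter translates to $(O(\eps),\delta)$-DP, and by basic composition (Lemma~\ref{lem:dpcomp}) the whole algorithm satisfies $(O(\eps),O(\delta))$-DP. Finally, transforming the raw data through $A$ is post-processing and free (Lemma~\ref{lem:postprocessing}).

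For accuracy, I would argue in three steps. First, by Theorem~\ref{thm:ppc-nokappa}, the sample size $n$ is large enough that, with probability $1-O(\beta)$, the preconditioner $A$ satisfies $\id \preceq A\Sigma A \preceq \kappa^* \id$ for $\kappa^* = O(d^4)$. Conditioned on this, the samples $Y_i = AX_i$ are i.i.d.\ from $\normal(0, A\Sigma A)$ whose condition number is bounded by $\kappa^*$, so Theorem~\ref{thm:pce-kappa} applies. Second, substituting $\kappa^* = O(d^4)$ into the sample complexity of $\PGCEKappa$ replaces the $\log^{1/2}(\kappa)$ factor by $O(\log^{1/2} d)$, which gets absorbed into the $\polylog$ factors; with $\rho^{1/2} = \Theta(\eps/\sqrt{\log(1/\delta)})$, the resulting $\PGCEKappa$ sample requirement is dominated by the $d^2\sqrt{\log(1/\delta)}\polylog/(\alpha\eps)$ term in the theorem statement, and the $\PPCRange$ requirement of $O(d^{3/2}\sqrt{\log(1/\delta)}\polylog/\eps)$ is a lower order term (it is absorbed by $d^2/(\alpha\eps)$ since $\alpha\le 1$). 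Third, Theorem~\ref{thm:pce-kappa} gives $\|A\Sigma A - \wt\Sigma\|_{A\Sigma A} \leq O(\alpha)$ with probability $1-O(\beta)$, and setting $\wh\Sigma = A^{-1}\wt\Sigma A^{-1}$ one has the invariance $\|\Sigma - \wh\Sigma\|_\Sigma = \|A\Sigma A - \wt\Sigma\|_{A\Sigma A}$ by a simple trace-cyclicity calculation (identical to the one at the end of the proof of Theorem~\ref{thm:pce-kappa}, using $A^{-1}\Sigma^{-1}A^{-1} = (A\Sigma A)^{-1}$). A union bound over the two failure events completes the argument.

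The only step that requires some care is the sample complexity bookkeeping: one must check that, with $\kappa^* = \mathrm{poly}(d)$, all terms from $\PGCEKappa$ and $\PPCRange$ either appear in or are dominated by the two terms $d^2/\alpha^2$ and $d^2\sqrt{\log(1/\delta)}\polylog/(\alpha\eps)$ in the theorem statement. Everything else (privacy composition, conditioning, and the Mahalanobis-invariance step) is essentially the same computation as in the bounded-$\kappa$ case, so no genuinely new ingredient is needed beyond Theorem~\ref{thm:ppc-nokappa}.
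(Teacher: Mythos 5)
Your proof is correct and takes essentially the same route as the paper's: compose \PPCRange{} with \PGCEKappa{} for privacy (translating zCDP to DP via Lemma~\ref{lem:dpdefns}), use Theorem~\ref{thm:ppc-nokappa} to certify $\id \preceq A\Sigma A \preceq \kappa^*\id$ with $\kappa^* = O(d^4)$, invoke \PGCEKappa{} on the transformed samples, and finish with the Mahalanobis invariance $\|\Sigma - A^{-1}\wt\Sigma A^{-1}\|_\Sigma = \|A\Sigma A - \wt\Sigma\|_{A\Sigma A}$. The only content you add beyond the paper's terse proof is the explicit sample-complexity bookkeeping (absorbing the $d^{3/2}$ term of \PPCRange{} and the $\log^{1/2}\kappa^* = O(\log^{1/2} d)$ factor into the stated bound via $\alpha \le 1$), which is correct and is indeed the one detail the paper glosses over with ``our choice of $n$.''
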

\begin{proof}
Privacy follows from Theorem~\ref{thm:ppc-nokappa}, Theorem~\ref{thm:pce-kappa} and composition of differential privacy (Lemma~\ref{lem:dpcomp}).  Note that we use the fact that zCDP implies DP (Lemma~\ref{lem:dpdefns}).

By construction, the samples $Y_1,\dots,Y_n$ are i.i.d.\ from $\normal(0, A \Sigma A)$.  By Theorem~\ref{thm:ppc-nokappa}, and our choice of $n$, we have that except with probability $O(\beta)$, $A$ is such that $\id \preceq A\Sigma A^T \preceq \kappa^* \id$.  Therefore, combining the guarantees of \PGCEKappa~with our choice of $n$ we obtain that, except with probability $O(\beta)$, $\wt\Sigma$ satisfies $\| A \Sigma A - \wt\Sigma \|_{A \Sigma A} \leq O(\alpha)$.  The theorem now follows from the calculation $\| \Sigma - \wh \Sigma \|_{\Sigma} = \| \Sigma - A^{-1} \wt \Sigma A^{-1} \|_{\Sigma} = \| A \Sigma A - \wt \Sigma \|_{A \Sigma A} \leq O(\alpha).$
\end{proof}
\fi


\section{Private Mean Estimation for Gaussians} \label{sec:gaussian-mean}

Suppose we are given i.i.d.\  samples $X_1,\dots,X_n$, such that $X_i \sim \normal(\mu, \Sigma)$ where $\| \mu \|_2 \leq R$ is an unknown mean and $I \preceq \Sigma \preceq \kappa I$ is an unknown covariance matrix.  Our goal is to find an estimate $\wh\mu$ such that
$$
\| \mu - \wh\mu \|_\Sigma \leq O(\alpha)
$$
where $\| v \|_{\Sigma} = \| \Sigma^{-1/2} v \|_2$ is the Mahalanobis distance with respect to the covariance $\Sigma$.  This guarantee ensures
$$
\sd{\normal(\mu, \Sigma)}{\normal(\wh\mu, \Sigma)} = O(\alpha).
$$

When $\kappa$ is a constant, we can obtain such a guarantee in a relatively straightforward way by applying the mean-estimation procedure for univariate Gaussians due to Karwa and Vadhan~\cite{KarwaV18} to each coordinate.  To handle large values of $\kappa$, we combine their procedure with our procedure for privately learning a strong approximation to the covariance matrix.

\subsection{Mean Estimation for Well-Conditioned Gaussians}

\newcommand{\kvmean}{\ensuremath{\textsc{KVMean}}}
\newcommand{\NaivePME}{\ensuremath{\textsc{NaivePME}}}
\newcommand{\PME}{\ensuremath{\textsc{PME}}}
\newcommand{\PCE}{\ensuremath{\textsc{PCE}}}

We start with the following algorithm for learning the mean of a univariate Gaussian, which is a trivial variant of the algorithm of Karwa and Vadhan~\cite{KarwaV18} to the definition of zCDP.

\begin{thm}[Variant \cite{KarwaV18}] \label{thm:kvmean}
For every $\eps, \delta, \alpha, \beta, R, \sigma > 0$, there is an $\frac{\eps^2}{2}$-zCDP algorithm $\kvmean_{\eps, \alpha, \beta, R, \kappa}(X)$ and an
$$
n = O\left( \frac{\log(\frac{1}{\beta})}{\alpha^2} + \frac{\log(\frac{\log R}{\alpha \beta \eps})}{\alpha \eps} + \frac{\log^{1/2}(\frac{R}{\beta})}{\eps} \right)
$$
such that if $X = (X_1,\dots,X_n)$ are i.i.d.\  samples from $\normal(\mu, \sigma^2)$ for $|\mu| \leq R$ and $1 \leq \sigma^2 \leq \kappa$ then, with probability at least $1-\beta$, $\kvmean$ outputs $\wh{\mu}$ such that $| \mu - \wh\mu | \leq \alpha \kappa$.  
\ifnum\epsdeltastuff=1 There is also an analogous $(\eps,\delta)$-DP algorithm with sample complexity
$$
n = O\left( \frac{\log(\frac{1}{\beta})}{\alpha^2} + \frac{\log(\frac{\log (1/\delta)}{\alpha \beta \eps})}{\alpha \eps} + \frac{\log(\frac{1}{\beta\delta})}{\eps} \right).
$$\fi
\end{thm}

Note that the algorithm only needs an \emph{upper bound} $\kappa$ on the true variance $\sigma^2$ as a parameter.  However, since the error guarantees depend on this upper bound, the upper bound needs to be reasonably tight in order to get a useful estimate of the mean.

We will describe our na\"ive algorithm for the case of $\rho$-zCDP since the parameters are cleaner.  We could also obtain an $(\eps,\delta)$-DP version using the $(\eps,\delta)$-DP version of \NaivePME~and setting parameters appropriately.

\begin{algorithm}[h!] \label{alg:naivemean}
\caption{Na\"ive Private Mean Estimator $\NaivePME_{\rho, \alpha, \beta, R, \kappa}(X)$}
\KwIn{Samples $X_1,\dots,X_n \in \R^d$ from a $d$-variate Gaussian.  Parameters $\rho, \alpha, \beta, R, \kappa > 0$.}
\KwOut{A vector $\wh{\mu}$ such that $\| \mu - \wh{\mu} \|_\Sigma \leq \alpha$.}

Let $\rho' \gets \rho/d~~~~\alpha' \gets \alpha / \kappa\sqrt{d}~~~~\beta' \gets \beta/d$ \\
\For{$j = 1,\dots,d$}{
		Let $\wh{\mu}_{j} \gets \kvmean_{\rho', \alpha', \beta', R, \kappa}(X_{1,j},\dots,X_{n,j})$
}

\Return $\wh\mu = (\wh\mu_1,\dots,\wh\mu_d)$
\end{algorithm}

\begin{thm}
For every $\rho, \alpha, \beta, R, \kappa > 0$, the algorithm $\NaivePME_{\rho, \alpha, \beta, R, \kappa}$ is $\rho$-zCDP and there is an
$$
n = O\left( \frac{\kappa^2 d \log(\frac{d}{\beta})}{\alpha^2} + \frac{\kappa d \log(\frac{\kappa d \log R}{\alpha \beta \rho})}{\alpha \rho^{1/2}} + \frac{\sqrt{d} \log^{1/2}(\frac{Rd}{\beta})}{\rho^{1/2}} \right)
$$
such that if $X_1,\dots,X_n \sim \normal(\mu, \Sigma)$ for $\| \mu \|_2 \leq R$ and $I \preceq \Sigma \preceq \kappa I$ then, with probability at least $1-\beta$, $\NaivePME$ outputs $\wh\mu$ such that $\| \mu - \wh\mu \|_{\Sigma} \leq \alpha.$  \ifnum\epsdeltastuff=1 There is also an analogous $(\eps,\delta)$-DP algorithm with sample complexity
$$
n = O\left( \frac{\kappa^2 d \log(\frac{d}{\beta})}{\alpha^2} + \frac{\kappa d \log(\frac{\kappa d \log (d/\delta)}{\alpha \beta \eps})}{\alpha \eps} + \frac{\sqrt{d}\log^{3/2}(\frac{1}{\beta \delta})}{\eps} \right).
$$\fi
\end{thm}

\begin{proof}
The fact that the algorithm satisfies $\rho$-zCDP follows immediately from the assumed privacy of $\kvmean$ and the composition property for zCDP (Lemma~\ref{lem:dpcomp}).

Next we argue that with probability at least $1-\beta$, for every coordinate $j = 1,\dots,d$, we have $| \mu_j - \wh\mu_j | \leq \alpha/\sqrt{d}$.  Observe that, since $X_1,\dots,X_n$ are distributed as $\normal(\mu, \Sigma)$, the $j$-th coordinates $X_{1,j},\dots,X_{n,j}$ are distributed as $\normal(\mu_j, \Sigma_{jj})$ and, by assumption $|\mu_j| \leq R$ and $1 \leq \Sigma_{jj} \leq \kappa$.  Thus, by Theorem~\ref{thm:kvmean}, we have $| \mu_j - \wh\mu_j | \leq \alpha' \kappa = \alpha/\sqrt{d}$ except with probability at most $\beta' = \beta/d$.  The statement now follows by a union bound.

Assuming that every coordinate-wise estimate is correct up to $\alpha/\sqrt{d}$, we have 
\begin{align*}
\| \mu - \wh\mu \|_{\Sigma}
={} \|\Sigma^{-1/2} (\mu - \wh\mu)   \|_2 
\leq{} \| \Sigma^{-1/2} \|_2 \cdot \| \mu - \wh\mu \|_2 
\leq{} \alpha
\end{align*}
where the final equality uses the coordinate-wise bound on $\mu - \wh\mu$ and the fact that $I \preceq \Sigma$.
To complete the proof, we can plug our choices of $\rho', \alpha', \beta'$ into the sample complexity bound for \kvmean~from Theorem~\ref{thm:kvmean}.

The proof and algorithm for the final statement of the theorem regarding $(\eps,\delta)$-DP are completely analogous.  This completes the proof of the theorem.
\end{proof}

\subsection{An Algorithm for General Gaussians}

If $\Sigma$ were known, then we could easily perform mean estimation without dependence on $\kappa$ simply by applying $\Sigma^{-1/2}$ to each sample and running \NaivePME.  Specifically, if $X_i \sim \normal(\mu, \Sigma)$ then $\Sigma^{-1/2} X_i \sim \normal(\Sigma^{-1/2} \mu, I)$.  Then applying \NaivePME~we would obtain $\wh\mu$ such that $ \| \mu - \wh\mu \|_{\Sigma} = \| \Sigma^{-1/2} (\mu - \wh\mu)   \|_2 \leq \alpha$ and the sample complexity would be independent of $\kappa$.  Using the private preconditioner from the previous section, we can obtain a good enough approximation to $\Sigma^{-1/2}$ to carry out this reduction.

\begin{algorithm}[h!] \label{alg:privatemean}
\caption{Private Mean Estimator $\PME_{\rho, \alpha, \beta, R, \kappa}(X)$}
\KwIn{Samples $X_1,\dots,X_{3n} \in \R^d$ from a $d$-variate Gaussian $\normal(\mu, \Sigma)$ with unknown mean and covariance.  Parameters $\rho, \alpha, \beta, R, \kappa > 0$.}
\KwOut{A vector $\wh{\mu}$ such that $\| \mu - \wh{\mu} \|_\Sigma \leq \alpha$.}
\vspace{10pt}
For $i = 1,\dots,n$, let $Z_i = \frac{1}{\sqrt{2}}(X_{2i} - X_{2i-1})$ \\
Let $A \gets \PPC_{\rho, \beta, \kappa}(Z_1,\dots,Z_{n})$ \\\vspace{10pt}

For $i = 1,\dots,n$, let $Y_i = AX_{2n+i}$ \\
Let $\wt\mu \gets \NaivePME_{\rho, \alpha, \beta, 1000R, 1000}(Y_1,\dots,Y_n)$\\\vspace{10pt}

\Return $\wh\mu \gets A^{-1} \wt\mu$
\end{algorithm}

We capture the properties of $\PME$ in the following theorem
\begin{thm} \label{thm:meanfinal}
For every $\rho, \alpha, \beta, R, \kappa > 0$, the algorithm $\PME_{\rho, \alpha, \beta, R, \kappa}$ is $2\rho$-zCDP and there is an
$$
n = O\left( \frac{d \log(\frac{d}{\beta})}{\alpha^2} + \frac{d \log(\frac{d \log R}{\alpha \beta \rho})}{\alpha \rho^{1/2}} + \frac{\sqrt{d} \log^{1/2}(\frac{Rd}{\beta})}{\rho^{1/2}} + n_{\PPC} \right)
$$
such that if $X_1,\dots,X_n \sim \normal(\mu, \Sigma)$ for $\| \mu \|_2 \leq R$ and $I \preceq \Sigma \preceq \kappa I$ then, with probability at least $1-2\beta$, $\PME$ outputs $\wh\mu$ such that $\| \mu - \wh\mu \|_{\Sigma} \leq \alpha.$  In the above, $n_{\PPC}$ is the sample complexity required by $\PPC_{\rho,\beta,\kappa}$ (Theorem~\ref{thm:cov-kappa}).  \ifnum\epsdeltastuff=1 There is also an analogous $(\eps,\delta)$-DP algorithm with sample complexity
$$
n = O\left( \frac{d \log(\frac{d}{\beta})}{\alpha^2} + \frac{d \log(\frac{d \log (d/\delta)}{\alpha \beta \eps})}{\alpha \eps} + \frac{\sqrt{d}\log^{3/2}(\frac{1}{\beta \delta})}{\eps} + n_{\textsc{PPCU}} \right)
$$
where $n_{\textsc{PPCU}}$ is the sample complexity required by $\PPCRange_{\eps,\delta,\beta}$ (Theorem~\ref{thm:ppc-nokappa}).\fi
\end{thm}

\begin{proof}
Privacy will follow immediately from the composition property of $2\rho$-zCDP and the assumed privacy of \PPC~and \NaivePME.  The sample complexity bound will also follow immediately from the sample complexity bounds for \PCE~and \NaivePME.  Thus, we focus on proving that $\| \mu - \wh\mu \|_{\Sigma} \leq \alpha$.

Since $X_1,\dots,X_{2n}$ are i.i.d.\  from $\normal(\mu, \Sigma)$, the values $Z_1,\dots,Z_n$ are i.i.d.\  from $\normal(0, \Sigma)$.  Therefore, with probability at least $1-\beta$, $\PPC(Z_1,\dots,Z_n)$ returns a matrix $A$ such that $I \preceq A\Sigma A \preceq 1000 I$.  Note that since $I \preceq \Sigma$ and $A\Sigma A \preceq 1000$ we have $\|A\|_{2} \leq 1000$.  

Now, since the samples $X_{2n+1},\dots,X_{3n}$ are i.i.d.\  from $\normal(\mu, \Sigma)$, the values $Y_1,\dots,Y_n$ are i.i.d.\  from $\normal(A\mu, A\Sigma A)$.  Note that $\|A \mu\|_2 \leq \|A\|_2 \|\mu\|_2 \leq 1000R$ and, by assumption, $I \preceq A\Sigma A \preceq 1000 I$.  When we apply $\NaivePME_{\rho, \alpha, \beta, 1000 R, 1000}$ to $Y_1,\dots,Y_n$, with probability at least $1-\beta$ we will obtain $\wt\mu$ such that $\| A\mu - \wt\mu \|_{A \Sigma A} \leq \alpha$.  
Finally, we can write 
$
\| \mu - \wh\mu \|_{\Sigma} = \| A \mu  - A \wh\mu \|_{A \Sigma A} =  \| A \mu  - \wt\mu \|_{A \Sigma A} \leq \alpha.
$
The theorem now follows by a union bound over the two possible failure events.
\ifnum\epsdeltastuff=1 The proof and algorithm for the final statement of the theorem regarding $(\eps,\delta)$-DP are completely analogous.  This completes the proof of the theorem.\fi
\end{proof}

\section{Privately Learning Product Distributions}
\label{sec:product}
In this section we introduce and analyze our algorithm for learning a product distribution $P$ over $\zo^d$ in total variation distance, thereby proving Theorem~\ref{thm:mainproduct} in the introduction. The pseudocode appears in Algorithm~\ref{alg:ppde}. For simplicity of presentation, we assume that the product distribution has mean that is bounded coordinate-wise by $\frac12$ (i.e.~$\ex{}{P} \preceq \frac12$), although we emphasize that this assumption is essentially without loss of generality, and can easily be removed while paying only a constant factor in the sample complexity.

\subsection{A Private Product-Distribution Estimator}

To describe the algorithm, we need to introduce notation for the \emph{truncated mean}.  Given a dataset element (a vector) $X_i \in \zo^{d}$ and $B \geq 0$, we use
\begin{equation*}
\trunc_{B}(X_i) = 
\begin{cases}
X_i &\textrm{if $\|X_i\|_2 \leq B$}\\
\frac{B}{\|X_i\|_{2}} \cdot  X_i &\textrm{if $\|X_i\|_{2} > B$}
\end{cases}
\end{equation*}
to denote the truncation of $x$ to an $\ell_2$-ball of radius $B$.  Given a dataset $X = (X_1,\dots,X_{m}) \in \zo^{m \times d}$ and $B > 0$, we use
\begin{equation*}
\tmean_{B}(X) = \frac{1}{m} \sum_{i=1}^{m} \trunc_{B}(X_i)
\end{equation*}
to denote the mean of the truncated vectors.
Observe that the $\ell_2$-sensitivity of
$\tmean_{B}$ is $\frac{B}{m}$, while the $\ell_2$-sensitivity of the untruncated mean is infinite.  Note that
$\tmean_{B}(X) = \frac{1}{m} \sum_{i=1}^{m} X_{i}$
unless $\|X_{i}\|_2 > B$ for some $i$.  If one
of the inputs to $\tmean_{B}$ does not satisfy
the norm bound then we will say,
``truncation occurred,'' as a shorthand.

We also use the following notational conventions:
Given a dataset element $X_i \in \zo^d$, we will use the array notation $X_i[j]$ to refer to its $j$-th coordinate, and the notation $X_i[S] = (X_i[j])_{j \in S}$ to refer to the vector $X$ restricted to the subset of coordinates $S \subseteq [d]$.  Given a dataset $X = (X_1,\dots,X_m)$, we use the notation $X[S] = (X_1[S],\dots,X_m[S])$ to refer to the dataset consisting of each $X_i$ restricted to the subset of coordinates $S \subseteq [d]$.

\begin{algorithm}[h!] \label{alg:ppde}
\caption{Private Product-Distribution Estimator $\PPDE_{\rho, \alpha, \beta}(X)$}
\KwIn{Samples $X_1,\dots,X_n \in \zo^{d}$ from an unknown product distribution $P$ satisfying $\ex{}{P} \preceq \frac12$.  Parameters $\rho, \alpha, \beta > 0$.}
\KwOut{A product distribution $Q$ over $\{0,1\}^d$ such that $\SD(P,Q) \leq \alpha$.} \vspace{10pt}
Set parameters:
$
R \gets \log_{2}\left(d/2\right)~~~
c \gets 128 \log^{5/4}(d / \alpha\beta(2\rho)^{1/2})~~~c' \gets 128 \log^{3}(dR/\beta)~~~ 
m \gets \frac{c'd}{\alpha^2} + \frac{cd}{\alpha (2\rho)^{1/2}}
$\\

Split $X$ into $R+1$ blocks of $m$ samples each, denoted $X^{r} = (X^{r}_{1},\dots,X^{r}_{m})$ \\
(Halt and output $\bot$ if $n$ is too small.)

Let $q[j] \gets 0$ for every $j \in [d]$, and let $S_1 = [d]$, $u_1 \gets \frac{1}{2}$, $\tau_1 \gets \frac{3}{16}$, and $r \gets 1$ \\\vspace{10pt}
\tcp{Partitioning Rounds}
    \While{$u_r |S_r| \geq 1$}{
	Let $S_{r+1} \gets \emptyset$ \\
	Let $B_r \gets \sqrt{6 u_r |S_r| \log(mR/\beta)}$ \\
    Let $q_r[S_r] \getsr \tmean_{B_r}(X^r[S_r]) + \normal\left(0, \frac{B_r^2}{ 2 \rho m^2} \cdot \id \right)$
        \For{$j \in S_r$}{
        	\eIf{$q_r[j] < \tau_r$}{
            	Add $j$ to $S_{r+1}$
            }
            {
            	Set $q[j] \gets q_r[j]$
            }
        }
    Let $u_{r+1} \gets \frac12 u_{r}$, $\tau_{r+1} \gets \frac12 \tau_{r}$, and $r \gets r+1$
    }\vspace{10pt}
\tcp{Final Round}
    \If{$|S_r| \geq 1$}{
    	Let $B_r \gets \sqrt{6\log(m/\beta)}$\\
    Let $q[S_r] \getsr \tmean_{B_r}(X^r[S_r]) + \normal\left(0, \frac{B_r^2}{ 2 \rho m^2} \cdot \id \right)$
    }\vspace{10pt}
    \Return $Q = \Ber(q[1]) \otimes \dots \otimes \Ber(q[d])$
\end{algorithm}

The privacy analysis of Algorithm~\ref{alg:ppde} is straightforward, based on privacy of the Gaussian mechanism and bounded sensitivity of the truncated mean.

\begin{thm} \label{thm:ppde_dp}
For every $\rho,\alpha,\beta > 0$, $\PPDE_{\rho, \alpha, \beta}(X)$ satisfies $\rho$-zCDP.
\end{thm}
\begin{proof}
Since each individual's data is used only to compute $\tmean_{B_{r}}(X^r)$ for a single round $r$, privacy follows immediately from Lemma~\ref{lem:gaussiandp} and from observing that the $\ell_2$-sensitivity of $\tmean_{B}$ is $\frac{B}{n}$.  Note that, since a disjoint set of samples $X^r$ is used for each round $r$, each sample only affects a single one of the rounds, so we do not need to apply composition.\end{proof}

\subsection{Accuracy Analysis for PPDE}

In this section we prove the following theorem bounding the sample complexity required by \PPDE{} to learn a product distribution up to $\alpha$ in total variation distance.
\begin{thm} \label{thm:ppde_acc}
    For every $d \in \N$, every product distribution
    $P$ over $\zo^{d}$, and every $\rho,\alpha, \beta > 0$,
        if $X = (X_1,\dots,X_n)$ are independent
        samples from $P$ for
    $$
        n = \tilde{O}\left(\frac{d}{\alpha^2} +
        \frac{d}{\alpha \sqrt{\rho}} \right),
    $$
    then with probability at least $1-O(\beta)$,
    $\PPDE_{\rho, \alpha, \beta}(X)$ outputs $Q$,
    such that $\SD(P,Q) \leq O(\alpha)$.  The notation $\tilde{O}(\cdot)$ hides polylogarithmic factors in $d, \frac{1}{\alpha}, \frac{1}{\beta},$ and $\frac{1}{\rho}$.
\end{thm}
 
Before proving the theorem, we will introduce or recall a few useful tools and inequalities.

\paragraph{Distances Between Distributions.}
We use several notions of distance between distributions.  
\begin{defn}
If $P,Q$ are distributions, then
\begin{itemize}
\item the \emph{statistical distance} is $\SD(P,Q) = \frac12 \sum_{x} | P(x) - Q(x) |$, 
\item the \emph{$\chi^2$-divergence} is $\CS(P \| Q) = \sum_{x} \frac{(P(x) - Q(x))^2}{Q(x)}$, and
\item the \emph{KL-divergence} is $\KL(P \| Q) = \sum_{x} P(x) \log \frac{P(x)}{Q(x)}$.
\end{itemize}
\end{defn}

For product distributions $P = P_1 \otimes \dots \otimes P_k$ and $Q = Q_1 \otimes \dots \otimes Q_k$, the $\chi^2$ and $\KL$ divergences are additive, and the statistical distance is subadditive.  Specifically,
\begin{lem} \label{lem:sd-ub}
	Let $P = P_1 \otimes \dots \otimes P_k$ and
    $Q = Q_1 \otimes \dots \otimes Q_k$ be two
    product distributions. Then
    \begin{itemize}
    \item $\sd{P}{Q} \leq \sum_{j=1}^{d} \SD(P_j,Q_j),$
    \item $\CS(P \| Q) \leq \sum_{j=1}^{d} \CS(P_j \| Q_j)$, and
    \item $\KL(P \| Q) = \sum_{j=1}^{d} \KL(P_j \| Q_j).$
    \end{itemize}
\end{lem}

\noindent The three definitions also satisfy some useful relationships.
\begin{lem} \label{lem:pinsker}
For any two distributions $P,Q$ we have $2 \cdot \SD(P,Q)^2 \leq \KL(P \| Q) \leq \CS(P \| Q)$.
\end{lem}

\paragraph{Tail Bounds.}

We need a couple of useful tail bounds for sums of independent Bernoulli random variables.  The first lemma is a useful form of the Chernoff bound.

\begin{lem}\label{thm:chernoff}
For every $p > 0$, if $X_1,\dots,X_m$ are i.i.d.\ samples from $\Ber(p)$
then for every $\eps > 0$
\begin{equation*}
    \pr{}{\frac{1}{m}\sum\limits_{i=1}^{m}{X_i} \geq
    	p + \eps} \leq e^{-\KL(p+\eps||p)\cdot m}
    ~~~\textrm{and}~~~\pr{}{\frac{1}{m}\sum\limits_{i=1}^{m}{X_i} \leq
        p - \eps} \leq e^{-\KL(p-\eps||p)\cdot m}
\end{equation*}
\end{lem}

The next lemma follows easily from a Chernoff bound.

\begin{lem}\label{fact:chernoff}
	Suppose $X_1,\dots,X_k$ are
    sampled i.i.d.\ from a product
    distribution $P$ over $\{0,1\}^t$, where the mean of each coordinate is upper
    bounded by $p$ (i.e. $\ex{}{P}\preceq p$).  Then
    \begin{enumerate}
    \item if $pt > 1$, then for each $i$,
        $\pr{}{\llnorm{X_i}^2 \geq
    	pt\left(1+3\log(\frac{k}{\beta})\right)}
        \leq \frac{\beta}{k}$, and
    \item if $pt \leq 1$, then for each $i$,
        $\pr{}{\llnorm{X_i}^2 \geq
    	6\log(\frac{k}{\beta})}
        \leq \frac{\beta}{k}.$
    \end{enumerate}
\end{lem}

\subsubsection{Analysis of the Partitioning Rounds}
In this section we analyze the progress made during the partitioning rounds.  We show two properties: (1) any coordinate $j$ such that $q[j]$ was set during the partitioning rounds has small error and (2) any coordinate $j$ such that $q[j]$ was not set in the partitioning rounds has a small mean.  We capture the properties of the partitioning rounds that will be necessary for the proof of Theorem~\ref{thm:ppde_acc} in the following lemma.
\begin{lem} \label{lem:ppde_attenuation}
    If $X^{1},\dots,X^{R}$ each contain at least
    $$
    m =\frac{128 d \log^3 (dR/\beta )}
        {\alpha^2}+ \frac{128 d \log^{5/4}
        (d/\alpha\beta(2\rho)^{1/2})}{\alpha \rho^{1/2}}
    $$
    i.i.d. samples from $P$, then, with probability
    at least $1-O(\beta)$, in every partitioning round
    $r = 1,\dots,R$:
    \begin{enumerate}
        \item If a coordinate $j$ does not go to
            the next round (i.e.~$j \in S_{r}$ but
            $j \not\in S_{r+1}$) then $q[j]$ has
            small $\chi^2$-divergence with $p[j]$, 
            $$
            \frac{4(p[j] - q[j])^2}{q[j]} \leq \frac{\alpha^2}{d}.
            $$
            Thus, if $S_{A}$ consists of all coordinates
            such that $q[j]$ is set in one of the
            partitioning rounds,
            $\SD(P[S_{A}], Q[S_{A}]) \leq \alpha.$
    
        \item If a coordinate $j$ does go to the next
            round (i.e.~$j \in S_{r}, S_{r+1}$), then
            $p[j]$ is small,
            $$
            p[j] \leq u_{r+1} = \frac{u_{r}}{2}.
            $$
    \end{enumerate}
\end{lem}
\begin{proof}
We will prove the lemma by induction on $r$
(taking a union bound over the events that
one of the two conditions fails in a given
round $r$).  Therefore, we will assume that
in every round $r$, $p[j] \leq u_{r}$ for
every $j \in S_{r}$ and prove that if this
bound holds then the two conditions in the
lemma hold.  For the base of the induction,
observe that, by assumption, $p[j] \leq u_{1} = \frac12$
for every $j \in S_{1} = [d]$.  In what
follows we fix an arbitrary round $r \in [R]$.
Throughout the proof, we will use the notation
$\tilde{p}_{r} = \frac{1}{m} \sum_{i = 1}^{m} X^{r}_{i}$
to denote the empirical mean of the $r$-th
block of samples.

\begin{clm} \label{clm:atten-samp-err}
If $\tilde{p}_r[j] = \frac{1}{m} \sum_{i=1}^{m} X^{r}_{i}[j]$
and $p_j > \frac{1}{d}$, then
with probability at least $1-\frac{2\beta}{R}$,
$$
\forall j \in S_{r}~~\left| p[j] - \tilde{p}_r[j] \right|
    \leq \sqrt{\frac{4p[j]
    \log\left( \frac{dR}{\beta} \right)}{m}}
$$
\end{clm}
\begin{proof}[Proof of Claim \ref{clm:atten-samp-err}]
    We use a Chernoff Bound (Theorem \ref{thm:chernoff}), and
    facts that 
    $$\forall \gamma > 0~~~~\KL(p+\gamma||p) \geq \frac{\gamma^2}
    {2(p+\gamma)}~~~\textrm{and}~~~\KL(p-\gamma||p)
    \geq \frac{\gamma^2}{2p},$$ and set
    $$\gamma = \sqrt{\frac{4p[j] \log \left( \frac{dR }{ \beta} \right)}{m}}.$$
    Note that when $p[j] > \frac{1}{d}$, due
    to our choice of parameters, $\gamma \leq p[j]$.
    Therefore, $2(p[j]+\gamma) \leq 4p[j]$.
    Finally, taking a union bound over the cases
    when $\tilde{p}_r[j] \leq p[j] - \gamma$ and when
    $\tilde{p}_r[j] \geq p[j] + \gamma$, we
    prove the claim.
\end{proof}

\begin{clm} \label{clm:atten-no-trunc}
    With probability at least $1-\frac{\beta}{R}$,
    for every $X^{r}_{i} \in X^{r}$, $\|X^{r}_{i}\|_{2} \leq B_{r}$,
    so no rows of $X^{r}$ are truncated in the
    computation of $\tmean_{B_{r}}(X^{r})$.
\end{clm}
\begin{proof}[Proof of Claim \ref{clm:atten-no-trunc}]
	By assumption, all marginals specified
    by $S_r$ are upper bounded by $u_r$. Now,
    the expected value of $\|X^r_i\|_2^2$ is
    at most $u_r|S_r|$. Since
	$B_r = \sqrt{u_r |S_r| 6\log(mR/\beta)}$,
    we know that $B_r \geq \sqrt{u_r |S_r| \left(1+3\log(mR/\beta)\right)}$.
    The claim now follows from a Chernoff Bound (Lemma \ref{fact:chernoff})
    and a union bound over the entries of $X^r$.
\end{proof}
    
\begin{clm} \label{clm:atten-priv-err}
With probability at least $1-\frac{2\beta}{R}$,
$$
\forall j \in S_{r}~~\left| \tilde{p}_r[j] - q_{r}[j] \right| \leq
	\sqrt{\frac{6 u_r |S_r|
	\log\left(\frac{mR}{\beta}\right)
    \log\left(\frac{2dR}{\beta}\right)}{\rho m^2}}
$$
\end{clm}
\begin{proof}[Proof of Claim \ref{clm:atten-priv-err}]
	We assume that all marginals specified
    by $S_r$ are upper bounded by $u_r$. From
    Claim \ref{clm:atten-no-trunc}, we know that,
    with probability at least $1-\beta/R$, there is no truncation,
    so $\tmean_{B_{r}}(X^r[S_r]) = \frac{1}{m}
    \sum_{X_i^{r} \in X^{r}} X^{r}[S_r] = \tilde{p}_r[S_{r}]$.
    So, the Gaussian noise
    is added to $\tilde{p}_r[j]$ for each $j \in S_r$.
    Therefore, the only source of error here is the
    Gaussian noise. Using the standard tail bound
    for zero-mean Gaussians (Lemma~\ref{fact:gaussian-error}),
    with the following parameters,
    $$
    \sigma = \sqrt{\frac{3 u_r |S_r|
	\log\left(\frac{mR}{\beta}\right)}{\rho m^2}}
    ~~~\textrm{and}~~~ t = \sqrt{2\log\left(\frac{2dR}{\beta}\right)},$$
    and taking a union bound over all
    coordinates in $S_r$, and the event
    of truncation, we obtain the claim.
\end{proof}

Plugging our choice of $m$ into
Claims~\ref{clm:atten-samp-err} and~\ref{clm:atten-priv-err}, applying the triangle inequality, and simplifying, we get
that (with high probability),
$$
\abs{p[j] - q_r[j]} \leq \frac{\alpha}{\log^{1/4}
	\left(\frac{d}{\beta}\right)}\sqrt{\frac{u_r}{d}}. 
$$
To simplify our calculations, we will define
$$
e_r = \frac{\alpha}{\log^{1/4}
	\left(\frac{d}{\beta}\right)}\sqrt{\frac{u_r}{d}} 
$$
to denote the above bound on $|p[j] - q[j]|$ in round $r$.

\begin{clm}\label{clm:atten-chi-ub}
    For all $j \in S_r$, with probability,
    at least, $1-\frac{4\beta}{R}$,
    $$\chi^2\left(p[j],q_r[j]\right) \leq
        \frac{4(p[j]-q_r[j])^2}{q_r[j]}.$$
\end{clm}
\begin{proof}
    For every $r$, and $d$ more than some absolute constant,
    $\abs{e_r} \leq \frac{1}{4},$. Also, by assumption, $p[j] \leq \frac{1}{2},$
    for all $j \in [d]$. Therefore, for
    every $r$, and every $j \in S_r$,
    \begin{align*}
        \chi^2\left(p[j],q_r[j]\right) =&~
            \frac{(p[j]-q_r[j])^2}{q_r[j]} + \frac{(p[j]-q_r[j])^2}{1-q_r[j]}\\
            =&~ \frac{(p[j]-q_r[j])^2}{q_r[j](1-q_r[j])}\\
            \leq&~ \frac{4(p[j]-q_r[j])^2}{q_r[j]}
    \end{align*}
\end{proof}

\begin{clm} \label{clm:atten-marg-fix}
    With probability at least
    $1-\frac{4\beta}{R}$,
    for every $j \in S_{r}$,
    $$
    q_{r}[j] \geq \tau_{r} \Longrightarrow
        \frac{4(p[j] - q[j])^2}{q[j]} \leq \frac{\alpha^2}{d}.
    $$
\end{clm}
\begin{proof}[Proof of Claim \ref{clm:atten-marg-fix}]
	We want to show the following inequality.
    \begin{align*}
    	\frac{4\left(p_j-q_r[j]\right)^2}{q_r[j]} &\leq
    		\frac{\alpha^2}{d}
    \end{align*}
    We know
    that $\abs{p_j-q_r[j]} \leq e_r$ with
    high probability. Thus, we need to show that if
    $q_r[j] \geq \tau_r$, the following
    inequality holds:
    \begin{align*}
    	\frac{4e_r^2}{q_r[j]} \leq
    		\frac{\alpha^2}{d}
        \iff \frac{4de_r^2}{\alpha^2} &\leq q_r[j].
    \end{align*}
    We now show that the left-hand side is at most $\tau_r$, which completes the proof.
    In the algorithm, we have $\tau_r = \frac{3}{4}u_{r+1}$:
    \begin{align*}
    	\frac{4de_r^2}{\alpha^2} \leq \frac{3}{4}u_{r+1} 
        \iff \frac{4u_r}{\log^{1/2}
			\left(\frac{d}{\beta}\right)}
        	\leq \frac{3}{4}u_{r+1} 
        \iff \frac{16}{3\log^{1/2}
			\left(\frac{d}{\beta}\right)} \leq \frac{1}{2}.
    \end{align*}
    Note that the final inequality is satisfied
    as long as $d$ is larger than some absolute constant.
\end{proof}

\begin{clm} \label{clm:atten-next-rnd}
    With probability at least $1-\frac{4\beta}{R}$,
    for every $j \in S_{r}$,
    $$
    q_{r}[j] < \tau_{r} \Longrightarrow p[j] \leq u_{r+1} = \frac{u_{r}}{2}.
    $$
\end{clm}
\begin{proof}[Proof of Claim \ref{clm:atten-next-rnd}]
    We know that with high probability,
    $p_j \leq q_r[j] + e_r$. But since
    $q_r[j] < \tau_i$, we know that
    $p_j < \tau_r + e_r$. Also,
    $\tau_r = \frac{3}{4}u_{r+1}$. Then it is sufficient
    to show the following.
    \begin{align}
    	&e_r \leq \frac{u_{r+1}}{4}\nonumber\\
        \iff{} &\frac{\alpha}{\log^{1/4}
			\left(\frac{d}{\beta}\right)}
        	\sqrt{\frac{u_r}{d}} \leq \frac{u_{r+1}}{4}\nonumber\\
        \iff{} &\frac{16\alpha^2}{d u_1 \log^{1/2}
			\left(\frac{d}{\beta}\right)}
        	\leq \left(\frac12\right)^{r+1}\nonumber\\
        \iff{} &\left(\frac{16\alpha^2}{d u_1 \log^{1/2}
			\left(\frac{d}{\beta}\right)}\right)^{\frac{1}{r+1}}
            \leq \frac12. \label{eq:atten-next-rnd}
    \end{align}
    Now we have
    \begin{align*}
	\left(\frac{16\alpha^2}{d u_1 \log^{1/2} \left(\frac{d}{\beta}\right)}\right)^{\frac{1}{r+1}}
	={} &\left( \frac{1}{d} \right)^{\frac{1}{r+1}} \cdot \left(\frac{16\alpha^2}{u_1 \log^{1/2} \left(\frac{d}{\beta}\right)}\right)^{\frac{1}{r+1}} \\
	\leq{} &\frac{1}{2} \cdot \left(\frac{16\alpha^2}{u_1 \log^{1/2} \left(\frac{d}{\beta}\right)}\right)^{\frac{1}{r+1}} \tag{$r \leq R = \log_2(d/2)$} \\
	\leq{} &\frac{1}{2}
	\end{align*}
	where the last inequality holds for $d$ larger than some absolute constant.
    Therefore,~\eqref{eq:atten-next-rnd} is satisfied
    for all $1 \leq r \leq R$.
\end{proof}

Claim \ref{clm:atten-next-rnd} completes the
inductive step of the proof. It establishes
that at the beginning of round $r+1$, $p_j < u_{r+1}$
for all $j \in S_{r+1}$.  Now we can take a union bound over all the failure events in each round and over each of the $R$ rounds so that the conclusions of the Lemma hold with probability $1-O(\beta)$.
This completes the proof of Lemma \ref{lem:ppde_attenuation}.
\end{proof}

\subsubsection{Analysis of the Final Round}

In this section we show that the error of the coordinates $j$ such that $q[j]$ was set in the final round $r$ is small.

\begin{lem} \label{lem:ppde_final}
    Let $r$ be the final round for which
    $u_r |S_r| \leq 1$.  If $p[j] \leq u_r$
    for every $j \in S_{r}$, and $X^{r}$
    contains at least
    $$
    m =  \frac{128 d \log^3 (dR/\beta )}{\alpha^2} +
        \frac{128 d \log^{5/4}(d/\alpha\beta(2\rho)^{1/2})}{\alpha \rho^{1/2}}
    $$
    i.i.d.~samples from $P$, then with
    probability at least $1-O(\beta)$,
    then $\SD(P[S_r], Q[S_r]) \leq O(\alpha)$
\end{lem}

\begin{proof}
    Again, we use the notation,
    $\tilde{p}_{r} = \frac{1}{m} \sum_{i = 1}^{m} X^{r}_{i}$,
    for the rest of this proof. First we have,
    two claims that bound the difference
    between $p[j]$ and $\tilde{p}[j]$.
    \begin{clm} \label{clm:final-samp-err-lrg}
        For each $j \in S_r$, such that $p_j > \frac{1}{d}$,
        with probability at least $1-2\beta/d$,
        we have,
        $$
        \forall j \in S_{r}~~\left| p[j] - \tilde{p}_r[j] \right|
            \leq \sqrt{\frac{4p[j] \log\left( \frac{d}{\beta} \right)}{m}}
        $$
    \end{clm}
    \begin{proof}[Proof of Claim \ref{clm:final-samp-err-lrg}]
        The proof is identical to that of
        Claim \ref{clm:atten-samp-err}.
    \end{proof}
    
    \begin{clm} \label{clm:final-samp-err-sml}
        For each $j \in S_r$, such that $p_j \leq \frac{1}{d}$,
        with probability at least $1-4\beta/d$,
        we have,
        $$
        \forall j \in S_{r}~~\left| p[j] - \tilde{p}_r[j] \right|
        	\leq \frac{\alpha}{4d\log\left( \frac{d}{\beta} \right)}
        $$
    \end{clm}
    \begin{proof}[Proof of Claim \ref{clm:final-samp-err-sml}]
        We use a Chernoff Bound (Theorem \ref{thm:chernoff}), and
        facts that 
        $$\forall \gamma > 0~~~\KL(p+\gamma||p) \geq \frac{\gamma^2}
        {2(p+\gamma)}~~~\textrm{and}~~~\KL(p-\gamma||p)
        \geq \frac{\gamma^2}{2p}.$$
        There are two cases to analyze.
        \begin{itemize}
            \item $p_j > \frac{\alpha^2}
                {16d\ln^2\left(\frac{dR}{\beta}\right)}$:
                In this case, setting
                $\gamma = \sqrt{\frac{4p[j]
                \log\left( \frac{d}{\beta} \right)}{m}}$,
                we get $\gamma \leq p[j]$.
                Then we apply Theorem \ref{thm:chernoff}
                with appropriate parameters.
            \item $p_j \leq \frac{\alpha^2}
                {16d\ln^2\left(\frac{dR}{\beta}\right)}$:
                In this case, setting
                $\gamma = \frac{4\log\left( \frac{d}{\beta} \right)}{m}$,
                we get $\gamma \geq p[j]$.
                Then we apply Theorem \ref{thm:chernoff}
                with the corresponding parameters.
        \end{itemize}
        Since  $p[j] \leq \frac{1}{d}$,
        if $m$ satisfies the assumption of the lemma, then $$\max\left\{
            \sqrt{\frac{4p[j]
            \log\left( \frac{d}{\beta} \right)}{m}},
            \frac{4\log\left( \frac{d}{\beta} \right)}{m}
            \right\} \leq
            \frac{\alpha}{4d\log\left( \frac{d}{\beta} \right)}.$$
        Therefore, with high probability, the maximum
        error is
        $\frac{\alpha}{4d\log\left( \frac{d}{\beta} \right)}$.
    \end{proof}

    \begin{clm} \label{clm:final-no-trunc}
        With probability at least $
        1-\beta$,  for every $X^{r}_{i}
        \in X^{r}$, $\|X^{r}_{i}\|_{2} \leq B_{r}$,
        so no rows of $X^{r}$ are truncated in the
        computation of $\tmean_{B_{r}}(X^{r})$.
    \end{clm}
    \begin{proof}[Proof of Claim \ref{clm:final-no-trunc}]
    	We assume that all marginals specified
        by $S_r$ are upper bounded by $u_r$. Now,
        the expected value of $\|X^r_i\|_2^2$ is
        upper bounded by $1$. Also,
    	$B_r = \sqrt{6\log(m/\beta)}$.
        With this, we use a Chernoff Bound (Fact \ref{fact:chernoff})
        and get the required
        result.
    \end{proof}
    
    \begin{clm} \label{clm:final-priv-err}
    With probability at least $1-2\beta$,
    $$
    \forall j \in S_{r}~~\left| \tilde{p}[j] - q_{r}[j] \right| \leq
    	\sqrt{\frac{6
    	\log\left(\frac{m}{\beta}\right)
        \log\left(\frac{2d}{\beta}\right)}{\rho m^2}}
    $$
    \end{clm} 
    \begin{proof}[Proof of Claim \ref{clm:final-priv-err}]
    	We assume that all marginals specified
        by $S_r$ are upper bounded by $u_r$. From
        Claim \ref{clm:final-no-trunc}, we know that
        with high probability, there is no truncation,
        so, the Gaussian noise
        is added to $\tilde{p}_j$ for each $j \in S_r$.
        Therefore, the only source of error here is the
        Gaussian noise. 
        Using the standard tail bound
        for zero-mean Gaussians (Lemma~\ref{fact:gaussian-error}),
        with the following parameters,
        $$
        \sigma = \sqrt{\frac{3
	    \log\left(\frac{m}{\beta}\right)}{\rho m^2}}
        ~~~\textrm{and}~~~ t = \sqrt{2\log\left(\frac{2d}{\beta}\right)},$$
        and taking the union bound over all
        columns of the dataset in that round and
        the event of truncation, we obtain the claim.
    \end{proof}

    By the above claim, the magnitude of
    Gaussian noise added to each coordinate
    in the final round is less than,
    $$\frac{\alpha}{2d\log^{1/4}\left(d/\alpha\beta\sqrt{2\rho}\right)}.$$
    
    We partition the set $S_{r}$ into
    $S_{r,L} = \set{j \in S_{r} : p[j] \leq \frac{1}{d}}$
    and $S_{r,H} = \set{j \in S_{r} : p[j] > \frac{1}{d}}$.

    \begin{clm} \label{clm:final-high}
        $\SD(P[S_{r,H}], Q[S_{r,H}]) \leq \alpha$
    \end{clm}
    \begin{proof}[Proof of Claim~\ref{clm:final-high}]
        For every coordinate $j \in S_{r,H}$, due Claim
        \ref{clm:final-samp-err-lrg}, and from the upper
        bound on the Gaussian noise added, we know that,
        $$
        \abs{p[j] - q[j]} \leq \frac{\alpha}{\log^{1/4}
            \left(\frac{d}{\beta}\right)}\sqrt{\frac{p[j]}{d}}
            = e_{r,H}. 
        $$
        Now, we know that $p[j] > \frac{1}{d}$, and
        $e_{r,H} \leq \frac{p[j]}{2}$, when $d$ is greater
        than some absolute constant.
        So, we can bound the $\chi^2$-divergence between
        such $p[j]$ and $q[j]$ by,
        \begin{align*}
            \frac{4(p[j] - q[j])^2}{q[j]} \leq&~
                \frac{4e_{r,H}^2}{p[j] - e_{r,H}}\\
                \leq&~ \frac{8\alpha^2}{d\log^{1/2}\left(\frac{d}{\beta}\right)}\\
                \leq&~ \frac{\alpha^2}{d}.
        \end{align*}
        Thus, we have $\CS(P[S_{r,H}] \| Q[S_{r,H}])
        \leq \alpha^2$, which implies
        $\SD(P[S_{r,H}], Q[S_{r,H}]) \leq \alpha$.
    \end{proof}
    
    \begin{clm} \label{clm:final-low}
        $\SD(P[S_{r,L}], Q[S_{r,L}]) \leq \alpha$
    \end{clm}
    \begin{proof}[Proof of Claim~\ref{clm:final-low}]
        By Claim \ref{clm:final-samp-err-sml}, and from
        the upper bound on the Gaussian noise added,
        for every coordinate $j \in S_{r,L}$, we have,
        \begin{align*}
            \left| p[j] - q[j] \right| \leq&~
                2\cdot\max\left\{
                \frac{\alpha}{4d\log\left(\frac{d}{\beta}\right)},
                \frac{\alpha}{2d\log^{1/4}
                \left(d/\alpha\beta\sqrt{2\rho}\right)}\right\}\\
                \leq&~ \frac{\alpha}{d\log^{1/4}\left(d/\beta\right)}.
        \end{align*}
        Thus, by Lemma~\ref{lem:sd-ub}, we have
        $\SD(P[S_{r,L}], Q[S_{r,L}]) \leq \alpha$.
    \end{proof}

    Now, combining Claims~\ref{clm:final-high}
    and~\ref{clm:final-low}, and applying
    Lemma~\ref{lem:sd-ub} completes the proof.
    \end{proof}

\subsubsection{Putting it Together}
In this section we combine Lemmas~\ref{lem:ppde_attenuation} and~\ref{lem:ppde_final} to prove Theorem~\ref{thm:ppde_acc}.  First, by Lemma~\ref{lem:ppde_attenuation}, with probability at least $1-O(\beta)$, if $S_{A}$ is the set of coordinates $j$ such that $q[j]$ was set in any of the partitioning rounds, then 
\begin{enumerate}
\item $\SD(P[S_{A}], Q[S_{A}]) \leq O(\alpha)$ and
\item if $j \not\in S_{A}$ and $r$ is the final round, then $p[j] \leq u_{r}$.
\end{enumerate}
Next, by the second condition, we can apply Lemma~\ref{lem:ppde_final} to obtain that if $S_{F}$ consists of all coordinates set in the final round, then with probability at least $1-O(\beta)$, $\SD(P[S_{F}], Q[S_{F}]) \leq O(\alpha)$.  Finally, we use a union bound and Lemma~\ref{lem:sd-ub} to conclude that, with probability at least $1-O(\beta)$,
$$
\SD(P,Q) \leq \SD(P[S_{A}], Q[S_{A}]) + \SD(P[S_{F}], Q[S_{F}]) = O(\alpha).
$$
This completes the proof of Theorem~\ref{thm:ppde_acc}.

\section{Lower Bounds for Private Distribution Estimation}
\label{sec:lb}
In this section we prove a number of lower bounds for private distribution estimation, matching our upper bounds up to polylogarithmic factors.
For estimating the mean of product or Gaussian distributions, we prove lower bounds for the weaker notion of $(\varepsilon,\delta)$-differential privacy, but still show that they nearly match our upper bounds which are under the stronger notion of $\frac{\varepsilon^2}{2}$-zCDP.
For estimating the covariance of Gaussian distributions, our lower bound is for $\eps$-DP, a stronger notion than our upper bound, which is $\frac{\eps^2}{2}$-zCDP.
Proving lower bounds for covariance estimation with stronger privacy (i.e., concentrated or approximate differential privacy) is an interesting open question.

Our proofs generally consist of two parts.
First, we prove a lower bound on the sample complexity required for private parameter estimation.
For our lower bounds on mean estimation, we use modifications of the ``fingerprinting'' method.
Then, we show that if two distributions are distance in parameter distance (either $\ell_2$-distance between their means, or Frobenius distance between their covariances), then they will be far in statistical distance.
Though we consider questions of the latter sort to be very natural, we were surprised to find they have not been studied as sigificantly as we expected.
For example, while a lower bound on the statistical distance between Gaussian distributions in terms of the $\ell_2$-distance between their means is folklore, a bound in terms of the Frobenius distance between their covariance matrices is fairly recent~\cite{DevroyeMR18a}.
Furthermore, to the best of our knowledge, our lower bound on the statistical distance between binary product distributions in terms of the $\ell_2$-distance between their means is entirely novel.

In Section~\ref{sec:lb-product}, we describe our lower bounds for learning product distributions.
In Section~\ref{sec:lb-gaussian-mean}, we describe our lower bounds for learning Gaussian distributions with known covariance.
Finally, in Section~\ref{sec:lb-gaussian-cov}, we describe our lower bounds for learning the covariance of Gaussian distributions.

\subsection{Privately Learning Product Distributions}
\label{sec:lb-product}

In this section we prove that our algorithm for learning product distributions has optimal sample complexity up to polylogarithmic factors.  
Our proof actually shows that our algorithm is optimal even if we only require the learner to work for \emph{somewhat balanced} product distributions (i.e.~those whose marginals are bounded away from $0$ and $1$) and allow the learner to satisfy the weaker variant of $(\eps,\delta)$-DP. 
The lower bound has two steps: (1) a proof that estimating the mean of a somewhat balanced product distribution up to $\alpha$ in $\ell_2$ distance (Lemma~\ref{lem:product-fp}) and (2) a proof that estimating a somewhat balanced product distribution in total-variation distance implies estimating its mean in $\ell_2$ distance (Lemma~\ref{lem:product-param}).
Putting these two lemmata together immediately implies the following theorem:
\begin{theorem}
\label{thm:product-lb}
For any $\alpha \leq 1$ smaller than some absolute constant, any $\left(\eps, \frac{3}{64n}\right)$-DP mechanism which estimates a product distribution to accuracy $\leq \alpha$ in total variation distance with probability $\geq 2/3$ requires $n = \Omega\left(\frac{d}{\alpha\eps\log d}\right)$ samples.
\end{theorem}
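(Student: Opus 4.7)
The theorem statement itself flags the two-step structure I would follow. The plan is to first prove a parameter-estimation lower bound (Lemma~\ref{lem:product-fp}): any $(\eps,\delta)$-DP mechanism that, on samples from a product distribution $P$ over $\zo^d$ with mean $p \in [1/4,3/4]^d$, outputs $\wh p$ with $\|\wh p - p\|_2 \leq \alpha$ with probability $\geq 2/3$ must use $n = \Omega\!\left(\frac{d}{\alpha \eps \log d}\right)$ samples, provided $\delta \lesssim 1/n$. The second step is Lemma~\ref{lem:product-param}: whenever $P,Q$ are product distributions on $\zo^d$ with means in $[1/4,3/4]^d$, then $\SD(P,Q) \geq c \cdot \|p-q\|_2$ for a universal constant $c$. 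Combining them, a TV-accurate estimator can be post-processed (by reading off its marginals) into an $\ell_2$-accurate mean estimator at parameter error $\alpha/c$, which plugs into Lemma~\ref{lem:product-fp} and gives exactly the bound claimed.

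For Lemma~\ref{lem:product-fp} I would use a standard fingerprinting argument adapted to bounded-range marginals. Draw $p$ coordinatewise from a symmetric Beta prior supported on $[1/4,3/4]$, and define for each sample $X_i$ the correlation score $Z_i(\wh p) = \langle X_i - p, \wh p - p\rangle$. The privacy-plus-exchangeability argument of Bun--Ullman--Vadhan / Dwork--Smith--Steinke--Ullman--Vadhan shows that $\big|\EE[\sum_i Z_i]\big| \lesssim (\eps + n\delta)\cdot \EE[\|\wh p - p\|_2]$ when $\wh p$ is $(\eps,\delta)$-DP, while accuracy together with a per-coordinate anti-concentration bound (this is where the $\log d$ factor will enter, via a union bound that forces us to ensure correlations are detectable on \emph{every} coordinate) yields $\EE[\sum_i Z_i] \gtrsim \alpha^2 n / \log d$. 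Rearranging and setting $\delta = O(1/n)$ gives the claimed $n = \Omega(d/(\alpha \eps \log d))$. The main obstacle here is keeping track of the $\log d$: one has to use the somewhat-balanced prior carefully so that the per-coordinate variances stay $\Theta(1)$, and control the anti-concentration of the fingerprinting statistic on the low-variance tail of the Beta prior.

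For Lemma~\ref{lem:product-param} I would apply a Hellinger (or $\chi^2$) calculation coordinatewise. For two Bernoullis with parameters $p_j, q_j \in [1/4,3/4]$ one has $H^2(\mathrm{Ber}(p_j),\mathrm{Ber}(q_j)) = \Theta((p_j - q_j)^2)$ because $p_j(1-p_j), q_j(1-q_j)$ are bounded away from zero. Since squared Hellinger tensorizes subadditively for product distributions, $H^2(P,Q) \leq \sum_j H^2(\mathrm{Ber}(p_j),\mathrm{Ber}(q_j)) = O(\|p-q\|_2^2)$, but we want the \emph{opposite} inequality between TV and $\ell_2$, so I would instead lower-bound $\SD(P,Q)$ directly: choose the index $j^\star$ maximizing $|p_j - q_j|$ and couple all other coordinates; then iterate, or more simply invoke the lower bound $\SD(P,Q) \geq 1 - \prod_j (1 - \tfrac12|p_j-q_j|)$ and expand to first order, which for balanced marginals gives $\SD(P,Q) \geq \Omega(\min(1,\|p-q\|_2))$ after a standard tensorization lemma (effectively Le Cam's inequality applied coordinatewise with the Hellinger tensorization). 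The mild obstacle here is that the naive bound $\SD(P,Q) \gtrsim \|p-q\|_1/d$ is too weak; one genuinely needs the $\ell_2$ lower bound, which follows from $\SD \geq \tfrac{1}{\sqrt{2}} H$ together with the exact (not just upper) tensorization identity $1 - H^2(P,Q)/2 = \prod_j (1 - H^2(\mathrm{Ber}(p_j),\mathrm{Ber}(q_j))/2)$, giving $H^2(P,Q) \gtrsim \min(1,\|p-q\|_2^2)$.

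Putting it all together: the composition of the (assumed) TV-learner with the coordinatewise readout is $(\eps,\delta)$-DP by post-processing and achieves $\ell_2$ error $O(\alpha)$ with probability $\geq 2/3$, so Lemma~\ref{lem:product-fp} forces $n = \Omega(d/(\alpha \eps \log d))$, which is Theorem~\ref{thm:product-lb}.
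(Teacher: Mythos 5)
Your high-level decomposition — a fingerprinting lower bound for mean estimation in $\ell_2$ (Lemma~\ref{lem:product-fp}), combined with a ``TV is at least $\ell_2$-mean-distance'' lemma (Lemma~\ref{lem:product-param}) — matches the paper's structure. However, there are two issues, one of which is a genuine gap in the argument.

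The serious flaw is in your route to Lemma~\ref{lem:product-param}. You invoke ``$\SD \geq \tfrac{1}{\sqrt2}H$'' together with Hellinger tensorization, but this inequality is backwards: the true relations are $\tfrac12 H^2(P,Q) \leq \SD(P,Q) \leq H(P,Q)\sqrt{1-H^2/4}$, so the only lower bound Hellinger gives you is $\SD \geq \tfrac12 H^2$. Your tensorization computation correctly yields $H^2(P,Q) = \Theta(\|p-q\|_2^2)$, so the Hellinger route produces only $\SD(P,Q) \gtrsim \|p-q\|_2^2$, which is a quadratically weaker conclusion than the needed $\SD(P,Q) \gtrsim \|p-q\|_2$. (Your other candidate, $\SD(P,Q) \geq 1 - \prod_j(1-\tfrac12|p_j-q_j|)$, is also the wrong direction: the maximal-coupling bound gives an \emph{upper} bound $\SD \leq 1 - \prod_j(1-\SD(P_j,Q_j))$.) The paper gets the linear scaling in $\alpha$ by a different route: it lower-bounds $\SD$ directly by showing that the log-likelihood ratio $Z(x) = \log(P(x)/Q(x))$ is approximately Gaussian with standard deviation $\sigma = \Theta(\|p-q\|_2)$, via Berry--Esseen, so $\pr{x\sim P}{Z(x) > \alpha} = \Omega(1)$, which pushes through to $\SD \geq \Omega(\alpha)$. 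That CLT-style argument is not an optional refinement; it is how one escapes the $\SD \geq H^2/2$ barrier that is genuinely loose for product distributions with many small-discrepancy coordinates.

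The second, lesser issue concerns where the $\log d$ enters. In the paper, Lemma~\ref{lem:product-fp} has no $\log d$ at all --- the fingerprinting argument controls the \emph{expected} $\ell_2$ error directly, giving $n \geq d/(72\alpha\eps)$. The $\log d$ appears only in the reduction inside Theorem~\ref{thm:product-lb}: the constant-probability TV learner is repeated $O(\log d)$ times to boost its success probability to $1 - 1/d^2$, which (since the worst-case error is $O(\sqrt d)$) gives an algorithm with small expected error, at a multiplicative cost of $O(\log d)$ in sample complexity. Your proposal to absorb the $\log d$ into a per-coordinate union bound inside the fingerprinting lemma is not how the paper does it, and I do not see that it produces the stated bound; in particular, a bound of the form $\EE[\sum_i Z_i] \gtrsim \alpha^2 n/\log d$ is not what the fingerprinting argument yields nor what the downstream rearrangement wants.
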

\begin{proof}
We will show that no algorithm can estimate the mean of a product distribution up to accuracy $\alpha$ with probability $2/3$ with fewer than $O\left(\frac{d}{\alpha\eps\log d}\right)$ samples (for an appropriate choice of the constant in the big-Oh notation).
By Lemma~\ref{lem:product-param}, this would imply an algorithm with the same sample complexity which estimates the distribution in total variation distance up to accuracy $C\alpha$.
The theorem statement follows after a rescaling of $\alpha$.

Suppose that such an algorithm existed.
By repeating the algorithm $O(\log d)$ times, the success probability could be boosted by a standard argument\footnote{Specifically, repeat the algorithm $O(\log d)$ times, and choose any output which is close to at least half the outputs. This is correct with high probability by using the Chernoff bound and the fact that the original algorithm was accurate with probability $\geq 2/3$.} to $\geq 1 - 1/d^2$, with the overall algorithm requiring $O\left(\frac{d}{\alpha\eps}\right)$ samples.
Since the domain is bounded, any answer will be, at worst, an $O(\sqrt{d})$-accurate estimate in $\ell_2$-distance.
This implies that the expected accuracy of the resulting algorithm is at most $O(\alpha)$, which is precluded by Lemma~\ref{lem:product-fp}, for an appropriate choice of constant in the big-Oh notation.
\end{proof}

\begin{lem}
\label{lem:product-fp}
    If $M : \pmo^{n \times d} \to [-\frac13,\frac13]^d$ is
    $\left(\eps, \frac{3}{64n}\right)$-DP, and for every
    product distribution $P$ over $\pmo^{d}$
    such that $-\frac13 \preceq \ex{}{P} \preceq \frac13$,
    $$
    \ex{X \sim P^{\otimes n}}{\| M(X) -
        \ex{}{P} \|_2^2 } \leq \alpha^2 \leq \frac{d}{54}
    $$
    then $n \geq \frac{d}{72 \alpha \eps}$.
    Equivalently, if $M$ is $\left(\eps,\frac{3}{64n}\right)$-DP and
    is such that for every product distribution $P$ over
    $\zo^{d}$ such that $\frac13 \preceq \ex{}{P} \preceq \frac23$,
    $$
    \ex{X \sim P^{\otimes n}}{\| M(X) -
        \ex{}{P} \|_2^2 } \leq \alpha^2 \leq \frac{d}{216}
    $$
    then $n \geq \frac{d}{144\alpha \eps}$.
\end{lem}

\begin{proof}
We will only prove the first part of the theorem for estimation over $\pmo^{d}$, and the second part will follow immediately by a change of variables.

    Let $P^1,\dots,P^d \sim [-\frac13,\frac13]$ be
    chosen uniformly and independently from
    $[-\frac13,\frac13]$. Let $P = \Ber(P^1)
    \otimes \dots \otimes \Ber(P^d)$ be the
    product distribution with mean
    $\overline{P} = (P^1,\dots,P^d)$.  Let $X_1,\dots,X_n \sim P$
    be independent samples from this product
    distribution. Define:
    \begin{align*}
        Z_i^j &=~ \left(\frac{\frac{1}{9}-(P^j)^2}{1-(P^j)^2}\right)(M^j(X)-P^j)(X_i^j-P^j)\\
        Z_i &=~ \sum_{j=1}^{d}Z_i^j
    \end{align*}
    where $Z_i$ is a measure of the correlation between the
    estimate $M(X)$ and the $i$-th sample $X_i$.
    We will use the following key lemma, which is an
    extension of a similar statement in \cite{SteinkeU15}
    for the uniform distribution over $[-1,1]$.
    \begin{lemma}[Fingerprinting Lemma]\label{lem:fingerprinting}
        For every $f : \pmo^{n} \to [-\frac13,\frac13]$,
        we have
        $$
        \ex{P \sim [-\frac13,\frac13], X_{1 \dots n} \sim P}
            {\left(\frac{\frac{1}{9}-P^2}{1-P^2}\right) \cdot
            (f(X)-P) \cdot \sum_{i=1}^{n} (X_i - P)
            + (f(X) - P)^2} \geq \frac{1}{27}
        $$
    \end{lemma}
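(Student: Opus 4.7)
The plan is to mimic the classical fingerprinting argument of Steinke--Ullman, but adapted to the prior $P$ being uniform on $[-\tfrac13,\tfrac13]$ instead of $[-1,1]$, which is exactly why the mysterious weight $\tfrac{1/9 - P^2}{1 - P^2}$ shows up: $1/9 - P^2$ vanishes precisely at the endpoints of the support, making the boundary terms in an integration by parts disappear.

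First I would introduce $g(P) = \ex{X \sim P^{\otimes n}}{f(X)}$, viewed as a smooth function of $P \in [-\tfrac13, \tfrac13]$. Since each $X_i \in \pmo$ satisfies $\Pr[X_i = x \mid P] = \tfrac{1+xP}{2}$, a direct computation gives the score identity $\tfrac{d}{dP} \log \Pr[X_i = x \mid P] = \tfrac{x - P}{1 - P^2}$. This yields the standard consequence
\[
\ex{X \sim P^{\otimes n}}{f(X) \sum_{i=1}^n (X_i - P)} = (1 - P^2)\, g'(P),
\]
and since $\ex{X}{P \sum_i (X_i - P)} = 0$, the same identity holds with $f(X)$ replaced by $f(X) - P$. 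Multiplying by $\tfrac{1/9 - P^2}{1 - P^2}$ and taking expectation over $P$ therefore reduces the first term in the lemma to $\ex{P}{(1/9 - P^2)\, g'(P)}$.

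Next I would apply integration by parts on $[-\tfrac13, \tfrac13]$ with $u = 1/9 - P^2$ and $dv = g'(P)\, dP$. The boundary terms vanish because $u(\pm \tfrac13) = 0$, and this is the whole reason for the support choice. The surviving term is $\ex{P}{2P\, g(P)}$ (after absorbing the uniform density $3/2$ into the averaging notation). Meanwhile, expanding the second term of the lemma gives
\[
\ex{P,X}{(f(X) - P)^2} = \ex{P,X}{f(X)^2} - 2\, \ex{P}{P\, g(P)} + \ex{P}{P^2}.
\]
The cross terms cancel exactly, leaving $\ex{P,X}{f(X)^2} + \ex{P}{P^2} \geq \ex{P}{P^2}$, and a direct computation gives $\ex{P}{P^2} = \tfrac{1}{27}$, completing the bound.

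The only real content beyond bookkeeping is recognizing which polynomial weight to attach to the covariance term so that integration by parts kills the boundary; this is where the factor $\tfrac{1/9 - P^2}{1 - P^2}$ is forced. No step presents a genuine obstacle beyond being careful with the rescaled uniform density and with the sign of the derivative of $1/9 - P^2$. I expect the write-up to be a few lines of algebra once the score identity and the integration by parts are in place.
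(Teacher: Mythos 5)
Your proposal is correct and follows essentially the same route as the paper: introduce $g(P)=\ex{X\sim P^{\otimes n}}{f(X)}$, use the score-function identity to rewrite the weighted covariance term as $\ex{P}{(\tfrac19-P^2)\,g'(P)}$, integrate by parts (the weight $\tfrac19-P^2$ vanishes at $\pm\tfrac13$, killing the boundary term) to get $2\ex{P}{P\,g(P)}$, and observe it cancels the cross term in the expansion of $\ex{}{(f(X)-P)^2}$, leaving $\ex{P,X}{f(X)^2}+\ex{P}{P^2}\geq\tfrac1{27}$. The only superficial difference is that you derive the score identity directly from $\Pr[X_i=x\mid P]=\tfrac{1+xP}{2}$, whereas the paper cites it as a black box from \cite{BunSU17}.
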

    \begin{proof}[Proof of Lemma \ref{lem:fingerprinting}]
        Define the function $$g(p) = \ex{X_{1\dots n}\sim p}{f(X)}.$$
        For brevity, we will write
        $\ex{P}{\cdot}$ to indicate that the
        expectation is being taken over $P$,
        where $P$ is chosen uniformly
        from $[-\frac13,\frac13]$.
        By \cite[Lemma A.1]{BunSU17}, for every fixed $p$,
\begin{equation}
        h(p) := \ex{X_{1 \dots n} \sim p}{\left(\frac{\frac{1}{9}-p^2}{1-p^2}\right)\cdot
            (f(X)-p) \cdot \sum_{i=1}^{n}(X_i-p)}
            = \left(\frac{1}{9}-p^2\right)g'(p). \label{eq:fingerprinting}
\end{equation}
        Where we have defined the function $h(p)$ for brevity.  Now we have,
        \begin{align}
            \ex{P}{h(P)} = \ex{P}{\left(\frac{1}{9}-P^2\right)g'(P)} &=~
                \frac{3}{2}\int\limits_{-1/3}^{1/3}{\left(\frac{1}{9}-p^2\right)g'(p)dp}\nonumber\\
                &=~ 2\cdot \ex{P}{P g(P)}\label{eq:fngrprnt-ex}.
        \end{align}
        Now, using the above identity, we have:
        \begin{align*}
            \ex{P,X_{1\dots n} \sim P}
                {\left(f(X)-P\right)^2}
            &=~ \ex{P,X_{1\dots n} \sim P}{f(X)^2}
                + \ex{P}{P^2} -
                2 \cdot \ex{P}{Pg(P)}\\
            &\geq~ \ex{P}{P^2} -
                2\cdot \ex{P}{Pg(P)}\\
	        &=~ \ex{P}{P^2} -
    	        \ex{P}{h(P)} \tag{Using \eqref{eq:fngrprnt-ex}}
	    \end{align*}
	    Rearranging the above inequality gives:
        $$
            \ex{P,X_{1\dots n} \sim P}{\left(f(X)-P\right)^2} + \ex{}{h(P)}
            \geq \ex{P}{P^2} = \frac{1}{27}.
        $$
    \end{proof}

	Henceforth, all expectations are taken
    over $\overline{P}$, $X$, and $M$.
    We can now apply the lemma to the function
    $M^j(X)$ for every $j \in [d]$, use
    linearity of expectation, and the accuracy
    assumption to get the bound,
    \begin{align*}
        \sum_{i=1}^{n} \ex{}{Z_i}
        &=~ \sum_{j=1}^{d}\ex{}{\sum_{i=1}^{n}Z_i^j}\\
        &\geq~ \frac{d}{27} - \ex{}{\|M(X)-\overline{P}\|_2^2}\\
        &\geq~ \frac{d}{27} - \alpha^2\\
        &\geq~ \frac{d}{54},
    \end{align*}
    where the second inequality follows from the
    assumption $\ex{}{\|M(X) - \overline{P}\|_2^2} \leq \alpha^2 \leq \frac{d}{54}$.
    
    To complete the proof, we will give an
    upper bound on $\sum_{i=1}^{n} \ex{}{Z_i}$
    that contradicts the lower bound unless
    $n$ is sufficiently large. Consider any
    $i \in [n]$. Define:
    \begin{align*}
        {\wt{Z}_i^j} &=~ \left(\frac{\frac{1}{9}-(P^j)^2}{1-(P^j)^2}\right)(M(X_{\sim i})-P^j)(X_i^j-P^j)\\
        \wt{Z}_i &=~ \sum_{j=1}^{d}{\wt{Z}_i^j}
    \end{align*}
    where $X_{\sim i}$ denotes $X$ with the
    $i$-th sample replaced with an independent
    draw from $P$. Since $X_{\sim i}$ and
    $X_i$ are conditionally independent
    conditioned on $P$, $\ex{}{\wt{Z}_i} = 0$.
    Also, we have:
    $$
        \ex{}{|\wt{Z}_i|}^2 \leq \ex{}{\wt{Z}_i^2} = \var{}{\wt{Z}_i} \leq
            \frac{1}{9}\ex{}{\| M(X) - \overline{P} \|_2^2} \leq
            \frac{1}{9}\alpha^2
    $$
    where the first inequality is Jensen's.
    Furthermore, we have the following upper bound on the maximum value of $\wt{Z}_i$ and $Z_i$: $\|Z_i \|_\infty \leq 8d/81$ and $\| \wt{Z}_i \|_{\infty} \leq 8d/81$.
    
    Now we can apply differential privacy to
    bound $\ex{}{Z_i}$, using the fact that
    $X$ and $X_{\sim i}$ differ on at most
    one sample. 
    The approach is akin to Lemma 8 of~\cite{SteinkeU17b}.
    The main idea is to split $Z_i$ into its positive and negative components $Z_{i,+}$ and $Z_{i,-}$, write each of them as $\ex{}{Z_{i,*}} = \int_{0}^{\| Z_{i} \|_\infty} \pr{}{Z_{i,*} \geq t}\,dt$, and apply the definition of $(\varepsilon,\delta)$-approximate differential privacy to relate them to the similar quantities for $\wt{Z}_i$.
    Implementing this strategy gives the following:
    \begin{align*}
        \ex{}{Z_i}
        &\leq~ \ex{}{\wt{Z}_i} + 2\eps \cdot \ex{}{|\wt{Z}_i|} +
            2\delta \cdot \| Z_i \|_{\infty}\\
        &\leq~ 0 + 2\eps \cdot \frac{1}{3}\alpha + \frac{3}{32 n}
            \cdot \frac{8d}{81}\\
        &\leq~ \frac{2}{3} \alpha \eps + \frac{d}{108 n}.
    \end{align*}
    Note that we used the upper bound $e^\varepsilon -1 \leq 2\varepsilon$ for $\varepsilon \leq 1$.
    Thus, we have:
    $$
    \sum_{i=1}^{n} \ex{}{Z_i} \leq \frac{2}{3} \alpha \eps n + \frac{d}{108}.
    $$
    
    Combining the upper and lower bounds gives:
    $$
    \frac{d}{54} \leq  \frac{2}{3}\alpha \eps n + \frac{d}{108}
        \Longleftrightarrow n \geq \frac{d}{72 \alpha \eps}.
    $$
    This completes the proof.
\end{proof}

\begin{lem}
\label{lem:product-param}
    Let $P$ and $Q$ be two product distributions
    with mean vectors $p$ and $q$ respectively,
    such that $p_i \in [1/3,2/3]$ for all $i \in [d]$.
    Suppose that 
    $$\|\ex{}{P} - \ex{}{Q}\|_2 \geq \alpha,$$
    for any $\alpha \leq \alpha_0$, where $0 < \alpha_0 \leq 1$ is some absolute constant.
    Then $\SD(P,Q) \geq C\alpha$, for some
    absolute constant, $C$.
\end{lem}
\begin{proof}
    Consider the set, $A = \{x \mid \log(P(x)/Q(x)) > \alpha\}$.
    If we show that $P(A) = \Omega(1)$, then
    we would have the following.
    \begin{align*}
        &\forall x \in A \quad \frac{P(x)}{Q(x)} > e^\alpha \geq 1+\alpha \\
        \implies{} &\forall x \in A \quad P(x) - Q(x) > \frac{\alpha}{1+\alpha}P(x) \geq \frac{\alpha}{2}P(x)\\
        \implies{} &P(A) - Q(A) > \frac{\alpha}{2}P(A)\\
        \implies{} &P(A) - Q(A) > \Omega(\alpha)\\
        \implies{} &\SD(P,Q) > \Omega(\alpha).
    \end{align*}
    To this end, let $x = (x_1,\dots,x_d) \in \{0,1\}^d$.
    Then,
    $$P(x) = \prod\limits_{i=1}^{d}{p_i^{x_i}(1-p_i)^{1-x_i}}
        ~~~\text{and}~~~
        Q(x) = \prod\limits_{i=1}^{d}{q_i^{x_i}(1-q_i)^{1-x_i}}.$$
    Therefore,
    \begin{align*}
        Z(x) \coloneqq \log(P(x)/Q(x)) =&~ \log(P(x)) - \log(Q(x))\\
            =&~ \sum\limits_{i=1}^{d}{x_i\log\frac{p_i}{q_i}}
                + \sum\limits_{i=1}^{d}{(1-x_i)\log\frac{1-p_i}{1-q_i}}\\
            =&~ -\sum\limits_{i=1}^{d}{x_i\log\frac{q_i}{p_i}}
                - \sum\limits_{i=1}^{d}{(1-x_i)\log\frac{1-q_i}{1-p_i}}.
    \end{align*}
    Now, we lower bound $Z(x)$ by some function
    of $x$, so that if that function takes a value
    larger than $\alpha$ with probability
    $\Omega(1)$ (measured with respect to $P$), then $Z(x) \geq \alpha$ with
    probability $\Omega(1)$.
    Noting that
    $\log(t) \leq t - 1$ for all $t > 0$, we get the following:
    \begin{align*}
        \log(P(x)/Q(x)) \geq&~
            \sum\limits_{i=1}^{d}{x_i\left(1-\frac{q_i}{p_i}\right)}
            + \sum\limits_{i=1}^{d}{(1-x_i)\left(1-\frac{1-q_i}{1-p_i}\right)}\\
            =~ &\sum\limits_{i=1}^{d}{\frac{(x_i-p_i)(p_i-q_i)}{p_i(1-p_i)}}
    \end{align*}
    Let $Y_i = \frac{(X_i-p_i)(p_i-q_i)}{p_i(1-p_i)}$
    be a transformation of the random variable $X_i \sim P_i$.
    To be precise, it will have the following PMF:
    \begin{equation*}
        Y_i =
        \begin{cases}
            \frac{p_i-q_i}{p_i} &\textrm{w.p.}~~~ p_i\\
            -\frac{p_i-q_i}{1-p_i} &\textrm{w.p.}~~~ 1-p_i.
        \end{cases}
    \end{equation*}
    $Y_i$ has the following properties: 
    $$\ex{}{Y_i} = 0,~~~
        \sigma_i^2 = \ex{}{Y_i^2} = \frac{(p_i-q_i)^2}{p_i(1-p_i)},
        ~~~\textrm{and}~~~
        \ex{}{Y_i^3} = \frac{(p_i-q_i)^3\left[p_i^2 + (1-p_i)^2\right]}
        {p_i^2(1-p_i)^2}.$$
    Let $\sigma^2 = \sum_{i \in [d]} \sigma_i^2$, and 
    $Y = \frac{1}{\sigma}\sum\limits_{i=1}^{d}{Y_i}$.
    Hence, $Z(x) \geq \sigma Y$ for all $x$.
    At this point, it suffices to show that
    $\pr{}{Y>\alpha/\sigma} \geq \Omega(1)$.
    We will do this in two parts: we show that if $Y$ was a Gaussian with the same mean and variance, then this inequality would hold, and we also show that $Y$ is well-approximated by said Gaussian.
    We start with the latter.

    We apply the Berry-Esseen theorem~\cite{Berry41,Esseen42,Shevtsova10} 
    to approximate the distribution of $Y$
    by the standard normal distribution.
    Let $\psi$ be the actual CDF of $Y$,
    and $\phi$ be the CDF of the standard
    normal distribution:
    \begin{align*}
        \abs{\psi(y) - \phi(y)} \leq&~ C_1 \sigma^{-1} \cdot
            \max\limits_{i}{\frac{\ex{}{Y_i^3}}{\ex{}{Y_i^2}}}\\
            =&~ C_1 \sigma^{-1} \cdot \max\limits_{i}
                \frac{(p_i - q_i)\left[p_i^2 + (1-p_i)^2\right]}{p_i(1-p_i)}\\
            \leq&~ \frac{5C_1}{2} \sigma^{-1} \cdot \max\limits_{i}(p_i - q_i)
    \end{align*}
    Here, $C_1=0.56$ is a universal constant.
    Now, we can assume that $p_i-q_i \leq C_2\alpha$ (for some constant $C_2$ of our choosing),
    otherwise $\SD(P,Q)>C_2\alpha$ trivially.
    Note that, by our assumption on $p_i \in [1/3,2/3]$, we have the following:
    \[
    \sigma^2 = \sum_{i \in [d]} \frac{(p_i - q_i)^2}{p_i(1-p_i)} \geq \frac{9}{2} \sum_{i \in [d]} (p_i - q_i)^2 = \frac{9}{2}\alpha^2.
    \]
    Therefore, $\sigma \geq 2 \alpha$, and we get the following:
    \begin{align*}
        \abs{\psi(y) - \phi(y)} \leq \frac{5C_1 C_2}{4}.
    \end{align*}
    We now use this to prove an $\Omega(1)$ lower bound on $\pr{}{Y> \alpha/\sigma}$.
    \begin{align*}
        1-\psi(\alpha/\sigma) >&~ 1-\phi(\alpha/\sigma) - \frac{5C_1C_2}{4}\\
            =&~ 1-\frac{1}{\sqrt{2\pi}}
                \int\limits_{0}^{\alpha/\sigma}{e^{-t^2}dt} - \frac{5C_1C_2}{4}\\
            \geq&~ 1/2-\frac{1}{\sqrt{2\pi}}\frac{\alpha}{\sigma}
                - \frac{5C_1C_2}{4}\\
            \geq&~ 1/2-\frac{1}{2\sqrt{2\pi}}
                - \frac{5C_1C_2}{4}\\
            \geq&~ 0.30 - \frac{5C_1C_2}{4}\\
    \end{align*}
    We want the above quantity to be
    a constant greater than zero. We could
    pick any ``small enough'' constant,
    so we pick $0.1$. Therefore, by
    choosing $C_2 < 0.16 / C_1$ (say $0.25$),
    we guarantee that $\pr{}{Y> \alpha/\sigma} > 0.1$.
    Hence, we have $\SD(P,Q) > 0.05 \alpha$,
    which completes the proof.
\end{proof}

\subsection{Privately Learning Gaussian Distributions with Known Covariance}
\label{sec:lb-gaussian-mean}

In this section, we will show a lower bound
on the number of samples required to estimate
the mean of a Gaussian distribution when
its covariance matrix is known.
The approach is similar to the product distribution case (Section~\ref{sec:lb-product}), but with modifications required for the different structure and unbounded data.

\begin{theorem}
\label{thm:guassian-lb}
For any $\alpha \leq 1$ smaller than some absolute constant, any $\left(\eps, \delta\right)$-DP mechanism (for $\delta \leq \tilde O\left(\frac{\sqrt{d}}{R n }\right)$)  which estimates a Gaussian distribution (with mean $\mu \in [-R,R]^d$ and known covariance $\sigma^2 \mathbb{I}$) to accuracy $\leq \alpha$ in total variation distance with probability $\geq 2/3$ requires $n = \Omega\left(\frac{d}{\alpha\eps\log (dR)}\right)$ samples.
\end{theorem}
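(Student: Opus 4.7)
The proof will follow the two-step template of Theorem~\ref{thm:product-lb}: first a parameter-to-distribution reduction, then a fingerprinting-style lower bound for $\ell_2$-accurate mean estimation. By rescaling we may assume $\sigma = 1$. The reduction uses the folklore fact $\dtv(\normal(\mu, \id), \normal(\mu', \id)) = \Theta(\|\mu - \mu'\|_2)$ for distances smaller than an absolute constant, together with the boosting-and-projection step from the proof of Theorem~\ref{thm:product-lb}: repeat the assumed algorithm $O(\log(dR))$ times, take a robust aggregate, and project onto $[-R, R]^d$ to cap the worst-case $\ell_2$ error at $O(R\sqrt d)$. The resulting algorithm has expected squared $\ell_2$ error $O(\alpha^2)$, so it suffices to prove that no $(\eps, \delta)$-DP mechanism $M : \R^{n \times d} \to [-R, R]^d$ with $\ex{X \sim \normal(\mu, \id)^{\otimes n}}{\|M(X) - \mu\|_2^2} \leq \alpha^2$ for every $\|\mu\|_\infty \leq R$ can use $o(d/(\alpha\eps))$ samples, with the $\log(dR)$ factor accounting for the boosting overhead.

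For the fingerprinting step I place $\mu \sim \mathrm{Unif}([-R, R]^d)$ and draw $X_i \sim \normal(\mu, \id)$, and define coordinatewise scores
\[
Z_i^j \;=\; \tfrac{R^2 - (\mu^j)^2}{R^2}(M^j(X) - \mu^j)(X_i^j - \mu^j), \qquad Z_i \;=\; \sum_{j=1}^d Z_i^j,
\]
the Gaussian analogue of the scores in Lemma~\ref{lem:product-fp}. The weight vanishing at the boundary of the uniform prior plays the role of $(1/9 - (P^j)^2)/(1 - (P^j)^2)$ in the product case. Replacing \cite[Lemma A.1]{BunSU17} with Stein's lemma on the Gaussian likelihood, and integrating by parts in $\mu^j$ against the uniform prior (the boundary terms vanish by choice of weight), one obtains
\[
\sum_{i=1}^n \ex{}{Z_i^j} \;=\; \tfrac{2}{R^2}\ex{}{\mu^j M^j(X)} \;\geq\; \tfrac{2}{3} - \tfrac{2}{R}\sqrt{\ex{}{(M^j(X) - \mu^j)^2}},
\]
using $\ex{}{(\mu^j)^2} = R^2/3$ and Cauchy-Schwarz. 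Summing over $j$ and applying Cauchy-Schwarz once more with the accuracy hypothesis then yields the fingerprinting lower bound $\sum_i \ex{}{Z_i} = \Omega(d)$.

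The upper bound on $\sum_i \ex{}{Z_i}$ via differential privacy is the main obstacle, because unlike the bounded product case, the Gaussian samples are unbounded and the inequality $\ex{}{Z_i} \leq \ex{}{\tilde Z_i} + 2\eps \ex{}{|\tilde Z_i|} + 2\delta \|Z_i\|_\infty$ used in Lemma~\ref{lem:product-fp} cannot be applied directly. The plan is to handle this with a truncation coupling: replace each $X_i^j$ with its clipping to $\mu^j \pm T$ for $T = \Theta(\sqrt{\log(nd/\beta)})$ and work with a truncated score $\tilde Z_i^{\mathrm{tr}}$. Gaussian concentration plus a union bound show that $\tilde Z_i^{\mathrm{tr}} = Z_i$ for all $i$ simultaneously except with probability at most $\beta$, and Cauchy-Schwarz against polynomial moment bounds on $Z_i$ controls the bad-event contribution. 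On the good event $\|\tilde Z_i^{\mathrm{tr}}\|_\infty$ is finite and $\ex{}{(\tilde Z_i^{\mathrm{tr}})^2} = O(\alpha^2)$, so the independent-resample neighbor argument of Lemma~\ref{lem:product-fp} (using $X^{\sim i}$ with the $i$-th sample replaced by an independent Gaussian draw) yields an upper bound of the form $\sum_i \ex{}{\tilde Z_i^{\mathrm{tr}}} \leq O(n\eps\alpha) + O(n \delta \cdot \mathrm{poly}(R, d))$. The restriction $\delta \leq \tilde O(\sqrt d / (Rn))$ in the theorem statement is calibrated to this polynomial so that the $\delta$ term is subdominant to the $\Omega(d)$ lower bound; the comparison then forces $n\eps\alpha = \Omega(d)$, and undoing the $O(\log(dR))$ boosting overhead yields the claimed $n = \Omega(d/(\alpha\eps\log(dR)))$.
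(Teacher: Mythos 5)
Your overall template agrees with the paper's: the same reduction from TV to $\ell_2$ mean error via the folklore bound and a $\log(dR)$-round boosting step, then a fingerprinting bound for $\ell_2$-accurate mean estimation with a uniform prior on $[-R,R]^d$. Your fingerprinting lemma is essentially the paper's Lemma~\ref{lem:fingerprinting-g} (same weight $R^2-(\mu^j)^2$ vanishing at the boundary, same Stein/integration-by-parts identity $h(\mu)=(R^2-\mu^2)g'(\mu)$), differing only by an immaterial $1/R^2$ normalization and an extra Cauchy--Schwarz step that the paper absorbs directly into the lemma statement. So up to the point where one must bound $\sum_i \ex{}{Z_i}$ from above against privacy, you and the paper are doing the same thing.

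The gap is in how you handle the unboundedness, and it is quantitative but real. The paper does \emph{not} truncate the data; it writes $\ex{}{Z_{i,+}} = \int_0^T \pr{}{Z_{i,+}>t}\,dt + \int_T^\infty \pr{}{Z_{i,+}>t}\,dt$, applies the pointwise $(\eps,\delta)$-DP inequality only on $[0,T]$ (contributing the $2\delta T$ term), and bounds the tail $\int_T^\infty$ directly via the Gaussian concentration of $Z_i$ \emph{as an aggregate}: conditioned on the coefficients $(R^2-(\mu^j)^2)(M^j-\mu^j)\in[-2R^3,2R^3]$, the score is a weighted sum of the $d$ independent standard Gaussian coordinates, so it concentrates with standard deviation $\Theta(R^3\sqrt d)$, which permits setting $T=\Theta\bigl(R^3\sqrt{d\log(1/\delta)}\bigr)$. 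Balancing $\delta T \lesssim dR^2/n$ then yields $\delta\lesssim \tilde O(\sqrt d/(Rn))$. Your plan clips each coordinate $X_i^j$ to $\mu^j\pm T_X$ with $T_X=\Theta\bigl(\sqrt{\log(nd/\beta)}\bigr)$ and feeds $\|\tilde Z_i^{\mathrm{tr}}\|_\infty$ into the raw inequality $\ex{}{Z}\leq\ex{}{\tilde Z}+2\eps\ex{}{|\tilde Z|}+2\delta\|Z\|_\infty$. But coordinate-wise clipping only gives $\|\tilde Z_i^{\mathrm{tr}}\|_\infty = O(R\,T_X\,d)$ (linear in $d$, since the $d$ clipped coordinates could all align with the sign of $M^j-\mu^j$ in the worst case), not $O(R\,T_X\sqrt d)$. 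That places your achievable $\delta$ constraint at $\tilde O(1/(Rn))$, a factor of $\sqrt d$ short of the theorem's stated tolerance $\tilde O(\sqrt d/(Rn))$. Your closing remark that ``the restriction in the theorem is calibrated to this polynomial'' is therefore not accurate for the polynomial your route produces.

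To close the gap, you should truncate (or split the survival-function integral of) the aggregate score $Z_i$ itself at $T\propto R^3\sqrt{d\log(1/\delta)}$, exploiting that weighted sums of independent Gaussian coordinates concentrate in $\ell_2$. Once you do that, the coordinate-wise data clipping is unnecessary, and you recover exactly the paper's $\delta$ window.
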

While the expression for $\delta$ might seem complex, one can note that if $R = 1$ and for $d \geq 1$, we have $\delta = O\left(\frac{1}{n\sqrt{\log n}}\right)$, very similar to the statement of Lemma~\ref{lem:product-fp}.
Our statement is stronger and more general for settings of $d$ and $R$.

\begin{proof}
The proof is very similar to that of Theorem~\ref{thm:product-lb}, so we only sketch the differences.
To estimate the Gaussian to total variation distance $\alpha$, it is necessary to estimate the mean in $\ell_2$-distance to accuracy $\alpha \sigma$, evidenced by the following folklore fact (see, e.g.,~\cite{DiakonikolasKKLMS18}):
\begin{fact}
The total variation distance between $\mathcal{N}(\mu_1, \sigma^2 \mathbb{I})$ and $\mathcal{N}(\mu_2, \sigma^2 \mathbb{I})$ is at least $C\frac{\|\mu_1 - \mu_2\|_2}{\sigma}$, for an appropriate constant $C$, and all $\mu_1, \mu_2, \sigma$ such that $\frac{\|\mu_1 - \mu_2\|_2}{\sigma}$ is smaller than some absolute constant. 
\end{fact}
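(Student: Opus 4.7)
The plan is a standard two-step reduction: first reduce the multivariate problem to a univariate one by exploiting rotation and translation invariance of spherical Gaussians, and then lower bound the univariate TV distance by a direct computation.

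First I would note that for any orthogonal matrix $U$ and any vector $c$, the map $x \mapsto U(x-c)$ pushes forward $\mathcal{N}(\mu, \sigma^2 \id)$ to $\mathcal{N}(U(\mu - c), \sigma^2 \id)$, and TV distance is invariant under bijections. Choosing $c = \mu_1$ and $U$ a rotation that sends $(\mu_2 - \mu_1)/\|\mu_2 - \mu_1\|_2$ to the first standard basis vector $e_1$, we reduce to lower bounding $\mathrm{TV}(\mathcal{N}(0, \sigma^2 \id), \mathcal{N}(t \sigma e_1, \sigma^2 \id))$, where $t = \|\mu_1 - \mu_2\|_2 / \sigma$. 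Next, since both distributions are product measures that agree on coordinates $2,\dots,d$, projecting onto the first coordinate is a sufficient statistic for any test distinguishing them, so their TV distance equals $\mathrm{TV}(\mathcal{N}(0,1), \mathcal{N}(t,1))$ after rescaling by $1/\sigma$.

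Second, I would compute the univariate TV distance directly. The likelihood ratio between $\mathcal{N}(t,1)$ and $\mathcal{N}(0,1)$ is monotone in $x$, so the optimal distinguisher is the threshold test $\{x > t/2\}$, giving
\[
\mathrm{TV}(\mathcal{N}(0,1), \mathcal{N}(t,1)) = \Phi(t/2) - \Phi(-t/2) = 2\Phi(t/2) - 1,
\]
where $\Phi$ is the standard normal CDF. For $|t/2| \leq 1$ (say), the mean value theorem gives $\Phi(t/2) - \Phi(0) \geq (t/2) \cdot \min_{|s|\leq 1/2} \phi(s)$, where $\phi$ is the standard normal density. Since $\phi$ is bounded below by an absolute constant on $[-1/2, 1/2]$, we obtain $2\Phi(t/2) - 1 \geq C t$ for some absolute constant $C > 0$ whenever $t$ is smaller than an absolute constant. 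Substituting $t = \|\mu_1 - \mu_2\|_2 / \sigma$ yields the claim.

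There is no real obstacle here; the proof is essentially folklore. The only care needed is to ensure that the threshold for ``$t$ sufficiently small'' is chosen so that the linear lower bound on $2\Phi(t/2)-1$ is valid with an absolute constant, but this is immediate from the continuity and positivity of $\phi$ at $0$.
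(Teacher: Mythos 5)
Your proof is correct. Note that the paper does not actually prove this statement; it simply labels it a folklore fact and cites~\cite{DiakonikolasKKLMS18}, so there is no ``paper's proof'' to compare against. Your argument---reduce to one dimension by rotational and translational invariance and the fact that both $d$-dimensional distributions factor as product measures agreeing on all but the first coordinate, then compute the univariate TV distance exactly as $2\Phi(t/2)-1$ via the likelihood-ratio (midpoint) test, and finally linearize with the mean value theorem---is a standard and complete derivation of exactly this fact. One trivial slip: you restrict to $|t/2|\leq 1$ but then take $\min_{|s|\leq 1/2}\phi(s)$ in the mean-value bound; these should match (either restrict to $t\leq 1$, or take the minimum over $|s|\leq 1$). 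Since you are free to choose the threshold on $t$, this is immediately fixable and does not affect the result.
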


Similar to before, we can argue that the existence of such an algorithm implies the existence of an algorithm which is correct in expectation, at a multiplicative cost of $O(\log dR)$ in the sample complexity, as any estimate output by the algorithm is accurate up to $O(\sqrt{d}R)$ in $\ell_2$-distance.
Such an algorithm is precluded by Lemma~\ref{lem:guassian-fp} (noting that we must rescale $\alpha$ by a factor of $\sigma$), concluding the proof.
\end{proof}

\begin{lemma}
\label{lem:guassian-fp}
    If $M : \mathbb{R}^{n \times d} \to [-R\sigma,R\sigma]^d$ is
    $\left(\eps, \delta \right)$-DP for $\delta \leq \frac{\sqrt{d}}{48\sqrt{2} R n \sqrt{\log(48\sqrt{2}Rn/\sqrt{d})}}$, and for every
    Gaussian distribution $P$ with known covariance
    matrix, $\sigma^2 \mathbb{I}$,
    such that $-R \sigma \leq \ex{}{P} \leq R \sigma$,
    $$
    \ex{X \sim P^{\otimes n}}{\| M(X) -
        \ex{}{P} \|_2^2 } \leq \alpha^2 \leq \frac{d\sigma^2 R^2}{6},
    $$
    then $n \geq \frac{d\sigma}{24 \alpha \eps}$.
\end{lemma}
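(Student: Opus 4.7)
The proof will mirror the fingerprinting argument of Lemma~\ref{lem:product-fp}, with two main adaptations for the Gaussian setting: a different weighting function in the fingerprinting identity, and a more delicate treatment of the now-unbounded summary statistic. I would first establish a Gaussian analog of Lemma~\ref{lem:fingerprinting}. For any $f:\R^n \to [-R\sigma, R\sigma]$, set $g(\mu) = \mathbb{E}_{X \sim \normal(\mu, \sigma^2)^n}[f(X)]$. The Gaussian score identity gives $\mathbb{E}[f(X)\sum_i (X_i - \mu)] = \sigma^2 g'(\mu)$, and integration by parts with the weight $w(\mu) = (R^2\sigma^2 - \mu^2)/\sigma^2$, which vanishes at $\mu = \pm R\sigma$, yields $\mathbb{E}_\mu[(R^2\sigma^2 - \mu^2)g'(\mu)] = 2 \mathbb{E}_\mu[\mu g(\mu)]$ when $\mu$ is uniform on $[-R\sigma, R\sigma]$. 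Combined with $\mathbb{E}[(f(X)-\mu)^2] \geq \mathbb{E}[\mu^2] - 2\mathbb{E}_\mu[\mu g(\mu)]$, this gives
\begin{equation*}
\mathbb{E}_{\mu, X}\!\left[ \frac{R^2\sigma^2 - \mu^2}{\sigma^2}(f(X) - \mu) \sum_{i=1}^n (X_i - \mu) + (f(X) - \mu)^2 \right] \;\geq\; \frac{R^2\sigma^2}{3}.
\end{equation*}

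Next, I would draw $\mu_1,\dots,\mu_d$ independently and uniformly from $[-R\sigma, R\sigma]$, let $P = \bigotimes_j \normal(\mu_j, \sigma^2)$, sample $X = (X_1, \dots, X_n) \sim P^n$, and define
\begin{equation*}
Z_i^j \;=\; \frac{R^2\sigma^2 - \mu_j^2}{\sigma^2}(M^j(X) - \mu_j)(X_i^j - \mu_j), \qquad Z_i \;=\; \sum_{j=1}^d Z_i^j.
\end{equation*}
Applying the Gaussian fingerprinting lemma to each coordinate of $M$, summing over $j$, and using $\mathbb{E}[\|M(X) - \mu\|_2^2] \leq \alpha^2 \leq dR^2\sigma^2/6$, I obtain the lower bound $\sum_{i=1}^n \mathbb{E}[Z_i] \geq dR^2\sigma^2/6$.

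For the privacy-based upper bound on $\mathbb{E}[Z_i]$, I would define $\tilde Z_i$ by replacing $X_i$ with an independent resample, so that $\mathbb{E}[\tilde Z_i] = 0$. A direct calculation using independence across coordinates and between $X_i$ and $M(X_{\sim i})$ gives $\mathbb{E}[\tilde Z_i^2] \leq R^4 \sigma^2 \mathbb{E}[\|M(X_{\sim i}) - \mu\|_2^2] \leq R^4 \sigma^2 \alpha^2$, so $\mathbb{E}[|\tilde Z_i|] \leq R^2 \sigma \alpha$. The main obstacle relative to the product-distribution proof is that $Z_i$ is unbounded, so the $\delta\|Z_i\|_\infty$ term in the approximate-DP bound cannot be invoked directly. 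The plan is to exploit Gaussian structure: conditional on $\mu$ and $X_{\sim i}$, the coefficients $a_j = \frac{R^2\sigma^2 - \mu_j^2}{\sigma^2}(M^j(X) - \mu_j)$ are deterministic with $\sum_j a_j^2 \leq R^4 \|M(X)-\mu\|_2^2 \leq 4R^6 \sigma^2 d$, so $Z_i = \sum_j a_j(X_i^j - \mu_j)$ is a centered Gaussian of variance at most $4R^6 \sigma^4 d$. Gaussian tail bounds and a union bound over the $n$ samples yield that outside a failure event $\bar E$ of probability $\beta$, $|Z_i| \leq K := C R^3 \sigma^2 \sqrt{d \log(n/\beta)}$ for all $i$.

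Applying the approximate-DP comparison from the proof of Lemma~\ref{lem:product-fp} (in the spirit of Lemma~8 of~\cite{SteinkeU17b}) to the truncated statistic, and absorbing the tail contribution via $\mathbb{E}[|Z_i|\mathbf{1}_{\bar E}] \leq \sqrt{\mathbb{E}[Z_i^2]\Pr[\bar E]}$, gives $\mathbb{E}[Z_i] \leq 2\eps R^2 \sigma \alpha + 2\delta K + o(R^2\sigma^2)$. Summing over $i$ and combining with the fingerprinting lower bound yields $dR^2\sigma^2/6 \leq 2n\eps R^2 \sigma \alpha + 2n\delta K$. The hypothesis on $\delta$ ensures $2n\delta K \leq dR^2\sigma^2/12$, leaving $2n\eps R^2 \sigma \alpha \geq dR^2\sigma^2/12$, equivalently $n \geq d\sigma/(24\alpha\eps)$. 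The main technical difficulty is verifying that the $\sqrt{d}$ scaling coming from the Gaussian concentration of $Z_i$ (as opposed to a naive per-coordinate bound, which would cost a further factor of $\sqrt{d}$) precisely matches the stated form of the hypothesis on $\delta$.
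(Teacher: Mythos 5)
Your proposal closely parallels the paper's proof: the same Gaussian fingerprinting lemma obtained by integration by parts with the weight $(R^2\sigma^2 - \mu^2)/\sigma^2$, the same summary statistics $Z_i^j$ and $\tilde Z_i^j$, the same lower bound $\sum_i \ex{}{Z_i} \geq dR^2\sigma^2/6$, and the same second-moment bound $\ex{}{\tilde Z_i^2} \le R^4\sigma^2\alpha^2$. The one real variation is in how you tame the unboundedness of $Z_i$: you introduce a high-probability truncation event $\bar E$ and absorb the residual via Cauchy--Schwarz, whereas the paper writes $\ex{}{Z_{i,+}} = \int_0^T + \int_T^\infty$, applies the $(\eps,\delta)$-DP comparison on $[0,T]$, and estimates the tail integral directly with a threshold $T$ of order $R^3\sigma^2\sqrt{d\log(1/\delta)}$. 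Both choices produce the same final numerology.

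There is, however, one step that deserves scrutiny, and it is a delicate point in the paper's own argument as well. You assert that, conditional on $\mu$ and $X_{\sim i}$, the coefficients $a_j = \frac{R^2\sigma^2 - \mu_j^2}{\sigma^2}\left(M^j(X) - \mu_j\right)$ are deterministic and hence $Z_i$ is conditionally Gaussian with variance at most $4R^6\sigma^4 d$. This is not quite right: $M(X)$ depends on $X_i$ through the full dataset, so the $a_j$'s are not determined by $X_{\sim i}$. The Gaussian-with-variance-$O(R^6\sigma^4 d)$ description rigorously applies only to the resampled statistic $\tilde Z_i$, whose coefficients involve $M(X_{\sim i})$ and are genuinely independent of $X_i$ conditionally on $\mu$. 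A worst-case pointwise bound via Cauchy--Schwarz gives $|Z_i| \le 2R^3\sigma\sqrt{d}\,\|X_i - \mu\|_2$, and since $\|X_i - \mu\|_2$ concentrates around $\sigma\sqrt{d}$, this yields $|Z_i| = O(R^3\sigma^2 d)$ with high probability, a factor $\sqrt{d}$ larger than the truncation level $K$ you propose. The paper's analogous step, the claim that $Z_{i,+} \le \max\left\{2R^3\sum_j (X_i^j - \mu^j), 0\right\}$, has the same gap: the coordinate-by-coordinate inequality fails whenever $X_i^j - \mu^j < 0$ and $M^j(X) > \mu^j$. One can either repair the argument so that the tail estimate is applied to $\tilde Z_i$ after the DP comparison has replaced $Z_i$, or accept a weaker $\delta$ condition of the form $O(1/(Rn))$ without the $\sqrt{d}$ factor; in both versions the sample-complexity conclusion $n = \Omega(d\sigma/\alpha\eps)$ survives, but the precise $\delta$ threshold in the lemma statement requires more care than either your proposal or the paper's proof currently supply.
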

\begin{proof}
    By a scaling argument, we will focus on the case where $\sigma = 1$.
    We prove the following statement:
    If $M : \mathbb{R}^{n \times d} \to [-R,R]^d$ is $\left(\eps, \delta\right)$-DP for $\delta \leq \frac{\sqrt{d}}{48\sqrt{2} R n \sqrt{\log(48\sqrt{2}Rn/\sqrt{d})}}$, and for every Gaussian distribution $P$ with known covariance matrix $\mathbb{I}$ such that $-R \leq \ex{}{P} \leq R$, 
    $$
    \ex{X \sim P^{\otimes n}}{\| M(X) -
        \ex{}{P} \|_2^2 } \leq \alpha^2 \leq \frac{dR^2}{6},
    $$
    implies $n  \geq \frac{d}{24 \alpha \eps}$.

    Let $\mu^1, \dots, \mu^d$ be chosen independently
    and uniformly at random from the interval $[-R,+R]$. Let $P$ be the
    Gaussian distribution with mean vector
    $\overline \mu = (\mu^1, \dots, \mu^d)$, and covariance
    matrix $\mathbb{I}$. Let $X_1, \dots, X_n$
    be independent samples from this Gaussian
    distribution. As in the proof of the lower
    bound for product distributions, we define
    the following quantities.
    \begin{align*}
        Z_i^j &=~ \left(R^2-(\mu^j)^2\right)(M^j(X)-\mu^j)(X_i^j-\mu^j)\\
        Z_i &=~ \sum_{j=1}^{d}Z_i^j
    \end{align*}
    Again, our strategy would be to give
    upper and lower bounds on $\sum_{i=1}^{n}\ex{}{Z_i}$,
    which would conradict each other unless
    $n$ is larger than some specific quantity.
    To obtain the lower bound, we first prove
    a lemma similar to Lemma \ref{lem:fingerprinting}.
    \begin{lem}[Fingerprinting Lemma for Gaussians]
    \label{lem:fingerprinting-g}
        For every $f : \mathbb{R}^{n} \to [-R,R]$,
        we have
        $$
        \ex{\mu \sim [R,R], X_{1 \dots n} \sim \normal(\mu, 1 )}
            {\left(R^2-\mu^2\right) \cdot
            (f(X)-\mu) \cdot \sum_{i=1}^{n} (X_i - \mu)
            + (f(X) - \mu)^2} \geq \frac{R^2}{3}
        $$
    \end{lem}
    \begin{proof}[Proof of Lemma \ref{lem:fingerprinting-g}]
        Define the function
        $$g(\mu) = \ex{X_{1\dots n \sim \normal(\mu,1)}}{f(X)}.$$
        For brevity, we will write
        $\ex{\mu}{\cdot}$ to indicate that the
        expectation is being taken over $\mu$,
        where $\mu$ is chosen uniformly
        from $[-R,R]$. 
        We use an adaptation of~\eqref{eq:fingerprinting} to the Gaussian setting.
        From an extension of a similar statement in the full version of~\cite{DworkSSUV15}, 
        for every fixed $\mu$,
        $$h(\mu) := \ex{X_{1 \dots n} \sim \normal(\mu,1)}
            {\left(R^2-\mu^2\right)\cdot
            (f(X)-\mu) \cdot \sum_{i=1}^{n}(X_i-\mu)}
            = \left(R^2-\mu^2\right)
                \frac{\partial}{\partial \mu}g(\mu).$$
        Therefore, we get:
        $$\ex{\mu}{h(\mu)} = 2\ex{\mu}{\mu g(\mu)}.$$
        Now, using the above, we get:
        \begin{align*}
            \ex{\mu,X_{1\dots n} \sim \normal(\mu,1)}
                {\left(f(X)-\mu\right)^2}
            &=~ \ex{\mu,X_{1\dots n} \sim \normal(\mu,1)}{f(X)^2}
                + \ex{\mu}{\mu^2} -
                2 \cdot \ex{\mu}{\mu g(\mu)}\\
            &\geq~ \ex{\mu}{\mu^2} -
                2\cdot \ex{\mu}{\mu g(\mu)}\\
	        &=~ \ex{\mu}{\mu^2} -
                \ex{\mu}{h(\mu)}.
	    \end{align*}
        Rearranging the above inequality gives:
        $$
            \ex{\mu,X_{1\dots n} \sim \normal(\mu,1)}{\left(f(X)-\mu\right)^2} + \ex{\mu}{h(\mu)}
            \geq \ex{\mu}{\mu^2} = \frac{R^2}{3}.
        $$
    \end{proof}
    Henceforth, all expectations are taken
    over $\overline{\mu}$, $X$, and $M$.
    In the same way as in case of product
    distributions, we obtain the following
    bound,
    \begin{align}
        \sum_{i=1}^{n} \ex{}{Z_i} \nonumber
        &=~ \sum_{j=1}^{d}\ex{}{\sum_{i=1}^{n}Z_i^j}\nonumber\\
        &\geq~ \frac{dR^2}{3} - \ex{}{\|M(X)-\overline{\mu}\|_2^2}\nonumber\\
        &\geq~ \frac{dR^2}{3} - \alpha^2\nonumber\\
        &\geq~ \frac{dR^2}{6}\label{eq:gaussian-expectation-lb},
    \end{align}
    where the second inequality follows from the
    assumption $\ex{}{\|M(X) - \overline{P}\|_2^2} \leq \alpha^2 \leq \frac{dR^2}{6}$.
    Now, to give an upper bound, we first define:
    \begin{align*}
        {\wt{Z}_i^j} &=~ \left(R^2-(\mu^j)^2\right)(M(X_{\sim i})-\mu^j)(X_i^j-\mu^j)\\
        \wt{Z}_i &=~ \sum_{j=1}^{d}{\wt{Z}_i^j}
    \end{align*}
    where $X_{\sim i}$ denotes $X$ with the
    $i$-th sample replaced with an independent
    draw from $P$. Because $X_{\sim i}$ and
    $X_i$ are independent
    conditioned on $P$, $\ex{}{\wt{Z}_i} = 0$.
    Using similar calculations as in Lemma
    \ref{lem:product-fp}, we get the following.
    $$\ex{}{|\wt{Z_i}|}^2 \leq R^4\alpha^2$$
    Observe that, in contrast to Lemma~\ref{lem:product-fp}, we do not have a worst-case bound on the value of the statistic $Z_i$, as the support of $X_i^j$ is the real line, rather than just $\{\pm 1\}$ as before.
    Consequently, we split the computation of the expectation of $Z_{i,+}$ into the intervals $[0,T]$ and $(T, \infty)$, and only apply $(\eps,\delta)$-DP to the former.
    Again, we use the ideas of the same lemma
    about splitting $Z_i$ into $Z_{i,+}$ and
    $Z_{i,-}$ to get the following, for any $T > 0$.
\begin{equation}
\label{eq:gaussian-apply-dp}
        \ex{}{Z_i}
        \leq~ \ex{}{\wt{Z}_i} + 2\eps \cdot \ex{}{|\wt{Z}_i|} +
            2\delta \cdot T + 2\int_{T}^{\infty}{\pr{}{Z_{i,+}>t}dt}
\end{equation}
    Now,
    \begin{align*}
        Z_{i,+} &\leq~ \max\left\{\sum_{j=1}^{d}{(R^2-(\mu^j)^2)(X_i^j-\mu^j)(M^j(X)-\mu^j)}, 0\right\}\\
            &\leq~ \max\left\{2R^3\sum_{j=1}^{d}{(X_i^j-\mu^j)}, 0\right\}\\
            &=~ \max\left\{Y_{i}, 0\right\},
    \end{align*}
    where $Y_{i} \sim \normal(0, 4R^6 d)$.
    Let $W_i = \tfrac{Y_i}{2R^3 \sqrt{d}}$,
    and $S = \tfrac{T}{2R^3 \sqrt{d}}$. 
    This transformation results in $W_i$ being a standard normal random variable.
    We perform a change of variables, and repeatedly use the inequality $\mathrm{erfc}(x) \leq \exp\left(-x^2\right)$ in the following derivation:
    \begin{align*}
        \int_{T}^{\infty}{\pr{}{Z_{i,+}>t}dt}
            &=~ 2R^3\sqrt{d}\int_{ S}^{\infty}{\pr{}{W_i>s}ds}\\
            &\leq~ R^3\sqrt{d}
                \int_{ S}^{\infty}{e^{-s^2/2}ds}\\
            &=~ R^3\sqrt{d} \sqrt{\frac{\pi}{2}}
                \mathrm{erfc}\left(\frac{S}{\sqrt{2}}\right) \\
            &\leq~ R^3 \sqrt{\frac{d\pi}{2}}
                e^{-\frac{S^2}{2}}.\\
            &=~ R^3 \sqrt{\frac{d\pi}{2}}
                e^{-\frac{T^2}{8R^6d}}.\\
    \end{align*}
    We will upper-bound this by $\delta T$:
\[
      R^3 \sqrt{\frac{d\pi}{2}} e^{-\frac{T^2}{8R^6d}} \leq \delta T
\]
Equivalently, 
\[
      \frac{R^3}{\delta} \sqrt{\frac{d\pi}{2}}  \leq T e^{\frac{T^2}{8R^6d}}
\]
Consider setting $T = 2\sqrt{2}R^3\sqrt{d}\sqrt{\log(1/\delta)}$.
The right-hand side of this inequality becomes 
\[
2\sqrt{2}R^3\sqrt{d}\sqrt{\log(1/\delta)} \cdot \frac{1}{\delta},
\]
which is greater than the left-hand side for any $\delta < 1$.

Using~\eqref{eq:gaussian-apply-dp}, this gives us the following upper bound:
    $$\sum_{i=1}^{n}{\ex{}{Z_i}} \leq 2R^2 \alpha \eps n +
    4\sqrt{2}R^3\sqrt{d}n\delta \sqrt{\log(1/\delta)}  $$
On the other hand,~\eqref{eq:gaussian-expectation-lb} gives us a lower bound on this quantity, and thus we require that the following inequality is satisfied:
\[
\frac{dR^2}{12} \leq n\left( R^2\alpha \eps  +
    2\sqrt{2}R^3\sqrt{d}\delta \sqrt{\log(1/\delta)}   \right)
\]
Our goal is to find conditions on $\delta$ such that the product involving this term is at most $\frac{dR^2}{24}$.
If this holds, the corresponding term can be moved to the left-hand side, and we are left with the following inequality:
\[
\frac{dR^2}{24} \leq nR^2\alpha \eps,
\]
which is satisfied when
\[
n \geq \frac{d}{24\alpha\eps},
\]
as we desired.

Thus, it remains to find conditions on $\delta$ which satisfy the following inequality:
\[
     2\sqrt{2}R^3\sqrt{d}\delta \sqrt{\log(1/\delta)} \leq \frac{dR^2}{24n}.
\]
Rearranging, we get:
\[
     \delta \sqrt{\log(1/\delta)} \leq \frac{\sqrt{d}}{48\sqrt{2}Rn}\triangleq \phi.
\]
Consider setting $\delta = \frac{\phi}{2\sqrt{\log(1/\phi)}}$.
This results in 
\begin{align*}
     \delta \sqrt{\log(1/\delta)} &\leq \frac{\phi}{2\sqrt{\log(1/\phi)}} \cdot \sqrt{\log(1/\phi) + \log(2\sqrt{\log (1/\phi)})} \\
&= \frac{\phi}{2} \cdot \sqrt{1 + \frac{\log(2\sqrt{\log (1/\phi)})}{\log (1/\phi)}}\\
&\leq \phi,
\end{align*}
where the last inequality is because $\sqrt{1 + \frac{2\sqrt{x}}{x}} \leq 2$ for all $x \geq 0$.
\end{proof}

\subsection{Privately Learning Gaussian Distributions with Unknown Covariance}
\label{sec:lb-gaussian-cov}
In this section, we prove lower bounds for privately learning a Gaussian with unknown covariance.

\begin{theorem}
\label{thm:guassian-cov-lb}
For any $\alpha \leq 1$ smaller than some absolute constant, any $\eps$-DP mechanism which estimates a Gaussian distribution to accuracy $\leq \alpha$ in total variation distance with probability $\geq 2/3$ requires $n = \Omega\left(\frac{d^2}{\alpha\eps}\right)$ samples.
\end{theorem}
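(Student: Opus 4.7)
The plan is to adapt the fingerprinting approach of Sections~\ref{sec:lb-product} and~\ref{sec:lb-gaussian-mean} to the $\Theta(d^2)$ free parameters of a symmetric covariance matrix. The argument proceeds in three stages.

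\emph{Stage 1: Reducing total variation to Frobenius.} Specialize to Gaussians of the form $\normal(0, I + E)$ with $E$ symmetric and spectrally small. Standard facts (see, e.g.,~\cite{DiakonikolasKKLMS16} for the upper direction and Devroye--Mehrabian--Reddad~\cite{DevroyeMR18a} for a matching lower bound) give $\dtv(\normal(0, I+E_1), \normal(0, I+E_2)) = \Theta(\|E_1 - E_2\|_F)$ in this regime. Hence an $\eps$-DP TV-learner with error $\leq \alpha$ and success probability $9/10$ yields, after boosting as in the proof of Theorem~\ref{thm:product-lb} (losing at most a logarithmic factor in $d$), an algorithm whose output $\widehat\Sigma$ satisfies $\mathbb{E}\|\widehat\Sigma - \Sigma\|_F^2 \leq O(\alpha^2)$, with the bound $\alpha^2 \ll d^2$ holding automatically in the regime of interest. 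It will therefore suffice to rule out $\eps$-DP covariance-in-Frobenius estimators with $n \ll d^2 / (\alpha \eps)$ samples.

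\emph{Stage 2: A fingerprinting lemma for covariance.} Put a product prior on $E$ by drawing the independent entries $E_{jk}$ for $j \leq k$ uniformly from $[-c,c]$ for a small absolute constant $c$. For an estimator $M(X) = \widehat\Sigma$, and each index pair $(j,k)$ with $j \leq k$, define the score
\[
Z_i^{(j,k)} \;=\; w_{jk}(E)\,\bigl(M(X)_{jk} - \Sigma_{jk}\bigr)\bigl((X_i X_i^\top)_{jk} - \Sigma_{jk}\bigr), \qquad Z_i = \sum_{j \leq k} Z_i^{(j,k)},
\]
with weights $w_{jk}(E)$ chosen, by analogy with the $(\tfrac{1}{9}-P^2)/(1-P^2)$ weight in Lemma~\ref{lem:product-fp} and the $(R^2-\mu^2)$ weight in Lemma~\ref{lem:guassian-fp}, so that a Stein-type integration-by-parts identity holds with respect to the prior on $E_{jk}$. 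The key identity is the Gaussian fact that $(XX^\top - \Sigma)$ is a score for the parameter $\Sigma$: integration by parts in $\Sigma_{jk}$ against the density of $\normal(0,\Sigma)$ produces an extra factor $(XX^\top - \Sigma)_{jk}$, exactly as differentiation against the Bernoulli mean produces $(X - p)$. This should give per-coordinate
\[
\textstyle\sum_i \mathbb{E}\bigl[Z_i^{(j,k)}\bigr] + \mathbb{E}\bigl[(M(X)_{jk}-\Sigma_{jk})^2\bigr] \;\gtrsim\; 1,
\]
and summing over the $\Theta(d^2)$ pairs together with $\mathbb{E}\|M(X) - \Sigma\|_F^2 \leq O(\alpha^2) \ll d^2$ yields $\sum_i \mathbb{E}[Z_i] = \Omega(d^2)$.

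\emph{Stage 3: Upper bound via $\eps$-DP and conclusion.} Define $\widetilde Z_i$ by resampling $X_i$ independently; then $\mathbb{E}[\widetilde Z_i] = 0$, and by Cauchy--Schwarz and Stage~1, $\mathbb{E}|\widetilde Z_i| \lesssim \alpha$. Applying pure $\eps$-DP between the neighboring datasets $X$ and $X_{\sim i}$ (coordinate-wise on the positive and negative parts of $Z_i$, as in the proof of Lemma~\ref{lem:product-fp}) then gives $\mathbb{E}[Z_i] \leq O(\alpha \eps)$ per sample, so that summing contradicts the lower bound of Stage~2 unless $n = \Omega(d^2/(\alpha\eps))$. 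Combining with the folklore non-private lower bound $n = \Omega(d^2/\alpha^2)$ completes the proof.

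\emph{Main obstacle.} The principal technical difficulty is that, unlike $(X_i-\mu)$ in the mean-estimation argument, the summands $(X_iX_i^\top - \Sigma)_{jk}$ are \emph{quadratic} in a Gaussian and hence only sub-exponential, not subgaussian or bounded. Because the theorem claims a pure $\eps$-DP bound with no $\delta$-slack, we cannot afford the truncation trick used in Lemma~\ref{lem:guassian-fp} (which left a residual $2\delta T$ term and required a nontrivial $\delta$). Instead we must either work with a priori truncated samples inside the $\eps$-DP definition and argue that truncation costs nothing on the high-probability event $\|X_i\|_2^2 \lesssim d\log n$ guaranteed by Hanson--Wright, or use a sharper group-privacy bound that handles sub-exponential scores directly. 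Calibrating the weights $w_{jk}(E)$ so that both the Stein identity of Stage~2 and this truncation argument go through simultaneously is the delicate step.
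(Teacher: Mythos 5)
Your plan takes a genuinely different route from the paper's. You attempt to extend the coordinate-wise fingerprinting machinery of Lemmas~\ref{lem:product-fp} and~\ref{lem:guassian-fp} to the $\Theta(d^2)$ free entries of $\Sigma$, whereas the paper's Lemma~\ref{lem:gaussian-cov-dp} is a packing argument: it places $2^{(d^2-d)/2}$ Gaussians, all pairwise within $O(\alpha)$ of one another in total variation, and bounds $\pr{}{Z > t}/\pr{}{Z' > t}$ by $\exp(O(\alpha\eps n))$ via a coupling-based ``group privacy for $O(\alpha)$-TV-close distributions'' lemma, combined with a Hoeffding bound on the inner product $\langle M(X'), V\rangle$ when $V$ is sampled independently of $M(X')$. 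That approach needs no Stein identity and no per-sample score control, which is precisely where your argument has gaps.

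Two concrete issues. First, the summand $(X_iX_i^\top - \Sigma)_{jk}$ you propose is only the leading-order (as $\Sigma \to \id$) approximation of the true Fisher score $\tfrac12\bigl(\Sigma^{-1}X_iX_i^\top\Sigma^{-1} - \Sigma^{-1}\bigr)_{jk}$. The integration-by-parts (Stein) identity, which your Stage~2 treats as automatic by ``analogy,'' holds for the exact score; using the approximation introduces error terms that must be controlled, and neither the choice of weights $w_{jk}(E)$ nor these error bounds is worked out. Relatedly, your intermediate bookkeeping in Stage~2 is mis-scaled: to keep $\id + E$ positive definite the prior on each $E_{jk}$ must have range $O(1/d)$, so the per-coordinate fingerprinting lower bound is $\Theta(1/d^2)$ (not $\gtrsim 1$) and the sum over $\Theta(d^2)$ pairs is $\Theta(1)$, not $\Omega(d^2)$. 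The final rate $n = \Omega(d^2/(\alpha\eps))$ still falls out because $\ex{}{|\wt{Z}_i|}$ is correspondingly damped by the small weights, but the claims as written are off by a factor of $d^2$.

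Second, the ``main obstacle'' you identify is not actually the obstruction. For pure $\eps$-DP, the tail-integral bound $\ex{}{Z_{i,+}} = \int_0^\infty \pr{}{Z_i > t}\,dt \le e^{\eps}\int_0^\infty \pr{}{\wt{Z}_i > t}\,dt = e^{\eps}\ex{}{\wt{Z}_{i,+}}$ holds without any truncation; the $\delta \cdot \|Z_i\|_\infty$ residual that forces truncation in Lemma~\ref{lem:guassian-fp} simply does not appear when $\delta = 0$. Since $\ex{}{\wt{Z}_i} = 0$, this directly gives $\ex{}{Z_i} \le \tfrac12(e^{\eps} - e^{-\eps})\ex{}{|\wt{Z}_i|} = O(\eps)\ex{}{|\wt{Z}_i|}$, no boundedness required. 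Your proposed fix of applying the DP definition only to a priori truncated inputs would in any case prove a weaker statement, since an $\eps$-DP guarantee must hold for arbitrary neighboring datasets. The genuinely delicate step, which the write-up elides, is establishing the covariance Stein identity and the associated weight calibration; the paper sidesteps this entirely by not using a fingerprinting argument at all.
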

\begin{proof}
The proof is again similar to that of Theorem~\ref{thm:product-lb}, and we sketch the differences.
The primary difference is that instead of considering algorithms which estimate the covariance matrix of the distribution in Frobenius norm, we consider algorithms which estimate the \emph{inverse} of the covariance matrix.
The reason is the following theorem of~\cite{DevroyeMR18a}, which states that if one fails to estimate the inverse of the covariance matrix of a Gaussian in Frobenius norm, then one fails to estimate the Gaussian in total variation distance:
\begin{theorem}[Theorem 3.8 of~\cite{DevroyeMR18a}]
\label{thm:gauss-cov-approx}
Suppose there are two mean-zero Gaussian distributions $\mathcal{N}_1$ and $\mathcal{N}_2$, with covariance matrices $\Sigma_1$ and $\Sigma_2$, respectively.
Furthermore, suppose that $\Sigma_1^{-1} - I$ and $\Sigma_2^{-1} - I$ are both zero-diagonal and have Frobenius norm smaller than some absolute constant.
Then the total variation distance between $\mathcal{N}_1$ and $\mathcal{N}_2$ is at least $c_1 \|\Sigma_1^{-1} - \Sigma_2^{-1}\|_F - c_2 (\|\Sigma_1^{-1} - I\|_F^2 + \|\Sigma_2^{-1} - I\|_F^2)$, for constants $c_1, c_2 > 0$.
\end{theorem}
Therefore, it suffices to show that there does not exist an algorithm which estimates the inverse of the covariance in Frobenius norm with probability $\geq 2/3$, where the inverse of the covariance matrix obeys the conditions of Theorem~\ref{thm:gauss-cov-approx}.
As before, an algorithm which is accurate in expectation would imply the existence of such an algorithm, so we show that such an algorithm does not exist.
We do this by applying a modification of Lemma~\ref{lem:gaussian-cov-dp}.
While this lemma is stated in terms of estimating the covariance matrix, we can obtain an identical statement for estimating the inverse of a covariance matrix by repeating the argument, with $\Sigma$ replaced by $\Sigma^{-1}$ at all points.
Note that the construction in Lemma~\ref{lem:gaussian-cov-dp} obeys the conditions of Theorem~\ref{thm:gauss-cov-approx}.
Furthermore, the Frobenius norm diameter of the construction is $\Theta(\alpha)$ (rather than $\poly(d)$ as in Theorem~\ref{thm:product-lb}), we do not lose an $O(\log d)$ factor when converting to an algorithm which is accurate in expectation.
Therefore, the application of this modification completes the proof.
\end{proof}

\begin{lemma}
    \label{lem:gaussian-cov-dp}
    If $M : \mathbb{R}^{n \times d} \to S$ is
    $\eps$-DP (where $S$ is the space of all $d \times d$ symmetric positive semi-definite matrices), and for every $\mathcal{N}(0,\Sigma)$ over $\mathbb{R}^{d}$
    such that $\frac{1}{2}\mathbb{I} \preceq \Sigma \preceq \frac{3}{2}\mathbb{I}$,
    $$
    \ex{X \sim \mathcal{N}(0,\Sigma)^{\otimes n}}{\| M(X) -
        \Sigma \|_F^2 } \leq \alpha^2/64,
    $$
    then $n \geq \Omega\left(\frac{d^2}{ \alpha \eps}\right)$.
\end{lemma}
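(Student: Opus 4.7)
The plan is to adapt the fingerprinting methodology of Lemma~\ref{lem:product-fp} and Lemma~\ref{lem:guassian-fp} to the covariance setting, where the unknown parameter $\Sigma$ has $\Theta(d^2)$ degrees of freedom rather than $d$. The $d^2$ in the sample complexity should emerge from the number of free parameters in a symmetric matrix, while the $1/(\alpha\eps)$ should emerge from the fingerprinting score computation, using the fact that pure $\eps$-DP avoids the delicate tail-truncation step that was required for the $(\eps,\delta)$-DP lower bounds.

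First I would set up a prior over covariance matrices that keeps $\Sigma$ inside $[\tfrac12 I, \tfrac32 I]$ while having $\Theta(d^2)$ independent directions of variation. Concretely, for $j\le k$ draw $\Sigma_{jk}-\id_{jk}$ independently from a uniform distribution on $[-\tau,\tau]$ with $\tau = c/\sqrt d$ for a small constant $c$, so that standard matrix concentration (Theorem~\ref{thm:GUE}) guarantees $\tfrac12 \id \preceq \Sigma \preceq \tfrac32\id$ with probability at least, say, $9/10$, and condition on that event. Then define, for each sample $X_i \sim \normal(0,\Sigma)$, a fingerprinting score
\begin{equation*}
Z_i \;=\; \sum_{j\le k} w_{jk}(\Sigma_{jk})\,\bigl(X_i^j X_i^k - \Sigma_{jk}\bigr)\,\bigl(M_{jk}(X)-\Sigma_{jk}\bigr),
\end{equation*}
where $w_{jk}(\cdot)$ is an appropriate weight function analogous to the factor $R^2-\mu^2$ used in Lemma~\ref{lem:fingerprinting-g}.

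Second, I would prove a matrix-fingerprinting lemma that, after summing over the $\Theta(d^2)$ entries, yields
\begin{equation*}
\sum_{i=1}^{n}\mathbb{E}[Z_i] \;\ge\; \Omega(d^2)\;-\;O\!\bigl(\mathbb{E}\|M(X)-\Sigma\|_F^2\bigr) \;\ge\; \Omega(d^2)-O(\alpha^2).
\end{equation*}
The idea is entry-by-entry: conditional on all other entries, the joint distribution of $X_i^j X_i^k$ and $\Sigma_{jk}$ fits into an integration-by-parts identity (Stein's lemma for the Gaussian model on $\Sigma_{jk}$ as a location-type parameter of $X_i^j X_i^k$ for $j\ne k$, and a two-sided $\chi^2$ moment identity for $j=k$). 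By the accuracy hypothesis $\mathbb{E}\|M(X)-\Sigma\|_F^2 \le \alpha^2/64 \ll d^2$, the error term is absorbed.

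Third, I would upper bound $\mathbb{E}[Z_i]$ using pure $\eps$-DP. Let $\tilde Z_i$ be defined by replacing $X_i$ with an independent fresh sample; then $\mathbb{E}[\tilde Z_i]=0$ by independence, and by pure $\eps$-DP (no $\delta$ correction needed, no need to split $Z_{i,+}$ into a bounded part and a tail as in Lemma~\ref{lem:guassian-fp}),
\begin{equation*}
\mathbb{E}[Z_i]\;\le\;2\eps\,\mathbb{E}[|\tilde Z_i|]\;\le\;2\eps\sqrt{\mathbb{E}[\tilde Z_i^2]}.
\end{equation*}
A direct second-moment computation, using that each factor $X_i^j X_i^k - \Sigma_{jk}$ has $O(1)$ variance conditional on $\Sigma$ and that the factors $M_{jk}(X)-\Sigma_{jk}$ have aggregate squared Frobenius mass $O(\alpha^2)$, gives $\mathbb{E}[\tilde Z_i^2]=O(\alpha^2)$ and hence $\mathbb{E}[Z_i]=O(\alpha\eps)$. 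Combining the two bounds gives $n\cdot O(\alpha\eps)\ge \Omega(d^2)$, i.e.\ $n=\Omega(d^2/(\alpha\eps))$.

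The hard part will be step two: proving the matrix-valued fingerprinting identity with the correct $\Omega(d^2)$ scaling. The off-diagonal entries $\Sigma_{jk}$ fit naturally into a Stein's-lemma framework, but the diagonal entries are scale parameters of the marginal Gaussians and must be handled separately, possibly via a Gaussian–Poincaré-style inequality on $(X_i^j)^2$. A secondary technical issue is choosing the prior so that the random $\Sigma$ remains PSD with high probability while simultaneously keeping the fingerprinting weight functions $w_{jk}$ bounded away from zero on all $\Theta(d^2)$ coordinates; the uniform-on-$[-\tau,\tau]$ construction above appears to work with $\tau=\Theta(1/\sqrt d)$, but verifying this spectrally without losing a factor of $d$ in the final bound will be the main calculation.
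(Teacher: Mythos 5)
Your proposal takes a genuinely different route from the paper. The paper does \emph{not} run a per-sample fingerprinting argument. Instead it builds a packing of $2^{\Omega(d^2)}$ covariance matrices of the form $\Sigma(v) = I + v$, where $v$ is symmetric, zero on the diagonal, with off-diagonal entries $\pm\alpha/2d$, so that every pair is within $O(\alpha)$ in total variation distance. It then defines a single \emph{global} tracing statistic $Z = \langle M(X), V\rangle$ (and a null counterpart $Z'$ computed on an independent dataset), shows $\EE[Z] = \Omega(\alpha^2)$ from the accuracy hypothesis, shows $\Pr[Z' > \alpha^2/32] \le \exp(-\Omega(d^2))$ by Hoeffding over the $\Theta(d^2)$ bounded summands, and relates the two via a group-privacy-through-coupling lemma: for pure $\eps$-DP and any two product-form input distributions at TV distance $\alpha$, $\Pr[M(X)\in S] \le \exp(O(\eps\alpha n))\Pr[M(X')\in S]$. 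Combining gives $\eps\alpha n = \Omega(d^2)$. The $\Omega(d^2)$ enters entirely through the Hoeffding exponent, and the argument leans heavily on pure DP (group privacy is iterated across a Binomial$(n,\alpha)$ number of coordinate changes), which is exactly why the paper only claims an $\eps$-DP lower bound here.

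Your fingerprinting route has two issues worth flagging. First, a scaling inconsistency: with the PSD-forced prior scale $\tau = c/\sqrt d$, the per-entry fingerprinting identity yields a term proportional to the prior variance $\Theta(\tau^2) = \Theta(1/d)$, so the correct per-sample lower bound is $\sum_i \EE[Z_i] \ge \Omega(d^2\tau^2) - O(\alpha^2) = \Omega(d) - O(\alpha^2)$, \emph{not} $\Omega(d^2)$ as you write; the error can cancel because the weight $w_{jk}\approx\tau^2-\Sigma_{jk}^2 = O(\tau^2)$ also shrinks $\EE[\tilde Z_i^2]$ by $\tau^4$ (so both sides are off by consistent powers of $\tau$ and the final $n=\Omega(d^2/\alpha\eps)$ can survive), but as written you have dropped the weight from the second-moment computation while implicitly keeping it in the $\Omega(d^2)$ claim, and neither intermediate bound is correct. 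Second, and more substantively, the matrix fingerprinting identity you defer to ``step two'' is not a coordinate-wise application of the 1-D lemma: the score $\partial_{\Sigma_{jk}}\log p(X\mid\Sigma)$ is a quadratic form involving $\Sigma^{-1}$ that mixes all entries, the parameter $\Sigma_{jk}$ is not a location parameter of $X^jX^k$, and the entries do not decouple (the Fisher information matrix in $\Sigma$ is not diagonal). You need a genuine multivariate Stein identity with an explicit handling of the $O(\|\Sigma - I\|_2)$ cross-terms, not an entry-by-entry reduction; the separate treatment of the diagonal that you mention compounds this. Note also that if your fingerprinting route went through cleanly it would likely give an $(\eps,\delta)$-DP lower bound, which the paper explicitly lists as open — a hint that the obstruction here is real rather than cosmetic.
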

\begin{proof}
Let $P$ be the uniform distribution over the set of $d \times d$ symmetric matrices with $0$ on the diagonal, where the $d^2 - d$ non-zero entries are $\left\{\pm \frac{\alpha}{2d}\right\}$.
Note that there are $(d^2 - d)/2$ free parameters, and thus $2^{(d^2 - d)/2}$ matrices.
For each $v \in \mathrm{supp}(P)$, we will define $\Sigma(v) = I + v$.
It is easy to check that for all $v \in \mathrm{supp}(P)$, that $\frac{1}{2}\mathbb{I} \preceq \Sigma \preceq \frac{3}{2}\mathbb{I}$, and furthermore, that for any $v, v' \in \mathrm{supp}(P)$, $\dtv\left(\mathcal{N}(0, \Sigma(v)), \mathcal{N}(0,\Sigma(v')\right) \leq O(\alpha)$ (Lemma~\ref{lem:gaussian-tv}).
We assume that our algorithm is aware of this construction, and thus will always output a symmetric matrix with $1$'s on the diagonal and off-diagonal entries bounded in magnitude by $\alpha/2d$.

Define the random variables $Z$ and $Z'$, which are sampled according to the following process.
Let $V$ and $V'$ be independently sampled accordingly to $P$.
$X$ is a set of $n$ samples from $\mathcal{N}(0, \Sigma(V))$, and similarly, $X'$ is a set of $n$ independent samples from $\mathcal{N}(0, \Sigma(V'))$.
Then, $M(X)$ and $M(X')$ are computed with their own (independent) randomness.
We then define:
$$Z = \langle M(X), V\rangle  = 2 \sum_{i < j} M_{ij}(X) \cdot V_{ij}, $$
$$Z' = \langle M(X'), V\rangle  = 2 \sum_{i < j} M_{ij}(X') \cdot V_{ij}. $$

We start with the following claim which lower bounds the expectation of $Z$.
\begin{claim}
\label{clm:z-claim}
$\ex{}{Z} \geq \frac{\alpha^2}{16} - \frac{1}{2}\| M(X) - \Sigma(V) \|_F^2 \geq \frac{7\alpha^2}{128}$.
\end{claim}
\begin{proof}
\begin{align*}
\ex{}{2Z + \| M(X) - \Sigma(V) \|_F^2 } &= \sum_{i < j} \ex{}{4 M_{ij}(X) V_{ij}  + 2(M_{ij}(X) - \Sigma_{ij}(V))^2} \\
&= \sum_{i < j} \ex{}{2M^2_{ij}(X) + 2\Sigma^2_{ij}(V)} \\
&\geq \sum_{i < j} \ex{}{ 2\Sigma^2_{ij}(V)} \\
&= \sum_{i < j} \frac{\alpha^2}{2d^2} \\
&= \frac{\alpha^2}{2} \cdot \frac{d^2 - d}{2d^2} \\
&\geq \frac{\alpha^2}{8},
\end{align*}
where the last inequality holds for any $d \geq 2$.
The claimed statement follows by rearrangement, and the second inequality by the assumption on $\|M(X) - \Sigma(V)\|_F^2$.
\end{proof}

Next, we show that $Z'$ will not be too large, with high probability.
\begin{claim}
\label{clm:z'-claim}
$\pr{}{Z' > \alpha^2/32} \leq \exp(-\Omega(d^2))$.
\end{claim}
We begin by observing $M(X')$ and $V$ are independent.
Condition on any realization of $M(X')$. 
Then $Z'\mid M(X')$ is the sum of $(d^2 - d)/2$ independent summands, each with contained in the range $[-\alpha^2/4d^2,\alpha^2/4d^2]$ and with expectation $0$ (since $\ex{}{V_{ij}} = 0$).
By Hoeffding's inequality, we have that 
$$\pr{}{Z' > \alpha^2/32 \mid M(X')} \leq \exp\left(-\frac{2\frac{\alpha^4}{1024}}{\frac{d^2-d}{2} \cdot \frac{\alpha^4}{4d^4}}\right) \leq \exp\left(-\Omega(d^2)\right). $$
The claim follows by noting that value of $M(X')$ which we conditioned on was arbitrary.
\end{proof}

\begin{claim}
\label{clm:both-z-claim}
$\pr{}{Z > \alpha^2/32} \leq \exp(O(\alpha \eps n)) \cdot \pr{}{Z' > \alpha^2/32}.$
\end{claim}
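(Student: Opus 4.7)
The plan is to couple the two sampling processes in the definitions of $Z$ and $Z'$ so that the datasets $X$ and $X'$ agree on most of their $n$ rows, and then apply group privacy for pure $\eps$-DP. Concretely, I would build a joint distribution over $(V, V', X, X')$ whose marginal on $(V, X)$ realizes the law used to define $Z$ (process A) and whose marginal on $(V, X')$ realizes the law used to define $Z'$ (process B); under this common probability space the claim becomes an inequality between two tail probabilities on the same space.

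For the coupling, I would first sample $V, V' \sim P$ independently and then, conditional on $(V, V')$, draw $(X_i, X'_i)$ for $i = 1, \ldots, n$ independently across rows using the maximal coupling between $\mathcal{N}(0, \Sigma(V))$ and $\mathcal{N}(0, \Sigma(V'))$. Since $\Sigma(V) - \Sigma(V')$ is symmetric with zero diagonal and off-diagonal entries of magnitude at most $\alpha/d$, $\|\Sigma(V) - \Sigma(V')\|_F \leq \alpha$ holds deterministically; together with $\Sigma(V), \Sigma(V') \succeq \tfrac{1}{2}\id$, Lemma~\ref{lem:gaussian-tv} then gives $\gamma(V, V') := \dtv(\mathcal{N}(0, \Sigma(V)), \mathcal{N}(0, \Sigma(V'))) \leq c\alpha$ for some absolute constant $c$, so the Hamming distance $K = |\{i : X_i \neq X'_i\}|$ is $\mathrm{Bin}(n, \gamma(V, V'))$ conditional on $(V, V')$.

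Next, for every fixed realization of $(V, X, X')$, applying pure $\eps$-DP via group privacy to the event $\{m : \langle m, V\rangle > \alpha^2/32\}$ (with $V$ held fixed) gives the pointwise bound $\pr{M}{\langle M(X), V\rangle > \alpha^2/32 \mid V, X, X'} \leq e^{K\eps}\, \pr{M}{\langle M(X'), V\rangle > \alpha^2/32 \mid V, X, X'}$. Writing $g(V, X') := \pr{M}{\langle M(X'), V\rangle > \alpha^2/32 \mid V, X'}$ (well-defined because $M$'s internal randomness is independent of everything else) and taking expectation under the coupling yields $\pr{}{Z > \alpha^2/32} \leq \mathbb{E}[e^{K\eps}\, g(V, X')]$. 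The moment generating function of $K$ can then be controlled cleanly: $\mathbb{E}[e^{2K\eps} \mid V, V'] = (1 + \gamma(V, V')(e^{2\eps}-1))^n \leq \exp(O(\alpha\eps n))$ for $\eps \leq 1$, and hence $\mathbb{E}[e^{2K\eps}] \leq \exp(O(\alpha\eps n))$.

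The main obstacle will be decoupling the two factors inside $\mathbb{E}[e^{K\eps}\, g(V, X')]$ to recover the stated form $\exp(O(\alpha\eps n)) \cdot \pr{}{Z' > \alpha^2/32}$. A quick Cauchy--Schwarz, combined with $g^2 \leq g$ (since $g \in [0,1]$), gives only $\pr{}{Z > \alpha^2/32}^2 \leq \exp(O(\alpha\eps n))\, \pr{}{Z' > \alpha^2/32}$; this weaker form is in fact already sufficient to close out Lemma~\ref{lem:gaussian-cov-dp}, since combining it with the constant lower bound on $\pr{}{Z > \alpha^2/32}$ (which follows from $\mathbb{E}[Z] \geq 7\alpha^2/128$ in Claim~\ref{clm:z-claim} together with the deterministic bound $|Z| \leq \alpha^2/4$) and Claim~\ref{clm:z'-claim} still forces $\alpha\eps n = \Omega(d^2)$, i.e., $n = \Omega(d^2/(\alpha\eps))$. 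To recover the exact stated multiplicative form, I would condition on $(V, V', X')$ to write $\mathbb{E}[e^{K\eps} \mid V, V', X'] = \prod_{i} (1 + q_{V, V'}(X'_i)(e^\eps - 1))$ with $q_{V, V'}(x') = \max(0, 1 - p_V(x')/p_{V'}(x'))$, and then exploit the near-independence of this product from $g(V, X')$ under the marginal of process B (together with the independence of $V$ from $X'$ in that marginal) to avoid the Cauchy--Schwarz loss; carrying out this factorization carefully is the step I expect to require the most care.
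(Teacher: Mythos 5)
Your coupling-plus-group-privacy skeleton is the right idea, and your Cauchy--Schwarz fallback is a correct, self-contained proof of the \emph{squared} version $\pr{}{Z>\alpha^2/32}^2 \leq \exp(O(\alpha\eps n))\cdot\pr{}{Z'>\alpha^2/32}$, which you correctly observe still closes Lemma~\ref{lem:gaussian-cov-dp} (the $\Omega(1)$ lower bound on $\pr{}{Z>\alpha^2/32}$ indeed follows from $\ex{}{Z}\geq 7\alpha^2/128$ together with the deterministic bound $|Z|\leq\alpha^2/4$, and taking a square root only changes constants). That much is sound.

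The route you sketch for recovering the exact multiplicative form has a genuine gap, and it is instructive to compare to how the paper avoids it. Under the \emph{maximal} coupling of $\normal(0,\Sigma(V))$ and $\normal(0,\Sigma(V'))$, the indicator of $\{X_i\neq X'_i\}$ has conditional mean $q_{V,V'}(X'_i)$ which depends on $X'_i$. Both the factor $\prod_i(1+q_{V,V'}(X'_i)(e^\eps-1))$ and the factor $g(V,X')$ are functions of $X'$, and there is no reason to expect them to be nearly uncorrelated under the marginal of process~B; ``near-independence'' is asserted, not established, and $\ex{}{e^{K\eps}g(V,X')}$ does not factor. The paper sidesteps this entirely by not using the maximal coupling between $D=\normal(0,\Sigma(V))$ and $D'=\normal(0,\Sigma(V'))$: it writes $D=\gamma B+(1-\gamma)A$ and $D'=\gamma C+(1-\gamma)A$ for a \emph{common} part $A$ (with $\gamma=\dtv(D,D')$), and couples so that $X'\sim A^n$ is drawn first, independently of the Bernoulli indicators $W$ that decide which coordinates of $X$ are resampled from $B$. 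In that coupling the Hamming distance $Y(W)$ is genuinely independent of $X'$, so $\sum_w \pr{}{W=w}e^{\eps Y(w)}\pr{}{M(X')\in S\mid w}$ factors exactly into $\ex{}{e^{\eps Y(W)}}\cdot\pr{}{M(X')\in S}$ without any loss; a symmetric application takes you from $A^n$ to $D'^n$ at the cost of doubling the constant in the exponent. This is the specific device your proposal is missing: replace the maximal $(D,D')$-coupling by a coupling through the common overlap measure, and the decoupling you were struggling to justify becomes an identity.
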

\begin{proof}
We start by proving the following lemma:
\begin{lemma}
Let $M : \mathbb{R}^{n \times d} \rightarrow \mathcal{A}$ be an $\eps$-DP mechanism.
Suppose that $D, D'$ are probability distributions over $\mathbb{R}^d$ such that $\dtv(D, D') = \alpha$.
Then for $S \subseteq \mathcal{A}$,
$$\pr{M, X \sim D^n}{M(X) \in S} \leq \exp(O(\eps \alpha n))\pr{M, X' \sim D'^n}{M(X') \in S}.$$
\end{lemma}
\begin{proof}
By the definition of $D$ and $D'$, this implies that there exists distributions $A, B, C$ such that
\begin{align*}
D &= \alpha B + (1- \alpha)A \\
D' &= \alpha C + (1 - \alpha)A
\end{align*}
We will actually prove the following for any subset $S \subseteq \mathcal{A}$:
\[\pr{M, X \sim D^n}{M(X) \in S} \leq \exp(O(\eps \alpha n))\pr{M, X' \sim A^n}{M(X') \in S}.\]
A symmetric argument, with $D'$ in place of $D$, and using the other direction of the definition of differential privacy will give the lemma statement (with an extra factor of $2$ in the exponent).

We will draw $X,X'$ from a coupling of $(D,A)$.
In particular, let $W \in \{0,1\}^n$ be a random vector, where each entry is independently set to be $1$ with probability $\alpha$ and $0$ otherwise.
Note that $Y(w) \triangleq \sum_i w_i$ is distributed as $\mathrm{Bin}(n,\alpha)$.
Then there exists a coupling $C$ of $(D,A)$ such that $X' \sim A^n$, and $X_i$ is equal to $X'_i$ when $W_i = 1$, and is an independent draw from $B$ otherwise.
\begin{align*}
\pr{\substack{X \sim D^n \\ M}}{M(X) \in S} &= \sum_{w} \pr{}{W = w} \pr{\substack{(X,X') \sim C \\ M} }{M(X) \in S \mid w}  \\
&\leq \sum_{w} \pr{}{W = w} \exp\left(\eps Y(w)\right) \pr{\substack{(X,X') \sim C \\ M} }{M(X') \in S \mid w}  \\
&= \pr{\substack{X' \sim A^n\\ M} }{M(X') \in S} \sum_{w} \pr{}{W = w} \exp\left(\eps Y(w)\right)  \\
&= \pr{\substack{X' \sim A^n \\ M} }{M(X') \in S} \ex{}{\exp\left(\eps Y(w)\right)} \\
&= (1 - \alpha + \alpha e^\eps)^n \pr{\substack{X' \sim A^n \\ M} }{M(X') \in S} \\
&\leq \exp\left(\alpha (e^\eps - 1)n\right) \pr{\substack{X' \sim A^n\\ M} }{M(X') \in S} \\
&\leq \exp\left(O\left(\alpha \eps n \right)\right) \pr{\substack{X' \sim A^n \\ M} }{M(X') \in S},
\end{align*}
as desired.
The first inequality uses the definition of differential privacy.
We also used the moment generating function of the binomial distribution, and the fact that $e^\eps - 1 = O(\eps)$ for $\eps < 1$.
\end{proof}
With this in hand, the proof is as follows:
\begin{align*}
&\pr{\substack{V,V' \sim P \\  X \sim \mathcal{N}(0, \Sigma(V)) \\ M}}{Z > \alpha^2/32} \\
=~&\sum_{v, v' } \pr{V, V' \sim P}{V = v, V' = v'}\pr{\substack{X \sim \mathcal{N}(0, \Sigma(V)) \\ M}}{Z > \alpha^2/32 \mid V, V'} \\
\leq~&\exp(O( \alpha \eps n)) \sum_{v, v'} \pr{V, V' \sim P}{V = v, V' = v'}\pr{\substack{X \sim \mathcal{N}(0, \Sigma(V')) \\ M}}{Z' > \alpha^2/32 \mid V, V'} \\
=~&\exp(O( \alpha \eps n)) \pr{\substack{V,V' \sim P \\  X \sim \mathcal{N}(0, \Sigma(V')) \\ M}}{Z' > \alpha^2/32} 
\end{align*}
The inequality follows using the lemma above, and noting that for any $v, v' \in \mathrm{supp}(P)$, that $\dtv(\mathcal{N}(0, \Sigma(V)), \mathcal{N}(0, \Sigma(V'))) \leq O(\alpha)$.
\end{proof}

With this in hand, we make the following observations.
Claim~\ref{clm:z-claim} implies that $\Omega(1) \leq \pr{}{Z > \alpha^2/32}$.
Claim~\ref{clm:z'-claim} states that $\pr{}{Z' > \alpha^2/32} \leq \exp(-\Omega(d^2))$.
Using these together with Claim~\ref{clm:both-z-claim} gives us that $\Omega(1) \leq \exp(O(\alpha \eps n) - \Omega(d^2))$, which implies that we require $n \geq \Omega(d^2/\alpha\eps)$ to avoid a contradiction.

\addcontentsline{toc}{section}{References}
\bibliographystyle{alpha}
\bibliography{biblio}

\newcommand{\etalchar}[1]{$^{#1}$}
\begin{thebibliography}{KMUW18}

\bibitem[ADR18]{AliakbarpourDR18}
Maryam Aliakbarpour, Ilias Diakonikolas, and Ronitt Rubinfeld.
\newblock Differentially private identity and closeness testing of discrete
  distributions.
\newblock In {\em Proceedings of the 35th International Conference on Machine
  Learning}, ICML '18, pages 169--178. JMLR, Inc., 2018.

\bibitem[AKSZ18]{AcharyaKSZ18}
Jayadev Acharya, Gautam Kamath, Ziteng Sun, and Huanyu Zhang.
\newblock Inspectre: Privately estimating the unseen.
\newblock In {\em Proceedings of the 35th International Conference on Machine
  Learning}, ICML '18, pages 30--39. JMLR, Inc., 2018.

\bibitem[ASZ18]{AcharyaSZ18}
Jayadev Acharya, Ziteng Sun, and Huanyu Zhang.
\newblock Differentially private testing of identity and closeness of discrete
  distributions.
\newblock In {\em Advances in Neural Information Processing Systems 31},
  NeurIPS '18, pages 6878--6891. Curran Associates, Inc., 2018.

\bibitem[BBKN14]{BeimelBKN14}
Amos Beimel, Hai Brenner, Shiva~Prasad Kasiviswanathan, and Kobbi Nissim.
\newblock Bounds on the sample complexity for private learning and private data
  release.
\newblock {\em Machine Learning}, 94(3):401--437, 2014.

\bibitem[Ber41]{Berry41}
Andrew~C. Berry.
\newblock The accuracy of the {G}aussian approximation to the sum of
  independent variates.
\newblock {\em Transactions of the American Mathematical Society},
  49(1):122--136, 1941.

\bibitem[BNSV15]{BunNSV15}
Mark Bun, Kobbi Nissim, Uri Stemmer, and Salil Vadhan.
\newblock Differentially private release and learning of threshold functions.
\newblock In {\em Proceedings of the 56th Annual IEEE Symposium on Foundations
  of Computer Science}, FOCS '15, pages 634--649, Washington, DC, USA, 2015.
  IEEE Computer Society.

\bibitem[BS16]{BunS16}
Mark Bun and Thomas Steinke.
\newblock Concentrated differential privacy: Simplifications, extensions, and
  lower bounds.
\newblock In {\em Proceedings of the 14th Conference on Theory of
  Cryptography}, TCC '16-B, pages 635--658, Berlin, Heidelberg, 2016. Springer.

\bibitem[BST14]{BassilyST14}
Raef Bassily, Adam Smith, and Abhradeep Thakurta.
\newblock Private empirical risk minimization: Efficient algorithms and tight
  error bounds.
\newblock In {\em Proceedings of the 55th Annual IEEE Symposium on Foundations
  of Computer Science}, FOCS '14, pages 464--473, Washington, DC, USA, 2014.
  IEEE Computer Society.

\bibitem[BSU17]{BunSU17}
Mark Bun, Thomas Steinke, and Jonathan Ullman.
\newblock Make up your mind: The price of online queries in differential
  privacy.
\newblock In {\em Proceedings of the 28th Annual ACM-SIAM Symposium on Discrete
  Algorithms}, SODA '17, pages 1306--1325, Philadelphia, PA, USA, 2017. SIAM.

\bibitem[BU17]{BafnaU17}
Mitali Bafna and Jonathan Ullman.
\newblock The price of selection in differential privacy.
\newblock In {\em Proceedings of the 30th Annual Conference on Learning
  Theory}, COLT '17, pages 151--168, 2017.

\bibitem[BUV14]{BunUV14}
Mark Bun, Jonathan Ullman, and Salil Vadhan.
\newblock Fingerprinting codes and the price of approximate differential
  privacy.
\newblock In {\em Proceedings of the 46th Annual ACM Symposium on the Theory of
  Computing}, STOC '14, pages 1--10, New York, NY, USA, 2014. ACM.

\bibitem[CDK17]{CaiDK17}
Bryan Cai, Constantinos Daskalakis, and Gautam Kamath.
\newblock Priv'it: Private and sample efficient identity testing.
\newblock In {\em Proceedings of the 34th International Conference on Machine
  Learning}, ICML '17, pages 635--644. JMLR, Inc., 2017.

\bibitem[CSV17]{CharikarSV17}
Moses Charikar, Jacob Steinhardt, and Gregory Valiant.
\newblock Learning from untrusted data.
\newblock In {\em Proceedings of the 49th Annual ACM Symposium on the Theory of
  Computing}, STOC '17, pages 47--60, New York, NY, USA, 2017. ACM.

\bibitem[CWZ19]{CaiWZ19}
T.~Tony Cai, Yichen Wang, and Linjun Zhang.
\newblock The cost of privacy: Optimal rates of convergence for parameter
  estimation with differential privacy.
\newblock {\em arXiv preprint arXiv:1902.04495}, 2019.

\bibitem[DHS15]{DiakonikolasHS15}
Ilias Diakonikolas, Moritz Hardt, and Ludwig Schmidt.
\newblock Differentially private learning of structured discrete distributions.
\newblock In {\em Advances in Neural Information Processing Systems 28}, NIPS
  '15, pages 2566--2574. Curran Associates, Inc., 2015.

\bibitem[{Dif}17]{AppleDP17}
{Differential Privacy Team, Apple}.
\newblock Learning with privacy at scale.
\newblock
  \texttt{https://machinelearning.apple.com/docs/learning-with-privacy-at-scale/appledifferentialprivacysystem.pdf},
  December 2017.

\bibitem[DKK{\etalchar{+}}16]{DiakonikolasKKLMS16}
Ilias Diakonikolas, Gautam Kamath, Daniel~M. Kane, Jerry Li, Ankur Moitra, and
  Alistair Stewart.
\newblock Robust estimators in high dimensions without the computational
  intractability.
\newblock In {\em Proceedings of the 57th Annual IEEE Symposium on Foundations
  of Computer Science}, FOCS '16, pages 655--664, Washington, DC, USA, 2016.
  IEEE Computer Society.

\bibitem[DKK{\etalchar{+}}17]{DiakonikolasKKLMS17}
Ilias Diakonikolas, Gautam Kamath, Daniel~M. Kane, Jerry Li, Ankur Moitra, and
  Alistair Stewart.
\newblock Being robust (in high dimensions) can be practical.
\newblock In {\em Proceedings of the 34th International Conference on Machine
  Learning}, ICML '17, pages 999--1008. JMLR, Inc., 2017.

\bibitem[DKK{\etalchar{+}}18]{DiakonikolasKKLMS18}
Ilias Diakonikolas, Gautam Kamath, Daniel~M. Kane, Jerry Li, Ankur Moitra, and
  Alistair Stewart.
\newblock Robustly learning a {G}aussian: Getting optimal error, efficiently.
\newblock In {\em Proceedings of the 29th Annual ACM-SIAM Symposium on Discrete
  Algorithms}, SODA '18, Philadelphia, PA, USA, 2018. SIAM.

\bibitem[DL09]{DworkL09}
Cynthia Dwork and Jing Lei.
\newblock Differential privacy and robust statistics.
\newblock In {\em Proceedings of the 41st Annual ACM Symposium on the Theory of
  Computing}, STOC '09, pages 371--380, New York, NY, USA, 2009. ACM.

\bibitem[DLS{\etalchar{+}}17]{DajaniLSKRMGDGKKLSSVA17}
Aref~N. Dajani, Amy~D. Lauger, Phyllis~E. Singer, Daniel Kifer, Jerome~P.
  Reiter, Ashwin Machanavajjhala, Simson~L. Garfinkel, Scot~A. Dahl, Matthew
  Graham, Vishesh Karwa, Hang Kim, Philip Lelerc, Ian~M. Schmutte, William~N.
  Sexton, Lars Vilhuber, and John~M. Abowd.
\newblock The modernization of statistical disclosure limitation at the {U.S.}
  census bureau, 2017.
\newblock Presented at the September 2017 meeting of the Census Scientific
  Advisory Committee.

\bibitem[DMNS06]{DworkMNS06}
Cynthia Dwork, Frank McSherry, Kobbi Nissim, and Adam Smith.
\newblock Calibrating noise to sensitivity in private data analysis.
\newblock In {\em Proceedings of the 3rd Conference on Theory of Cryptography},
  TCC '06, pages 265--284, Berlin, Heidelberg, 2006. Springer.

\bibitem[DMR18]{DevroyeMR18a}
Luc Devroye, Abbas Mehrabian, and Tommy Reddad.
\newblock The minimax learning rate of normal and {I}sing undirected graphical
  models.
\newblock {\em arXiv preprint arXiv:1806.06887}, 2018.

\bibitem[DN03]{DinurN03}
Irit Dinur and Kobbi Nissim.
\newblock Revealing information while preserving privacy.
\newblock In {\em Proceedings of the 22nd ACM SIGMOD-SIGACT-SIGART Symposium on
  Principles of Database Systems}, PODS '03, pages 202--210, New York, NY, USA,
  2003. ACM.

\bibitem[DNR{\etalchar{+}}09]{DworkNRRV09}
Cynthia Dwork, Moni Naor, Omer Reingold, Guy~N. Rothblum, and Salil Vadhan.
\newblock On the complexity of differentially private data release: Efficient
  algorithms and hardness results.
\newblock In {\em Proceedings of the 41st Annual ACM Symposium on the Theory of
  Computing}, STOC '09, pages 381--390, New York, NY, USA, 2009. ACM.

\bibitem[DR16]{DworkR16}
Cynthia Dwork and Guy~N. Rothblum.
\newblock Concentrated differential privacy.
\newblock {\em arXiv preprint arXiv:1603.01887}, 2016.

\bibitem[DRV10]{DworkRV10}
Cynthia Dwork, Guy~N. Rothblum, and Salil Vadhan.
\newblock Boosting and differential privacy.
\newblock In {\em Proceedings of the 51st Annual IEEE Symposium on Foundations
  of Computer Science}, FOCS '10, pages 51--60, Washington, DC, USA, 2010. IEEE
  Computer Society.

\bibitem[DSS{\etalchar{+}}15]{DworkSSUV15}
Cynthia Dwork, Adam Smith, Thomas Steinke, Jonathan Ullman, and Salil Vadhan.
\newblock Robust traceability from trace amounts.
\newblock In {\em Proceedings of the 56th Annual IEEE Symposium on Foundations
  of Computer Science}, FOCS '15, pages 650--669, Washington, DC, USA, 2015.
  IEEE Computer Society.

\bibitem[DSSU17]{DworkSSU17}
Cynthia Dwork, Adam Smith, Thomas Steinke, and Jonathan Ullman.
\newblock Exposed! a survey of attacks on private data.
\newblock {\em Annual Review of Statistics and Its Application}, 4(1):61--84,
  2017.

\bibitem[DTTZ14]{DworkTTZ14}
Cynthia Dwork, Kunal Talwar, Abhradeep Thakurta, and Li~Zhang.
\newblock Analyze {G}auss: Optimal bounds for privacy-preserving principal
  component analysis.
\newblock In {\em Proceedings of the 46th Annual ACM Symposium on the Theory of
  Computing}, STOC '14, pages 11--20, New York, NY, USA, 2014. ACM.

\bibitem[EPK14]{ErlingssonPK14}
{\'U}lfar Erlingsson, Vasyl Pihur, and Aleksandra Korolova.
\newblock {RAPPOR}: Randomized aggregatable privacy-preserving ordinal
  response.
\newblock In {\em Proceedings of the 2014 ACM Conference on Computer and
  Communications Security}, CCS '14, pages 1054--1067, New York, NY, USA, 2014.
  ACM.

\bibitem[Ess42]{Esseen42}
Carl-Gustaf Esseen.
\newblock On the {L}iapounoff limit of error in the theory of probability.
\newblock {\em Arkiv f\"or matematik, astronomi och fysik}, 28A(2):1--19, 1942.

\bibitem[GHK{\etalchar{+}}16]{GaboardiHKMNUV16}
Marco Gaboardi, James Honaker, Gary King, Jack Murtagh, Kobbi Nissim, Jonathan
  Ullman, and Salil Vadhan.
\newblock Psi ($\psi$): A private data sharing interface.
\newblock {\em arXiv preprint arXiv:1609.04340}, 2016.

\bibitem[GLRV16]{GaboardiLRV16}
Marco Gaboardi, Hyun{-}Woo Lim, Ryan~M. Rogers, and Salil~P. Vadhan.
\newblock Differentially private chi-squared hypothesis testing: Goodness of
  fit and independence testing.
\newblock In {\em Proceedings of the 33rd International Conference on Machine
  Learning}, ICML '16, pages 1395--1403. JMLR, Inc., 2016.

\bibitem[GR18]{GaboardiR18}
Marco Gaboardi and Ryan Rogers.
\newblock Local private hypothesis testing: Chi-square tests.
\newblock In {\em Proceedings of the 35th International Conference on Machine
  Learning}, ICML '18, pages 1626--1635. JMLR, Inc., 2018.

\bibitem[HSR{\etalchar{+}}08]{HomerSRDTMPSNC08}
Nils Homer, Szabolcs Szelinger, Margot Redman, David Duggan, Waibhav Tembe,
  Jill Muehling, John~V. Pearson, Dietrich~A. Stephan, Stanley~F. Nelson, and
  David~W. Craig.
\newblock Resolving individuals contributing trace amounts of {DNA} to highly
  complex mixtures using high-density {SNP} genotyping microarrays.
\newblock {\em PLoS Genetics}, 4(8):1--9, 2008.

\bibitem[HT10]{HardtT10}
Moritz Hardt and Kunal Talwar.
\newblock On the geometry of differential privacy.
\newblock In {\em Proceedings of the 42nd Annual ACM Symposium on the Theory of
  Computing}, STOC '10, pages 705--714, New York, NY, USA, 2010. ACM.

\bibitem[HU14]{HardtU14}
Moritz Hardt and Jonathan Ullman.
\newblock Preventing false discovery in interactive data analysis is hard.
\newblock In {\em Proceedings of the 55th Annual IEEE Symposium on Foundations
  of Computer Science}, FOCS '14, pages 454--463, Washington, DC, USA, 2014.
  IEEE Computer Society.

\bibitem[Kam19]{Kamath19}
Gautam Kamath.
\newblock A lower bound for private covariance estimation.
\newblock Note, available upon request, 2019.

\bibitem[KMUW18]{KowalczykMUW18}
Lucas Kowalczyk, Tal Malkin, Jonathan Ullman, and Daniel Wichs.
\newblock Hardness of non-interactive differential privacy from one-way
  functions.
\newblock In {\em Proceedings of the 38th Annual International Cryptology
  Conference}, CRYPTO '18, Berlin, Heidelberg, 2018. Springer.

\bibitem[KMUZ16]{KowalczykMUZ16}
Lucas Kowalczyk, Tal Malkin, Jonathan Ullman, and Mark Zhandry.
\newblock Strong hardness of privacy from weak traitor tracing.
\newblock In {\em Proceedings of the 14th Conference on Theory of
  Cryptography}, TCC '16-B, pages 659--689, Berlin, Heidelberg, 2016. Springer.

\bibitem[KR17]{KiferR17}
Daniel Kifer and Ryan~M. Rogers.
\newblock A new class of private chi-square tests.
\newblock In {\em Proceedings of the 20th International Conference on
  Artificial Intelligence and Statistics}, AISTATS '17, pages 991--1000. JMLR,
  Inc., 2017.

\bibitem[KSF17]{KakizakiSF17}
Kazuya Kakizaki, Jun Sakuma, and Kazuto Fukuchi.
\newblock Differentially private chi-squared test by unit circle mechanism.
\newblock In {\em Proceedings of the 34th International Conference on Machine
  Learning}, ICML '17, pages 1761--1770. JMLR, Inc., 2017.

\bibitem[KV18]{KarwaV18}
Vishesh Karwa and Salil Vadhan.
\newblock Finite sample differentially private confidence intervals.
\newblock In {\em Proceedings of the 9th Conference on Innovations in
  Theoretical Computer Science}, ITCS '18, pages 44:1--44:9, Dagstuhl, Germany,
  2018. Schloss Dagstuhl--Leibniz-Zentrum fuer Informatik.

\bibitem[LM00]{LaurentM00}
Beatrice Laurent and Pascal Massart.
\newblock Adaptive estimation of a quadratic functional by model selection.
\newblock {\em The Annals of Statistics}, 28(5):1302--1338, 2000.

\bibitem[LRV16]{LaiRV16}
Kevin~A. Lai, Anup~B. Rao, and Santosh Vempala.
\newblock Agnostic estimation of mean and covariance.
\newblock In {\em Proceedings of the 57th Annual IEEE Symposium on Foundations
  of Computer Science}, FOCS '16, pages 665--674, Washington, DC, USA, 2016.
  IEEE Computer Society.

\bibitem[MTS{\etalchar{+}}12]{MohanTSSC12}
Prashanth Mohan, Abhradeep Thakurta, Elaine Shi, Dawn Song, and David Culler.
\newblock {GUPT}: Privacy preserving data analysis made easy.
\newblock In {\em Proceedings of the 2012 ACM SIGMOD International Conference
  on Management of Data}, SIGMOD '12, pages 349--360, New York, NY, USA, 2012.
  ACM.

\bibitem[NRS07]{NissimRS07}
Kobbi Nissim, Sofya Raskhodnikova, and Adam Smith.
\newblock Smooth sensitivity and sampling in private data analysis.
\newblock In {\em Proceedings of the 39th Annual ACM Symposium on the Theory of
  Computing}, STOC '07, pages 75--84, New York, NY, USA, 2007. ACM.

\bibitem[SCV18]{SteinhardtCV18}
Jacob Steinhardt, Moses Charikar, and Gregory Valiant.
\newblock Resilience: A criterion for learning in the presence of arbitrary
  outliers.
\newblock In {\em Proceedings of the 9th Conference on Innovations in
  Theoretical Computer Science}, ITCS '18, pages 45:1--45:21, Dagstuhl,
  Germany, 2018. Schloss Dagstuhl--Leibniz-Zentrum fuer Informatik.

\bibitem[She10]{Shevtsova10}
I.G. Shevtsova.
\newblock An improvement of convergence rate estimates in the {L}yapunov
  theorem.
\newblock {\em Doklady Mathematics}, 82(3):862--864, 2010.

\bibitem[She17]{Sheffet17}
Or~Sheffet.
\newblock Differentially private ordinary least squares.
\newblock In {\em Proceedings of the 34th International Conference on Machine
  Learning}, ICML '17, pages 3105--3114. JMLR, Inc., 2017.

\bibitem[She18]{Sheffet18}
Or~Sheffet.
\newblock Locally private hypothesis testing.
\newblock In {\em Proceedings of the 35th International Conference on Machine
  Learning}, ICML '18, pages 4605--4614. JMLR, Inc., 2018.

\bibitem[Smi11]{Smith11}
Adam Smith.
\newblock Privacy-preserving statistical estimation with optimal convergence
  rates.
\newblock In {\em Proceedings of the 43rd Annual ACM Symposium on the Theory of
  Computing}, STOC '11, pages 813--822, New York, NY, USA, 2011. ACM.

\bibitem[SSSS17]{ShokriSSS17}
Reza Shokri, Marco Stronati, Congzheng Song, and Vitaly Shmatikov.
\newblock Membership inference attacks against machine learning models.
\newblock In {\em Proceedings of the 38th IEEE Symposium on Security and
  Privacy}, SP '17, pages 3--18, Washington, DC, USA, 2017. IEEE Computer
  Society.

\bibitem[SU15]{SteinkeU15}
Thomas Steinke and Jonathan Ullman.
\newblock Interactive fingerprinting codes and the hardness of preventing false
  discovery.
\newblock In {\em Proceedings of the 28th Annual Conference on Learning
  Theory}, COLT '15, pages 1588--1628, 2015.

\bibitem[SU17a]{SteinkeU17a}
Thomas Steinke and Jonathan Ullman.
\newblock Between pure and approximate differential privacy.
\newblock {\em The Journal of Privacy and Confidentiality}, 7(2):3--22, 2017.

\bibitem[SU17b]{SteinkeU17b}
Thomas Steinke and Jonathan Ullman.
\newblock Tight lower bounds for differentially private selection.
\newblock In {\em Proceedings of the 58th Annual IEEE Symposium on Foundations
  of Computer Science}, FOCS '17, pages 552--563, Washington, DC, USA, 2017.
  IEEE Computer Society.

\bibitem[Tao12]{Tao12}
Terence Tao.
\newblock {\em Topics in random matrix theory}, volume 132 of {\em Graduate
  Studies in Mathematics}.
\newblock American Mathematical Society, Providence, RI, 2012.

\bibitem[Ull16]{Ullman16}
Jonathan Ullman.
\newblock Answering $n^{2+o(1)}$ counting queries with differential privacy is
  hard.
\newblock {\em SIAM Journal on Computing}, 45(2):473--496, 2016.

\bibitem[UV11]{UllmanV11}
Jonathan Ullman and Salil Vadhan.
\newblock {PCP}s and the hardness of generating private synthetic data.
\newblock In {\em Proceedings of the 8th Conference on Theory of Cryptography},
  TCC '11, pages 400--416, Berlin, Heidelberg, 2011. Springer.

\bibitem[WLK15]{WangLK15}
Yue Wang, Jaewoo Lee, and Daniel Kifer.
\newblock Revisiting differentially private hypothesis tests for categorical
  data.
\newblock {\em arXiv preprint arXiv:1511.03376}, 2015.

\end{thebibliography}

\appendix

\ifnum\epsdeltastuff=1
\input{missingproofs}
\fi

\end{document}